\documentclass[11pt]{article}
\usepackage{verbatim} %
\usepackage[usenames, dvipsnames]{color}

\usepackage{bbm}
\usepackage{mathtools}
\usepackage{amsmath}
\usepackage{amsthm, thmtools}
\usepackage{amstext,amssymb, amsfonts}
\usepackage{thm-restate}

\usepackage{dsfont} %
 
\usepackage{empheq} %

\usepackage{lmodern} %

\usepackage{float} %

\usepackage{array,multirow} %

\usepackage{algorithmic}
\usepackage[section]{algorithm} %
\usepackage{todonotes}

\newcounter{mynotes}
\setcounter{mynotes}{0}

\usepackage{hyperref} %
\hypersetup{
    colorlinks=true,
    linkcolor=blue,
    filecolor=magenta,      
    urlcolor=cyan,
    citecolor=green!50!black,
}
 
\urlstyle{same}

\declaretheorem[within=section]{theorem}
\declaretheorem[sibling=theorem]{corollary}

\declaretheorem[sibling=theorem]{lemma}
\declaretheorem[sibling=theorem]{claim}

\declaretheorem[sibling=theorem]{definition}

\declaretheorem[sibling=theorem]{proposition}
\declaretheorem[sibling=theorem]{remark}
\declaretheorem[sibling=theorem]{question}
\declaretheorem[sibling=theorem]{conjecture}

\newcommand{\Q}{\mathbb{Q}} %
\newcommand{\Z}{\mathbb{Z}} %
\newcommand{\N}{\mathbb{N}} %

\newcommand{\F}{\mathbb{F}}
\newcommand{\K}{\mathbb{K}}
\newcommand{\Lb}{\mathbb{L}}
\newcommand{\PF}{\mathbb{P}\mathbb{F}} %

\newcommand{\cP}{\mathcal P}

\newcommand{\cS}{\mathcal S}

\newcommand{\supp}{\mathrm{supp}} %
\newcommand{\wt}{\mathrm{wt}} %

\newcommand{\inpro}[2]{\left\langle #1,#2 \right\rangle} %
\newcommand{\rk}{\mathrm{rank}} %

\newcommand{\poly}{\mathrm{poly}}

\renewcommand{\epsilon}{\varepsilon}
\newcommand{\eps}{\epsilon}

  \newcommand{\beq}{\begin{equation}}
  \newcommand{\eeq}{\end{equation}}
  \newcommand{\beqn}{\begin{equation*}}
  \newcommand{\eeqn}{\end{equation*}}
  \newcommand{\beqr}{\begin{eqnarray}}
  \newcommand{\eeqr}{\end{eqnarray}}
  \newcommand{\beqrn}{\begin{eqnarray*}}
  \newcommand{\eeqrn}{\end{eqnarray*}}
  \newcommand{\bmline}{\begin{multline}}
  \newcommand{\emline}{\end{multline}}
  \newcommand{\bmlinen}{\begin{multline*}}
  \newcommand{\emlinen}{\end{multline*}}

 \usepackage[margin=1in]{geometry}
\parskip=0.5ex
\usepackage{url}

\renewcommand{\le}{\leqslant}
\renewcommand{\leq}{\leqslant}
\renewcommand{\ge}{\geqslant}
\renewcommand{\geq}{\geqslant}

\newcommand{\lp}{\left(}
\newcommand{\rp}{\right)}

\newcommand{\MDS}{\operatorname{MDS}}
\newcommand{\GZP}{\operatorname{GZP}}
\newcommand{\LDMDS}{\operatorname{LD-MDS}}
\newcommand{\Span}{\operatorname{span}}
\newcommand{\rank}{\operatorname{rank}}

\newcommand{\llangle}{\,\,(\!\!\!\!<\!}
\newcommand{\rrangle}{\!>\!\!\!\!)\,\,}
\newcommand{\crk}{{\rm crk}}
\newcommand{\ncrk}{{\rm ncrk}}

\newcommand{\change}[1]{#1}%{{\color{blue}#1}}

\title{Generic Reed-Solomon Codes Achieve List-decoding Capacity}

\author{Joshua Brakensiek\thanks{University of California, Berkeley. Email: {\tt josh.brakensiek@berkeley.edu}. Portions of this work were completed
    at Stanford University and Microsoft Research, Redmond. Research
    supported in part by an NSF Graduate Research Fellowship and a
    Microsoft Research PhD Fellowship.} \and Sivakanth
  Gopi\thanks{Microsoft Research, Redmond, WA. Email: {\tt
      sigopi@microsoft.com}.} \and Visu Makam\thanks{Radix Trading
    Europe B. V. Email: {\tt visu@umich.edu}. Research supported by
    NSF Grant No. DMS-1638352, CCF-1412958, and CCF-1900460.}}

\date{}

\begin{document}

\maketitle

\begin{abstract}

In a recent paper, Brakensiek, Gopi and Makam~\cite{bgm2021mds} introduced \emph{higher order MDS codes} as a generalization of MDS codes. An order-$\ell$ MDS code, denoted by $\MDS(\ell)$, has the property that any $\ell$ subspaces formed from columns of its generator matrix intersect as minimally as possible. An independent work by Roth~\cite{roth2021higher} defined a different notion of higher order MDS codes as those achieving a generalized singleton bound for list-decoding. In this work, we show that these two notions of higher order MDS codes are (nearly) equivalent.

We also show that generic Reed-Solomon codes are $\MDS(\ell)$ for all $\ell$, relying crucially on the GM-MDS theorem which shows that generator matrices of generic Reed-Solomon codes achieve any possible zero pattern. As a corollary, this implies that generic Reed-Solomon codes achieve list decoding capacity. More concretely, we show that, with high probability, a random Reed-Solomon code of rate $R$ over an exponentially large field is list decodable from radius $1-R-\eps$ with list size at most $\frac{1-R-\eps}{\eps}$, resolving a conjecture of Shangguan and Tamo~\cite{shangguan2020combinatorial}.

\end{abstract}
\vspace{-3ex}
\thispagestyle{empty}
\newpage
\tableofcontents
\thispagestyle{empty}
\newpage
\section{Introduction}
The singleton bound states that a $\change{[n,k]}$-code can have distance at most $n-k+1$. Codes achieving this bound are called \emph{MDS codes}. Reed-Solomon codes \cite{reed1960polynomial} are an explicit construction of such codes over fields of size $O(n).$ In particular, they allow us to decode uniquely from up to half the minimum distance. List decoding was introduced independently by \cite{wozencraft1958list,elias1957list} to decode from beyond half the minimum distance. Naturally, we are not guaranteed to decode uniquely. But we can hopefully return a small list of codewords which are close to a corrupted codeword. We now define this formally.

\begin{definition} We say that a $\change{[n,k]}$-code $C$ is $(\rho,L)$-list decodable if there are at most $L$ codewords in any Hamming ball of radius $\rho n$.
\end{definition} We call $\rho$ the list-decoding radius and $L$ the list size.  A code with rate $R$ cannot be list decoded beyond radius $1-R$ with polynomial list size (see \cite{guruswami2012essential}). Therefore we must have $\rho\le 1-R$, this is called \emph{list-decoding capacity}. A code with rate $R$ which is $(1-R-\eps,L)$-list decodable for $L=L(\eps)$ is said to achieve list-decoding capacity. Here $\eps$ is called the \emph{gap to capacity}. It is known that random \change{\emph{non-linear}} codes can achieve list decoding capacity with list size $O(1/\eps)$ and alphabet size $\exp(1/\eps)$ (see~\cite{guruswami2012essential}). It is also known that random \emph{linear} codes over large enough alphabet achieve list-decoding capacity \cite{ZyablovP81}. There is also a stronger form of list decoding called \emph{average-radius list-decoding}.

\begin{definition} We say that a $\change{[n,k]}$-code $C$ is $(\rho,L)$-average-radius list-decodable\footnote{In some previous works, such as \cite{roth2021higher} this is referred to as \emph{strongly list-decodable}.} if there \change{do not} exist $L+1$ distinct codewords $c_0,c_1,\dots,c_L\in C$ and $y\in \F^n$, such that\footnote{Here $\wt(x)$ is the Hamming weight of $x$, i.e., the number of non-zero coordinates of $x.$} $$\frac{1}{L+1}\sum_{i=0}^L \wt(c_i-y) \le \rho n.$$
\end{definition} Note that if a code is $(\rho,L)$-average-radius list-decodable then it is also $(\rho,L)$-list decodable. The capacity for average case list-decoding is also $1-R$ and random linear codes achieve it \cite{ZyablovP81}.

A long line of work exists on constructing explicit codes which achieve list-decoding capacity. Following an initial breakthrough by~\cite{parvaresh2005correcting}, Folded Reed-Solomon codes of \cite{guruswami2008explicit} were the first explicit codes to achieve list-decoding capacity, but with list size and alphabet size polynomial in code length. Further works have reduced the list size to $\exp(\tilde{O}(1/\eps))$ where $\eps$ is the gap to capacity \cite{dvir2012subspace,guruswami2012folded,guruswami2013list,kopparty2018improved}. Some classes of structured random codes can also be shown to achieve list decoding capacity.  \cite{mosheiff2020ldpc} show that LDPC codes achieve list decoding capacity. \cite{guruswami2021punctured} show that puncturings of low-bias codes have good list-decodability.

\subsection{List-decoding Reed-Solomon Codes} Reed-Solomon codes are one of the most popular codes with several applications both in theory and practice \cite{wicker1999reed}.  We say that $C$ is a Reed-Solomon code if it has a generator matrix which is a \emph{Vandermonde} matrix. That is, there \change{exist} distinct $\alpha_1, \hdots, \alpha_n \in \F$ such that
\begin{equation}
\label{eq:Vandermonde_intro}
  \begin{pmatrix} 1 & 1 & \cdots & 1\\ \alpha_1 & \alpha_2 & \cdots & \alpha_n\\ \alpha_1^2 & \alpha_2^2 & \cdots & \alpha_n^2\\ \vdots & \vdots & \ddots & \vdots\\ \alpha_1^{k-1} & \alpha_2^{k-1} & \cdots & \alpha_n^{k-1}
  \end{pmatrix}
\end{equation} is a generator matrix of $C$. Reed-Solomon codes are MDS (maximum distance separable) codes, i.e., they achieve the maximum distance possible for a code with given rate. So naturally, there is a lot of interest in understanding the list-decodability of Reed-Solomon codes. %
\begin{question} Can Reed-Solomon codes achieve list-decoding capacity?
\end{question}
Guruswami and Sudan \cite{sudan1997decoding,Guruswami1998} showed that rate $R$ Reed-Solomon codes can be list-decoded from radius $1-\sqrt{R}$, which is also coincidentally the Johnson bound for list-decoding \cite{johnson1962new}. Whether Reed-Solomon codes can be list-decoded beyond the Johnson bound has been a topic of intense research. \cite{rudra2014every} are the first to show that random Reed-Solomon codes (i.e., when $\alpha_i$ are chosen randomly from a large enough field in (\ref{eq:Vandermonde_intro})) can be list-decoded beyond the Johnson bound in some parameter ranges. On the other hand, full length Reed-Solomon codes where $\alpha_1,\alpha_2,\dots,\alpha_n$ are all the field elements (with $n=|\F|$) are not list-decodable with constant list size with list-decoding radius $1-\alpha\sqrt{R}$ for sufficiently small constant $\alpha,R<1$, in fact the list size has to be at least $n^{2\log(1/\alpha)}$ \cite{ben2009subspace}. Therefore it is clear that the evaluation points $\alpha_i$ have to be chosen carefully from a large enough finite field to even beat the Johnson bound. In this paper, we are interested in understanding the list-decoding behavior of \change{\emph{generic} Reed-Solomon codes:}

\change{
\begin{definition}\label{def:grs}
Let $\F$ be an arbitrary finite field. For $n \in \N$, let $\mathbb K$ be the fractional field of the polynomial ring $\F[\alpha_1,\hdots, \alpha_n]$. A \emph{generic $[n,k]$-Reed-Solomon} code over $\F$ is the $k$-dimensional code $C_{gen}\subseteq\K^n$ generated by the generator matrix (\ref{eq:Vandermonde_intro}). Note that $\alpha_1,\dots,\alpha_n$ are symbolic variables with no relations between them.

Given an extension field $\Lb$ of $\F$ and a map $f : [n] \to \Lb$, the corresponding \emph{specialization} of $C_{gen}$ is the code $C \subseteq \Lb^n$ whose generator matrix is obtained by substituting $\alpha_i=f(i)$ in (\ref{eq:Vandermonde_intro}). A random Reed-Solomon code over $\Lb$ is obtained by choosing $f:[n]\to \Lb$ to be a uniformly random injective map.
\end{definition}
}

\change{If $\Lb$ is a sufficiently large field, the list-decoding behavior of a random specialization of a generic Reed-Solomon (also known as a \emph{randomly punctured} Reed-Solomon code) is the same as that of a generic Reed-Solomon code with high probability--see Proposition~\ref{prop:rRS_ldmds}. It also turns out that the list-decoding behavior of a generic Reed-Solomon code over $\Lb$ does not depend on $\F$ or its characteristic, which is not obvious \emph{a priori}--see Theorem~\ref{thm:main-dim}. Therefore, we are not concerned with the explicit choice of $\F$ in the rest of the paper, and often write $\F$ for $\Lb$.} 

In \cite{shangguan2020combinatorial}, Shangguan and Tamo made a startling conjecture that generic Reed-Solomon codes \change{do not} just beat the Johnson bound, but in fact achieve list-decoding capacity! They also conjectured a precise bound on the list size. In Section~\ref{sec:ld-mds}, we fully resolve their conjecture.

\begin{theorem}\label{thm:main-ld} Generic Reed-Solomon codes achieve list decoding capacity. If $C$ is a generic $\change{[n,k]}$-Reed-Solomon code with rate $R=k/n$, then $C$ is $(\rho,L)$-list decodable for
  \begin{equation}
    \label{eq:optimal_rho} \rho=1-R-\frac{1-R}{L+1}.
  \end{equation} Moreover, $C$ is also $(\rho,L)$-average-radius list-decodable for the same $\rho.$
\end{theorem}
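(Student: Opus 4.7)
The plan is to combine the two results flagged in the abstract: the (near) equivalence between the $\MDS(\ell)$ property of \cite{bgm2021mds} and Roth's notion of higher-order MDS codes (codes meeting the generalized Singleton bound for average-radius list-decoding with list size $\ell-1$), together with a proof that generic Reed-Solomon codes are $\MDS(\ell)$ for every $\ell \geq 2$. Specializing to $\ell = L+1$ yields both the list-decoding and the average-radius list-decoding conclusions at once, since Roth's notion is intrinsically an average-radius condition and average-radius list-decodability already implies ordinary list-decodability.

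For the genericity step, I would invoke the GM-MDS theorem. The $\MDS(\ell)$ condition is a statement about intersections of column spans of the generator matrix having minimum possible dimension; dually, it can be rephrased as the existence of codewords (in the code or its dual) with prescribed zero patterns on certain index sets. GM-MDS asserts that for a generic Reed-Solomon code, every zero pattern satisfying the natural Hall-type combinatorial condition is actually achieved. The main task in this step is therefore to reformulate the subspace-intersection form of $\MDS(\ell)$ as a zero-pattern statement, and then to verify that the Hall condition it imposes is always satisfied, so that GM-MDS can be applied.

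For the equivalence step, the direction needed is: if $C$ is $\MDS(L+1)$, then $C$ is $(\rho, L)$-average-radius list-decodable for $\rho = 1 - R - (1-R)/(L+1)$. Suppose for contradiction that there exist codewords $c_0, \hdots, c_L$ and $y \in \F^n$ with $\sum_i \wt(c_i - y) \leq (L+1) \rho n$. Letting $A_i = \{ j : c_i(j) = y_j \}$, this translates to $\sum_i |A_i| \geq (L+1)(1-\rho) n = (1 + LR) n$. The $L$ differences $d_i = c_i - c_0$ are codewords vanishing on $A_0 \cap A_i$, so the codewords and the constraints they satisfy sit inside a configuration of subspaces whose intersection structure is controlled by $\MDS(L+1)$. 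The dimension count that $\MDS(L+1)$ forces on these subspaces is incompatible with the lower bound $\sum_i |A_i| \geq (1+LR) n$ coming from the hypothesized violation, yielding the desired contradiction.

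The main obstacle I expect is the tight quantitative matching between the subspace language of $\MDS(\ell)$ and the averaged Hamming weight in the generalized Singleton bound. The constant $(1-R)/(L+1)$ must emerge precisely from the dimension count, which requires a careful stratification of $[n]$ by the intersection pattern of the $A_i$'s and a term-by-term accounting of how $\MDS(L+1)$ constrains each stratum; any slack in this accounting would only give a worse list-size bound. The GM-MDS application is comparatively more mechanical once the right Hall condition is identified, so most of the technical weight of the argument rests on getting the subspace-to-Hamming-weight translation exact.
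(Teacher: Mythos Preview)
Your high-level plan matches the paper's---use GM-MDS to show generic Reed--Solomon codes are $\MDS(\ell)$ for all $\ell$, then use the equivalence with Roth's notion---but there is a concrete gap in how you state and sketch the equivalence step. You write ``if $C$ is $\MDS(L+1)$, then $C$ is $(\rho,L)$-average-radius list-decodable''; the paper's equivalence (Theorem~\ref{thm:main-mds}) is actually that $C$ is $\MDS(L+1)$ if and only if $C^{\perp}$ is $\LDMDS(\le L)$---the list-decoding conclusion is for the \emph{dual}. Your sketch (the differences $d_i = c_i - c_0$ vanish on $A_0 \cap A_i$) does not naturally produce an element of $G_{A_0} \cap \cdots \cap G_{A_L}$; the subspace intersection that actually arises uses the parity-check matrix $H$ and the \emph{disagreement} sets $J_i = [n] \setminus A_i$: with $u_i = c_i - y$ the syndromes $Hu_i$ are all equal, so their common value lies in $H_{J_0} \cap \cdots \cap H_{J_L}$, and the weight bound $\sum_i |J_i| \le L(n-k)$ forces this intersection to be generically zero. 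That is an $\MDS(L+1)$ statement about $C^{\perp}$, not about $C$. Since $\MDS(\ell)$ is not self-dual for $\ell \ge 4$, the implication you wrote down is false in general.

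Because of this duality, the proof needs an ingredient you omit entirely: the dual of a generic Reed--Solomon code is, up to column scaling, again a generic Reed--Solomon code (Proposition~\ref{prop:genericRS_duality}). Without this step, knowing that generic RS codes are $\MDS(\ell)$ for every $\ell$ does not by itself give $\LDMDS$ for the RS code. Once duality is in place, the constant $(1-R)/(L+1)$ does not require the ``careful stratification of $[n]$'' you anticipate; it is just the definition of $\LDMDS(L)$ for an $(n,k)$-code, and the exact match comes from the combinatorial formula for the generic intersection dimension (Theorem~\ref{thm:main-dim}) rather than from any ad hoc accounting.
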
 Equivalently, Theorem~\ref{thm:main-ld} shows that a generic Reed-Solomon code of rate $R$ is $(1-R-\eps,L)$-list decodable with $L=\frac{1-R-\eps}{\eps}.$ The bound (\ref{eq:optimal_rho}) is the best possible even for non-linear codes~\cite{goldberg2021list,roth2021higher}.

In Section~\ref{sec:ld-mds}, we also show how to turn Theorem~\ref{thm:main-ld} into a quantitative bound on the field size required for random Reed-Solomon codes to achieve list-decoding capacity.

\begin{theorem}
   \label{thm:main-ld-random} Let $n,k,L$ be positive integers and let $c(n,k,L) = 2Ln^2\binom{n}{\le n-k}^{L+1}$. A random $\change{[n,k]}$-Reed-Solomon code of rate $R=k/n$ over $\F$ is $(1-R-\frac{1-R}{L+1},L)$-average-radius list decodable with probability at least $1 - c / |\F|$.
 \end{theorem}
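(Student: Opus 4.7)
The plan is to obtain Theorem~\ref{thm:main-ld-random} by carefully extracting polynomial non-vanishing conditions from the proof of Theorem~\ref{thm:main-ld} and then applying Schwartz--Zippel. Recall that Theorem~\ref{thm:main-ld} is proven by identifying Reed--Solomon codes achieving the optimal $(\rho,L)$-average-radius list-decoding bound with those satisfying the $\MDS(L+1)$ property, which in turn is a statement about how certain subspaces spanned by columns of the Vandermonde generator matrix intersect. Since the entries of that generator matrix are polynomials in the evaluation points $\alpha_1,\dots,\alpha_n$, each subspace-intersection condition translates into the non-vanishing of some minor (or a product of minors) built out of the $\alpha_i$'s.

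First, I would enumerate the ``bad configurations'' that can witness failure of $(\rho,L)$-average-radius list-decodability with $\rho=1-R-\frac{1-R}{L+1}$. A violation produces $L+1$ distinct codewords $c_0,\dots,c_L$ and a word $y$ with $\sum_i \wt(c_i-y)\le (L+1)\rho n$, and combinatorially such a configuration is determined by the supports $T_i = \{j: (c_i-y)_j \ne 0\}$, each a subset of $[n]$ of size at most $n-k$ (else $c_i - y$ would lie in the dual agreement set of an MDS code and force $c_i=y$). Thus the number of candidate bad configurations is at most $\binom{n}{\le n-k}^{L+1}$, matching the corresponding factor in $c(n,k,L)$.

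Next, using the $\MDS(L+1)$ characterization established earlier, each such bad configuration is ruled out precisely when a specific polynomial $P(\alpha_1,\dots,\alpha_n)$ does not vanish; this $P$ is a determinant (or a short product of determinants) of a submatrix of the Vandermonde matrix in \eqref{eq:Vandermonde_intro}. Since each $\alpha_i$ appears in row $r$ of the Vandermonde matrix with degree $r \le k-1$, the total degree of $P$ is bounded by $O(nk)=O(n^2)$, and the $2L$ factor absorbs the constant coming from the number of minors that are multiplied together for a single configuration. Crucially, by Theorem~\ref{thm:main-ld} none of these polynomials is identically zero, since the generic Reed--Solomon code is $\MDS(L+1)$. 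Applying Schwartz--Zippel to each $P$ and union-bounding over all configurations then yields the claimed failure probability $\le 2Ln^2\binom{n}{\le n-k}^{L+1}/|\F|$.

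The main obstacle is the bookkeeping in the third paragraph: pinning down the exact polynomial $P$ associated with each configuration and verifying that its total degree is at most $n^2$ and that at most $2L$ minors need to be multiplied. Once the proof of Theorem~\ref{thm:main-ld} is recast so that each of its $\MDS(L+1)$ checks is exhibited as a single non-vanishing polynomial of controlled degree, the quantitative statement is an immediate Schwartz--Zippel plus union-bound calculation, with no further probabilistic input required.
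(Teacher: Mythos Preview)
Your high-level strategy matches the paper's proof (Proposition~\ref{prop:rRS_ldmds}): reduce $\LDMDS(\le L)$ to a family of polynomial non-vanishing conditions, apply Schwartz--Zippel to each, and union-bound. However, several of the details you fill in are incorrect and would not yield the stated constant.

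The main gap is that you skip the duality step. By Theorem~\ref{thm:main-mds}, $C$ is $\LDMDS(\le L)$ if and only if $C^\perp$ is $\MDS(L+1)$, and by Proposition~\ref{prop:genericRS_duality} the dual of a Reed--Solomon code is (up to column scaling, which does not affect $\MDS(\ell)$) the $(n,n-k)$ Reed--Solomon code with the same evaluation points. Thus the $\MDS(L+1)$ check is on the $(n-k)\times n$ Vandermonde matrix $G'$, not on the $k\times n$ matrix in~\eqref{eq:Vandermonde_intro}. Your degree estimate ``$r\le k-1$'' uses the primal matrix, while your configuration count $\binom{n}{\le n-k}^{L+1}$ implicitly uses the dual; these are inconsistent.

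Second, the polynomial $P$ is not ``a short product of determinants'' and the factor $2L$ does not come from counting minors. The paper invokes Lemma~\ref{lem:null} to reduce $\MDS(L+1)$ for $G'$ to checking only null-intersecting families $A_1,\dots,A_{L+1}\subseteq[n]$ with $|A_i|\le n-k$ and $\sum_i|A_i|=L(n-k)$. For each such family, the condition $G'_{A_1}\cap\cdots\cap G'_{A_{L+1}}=0$ is equivalent (via Appendix~B of~\cite{bgm2021mds}) to the nonsingularity of a \emph{single} $L(n-k)\times L(n-k)$ block matrix with entries of degree at most $n-k-1$ in the $\alpha_i$'s; its determinant therefore has degree at most $L(n-k)(n-k-1)\le Ln^2$. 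That is where the $L$ comes from. The extra factor of $2$ simply absorbs the separate bad event, of probability at most $n^2/|\F|$, that two evaluation points collide. Without the reduction of Lemma~\ref{lem:null} you would also have to argue separately that support tuples which are \emph{not} null-intersecting cannot yield $L+1$ \emph{distinct} witnesses~$u_i$; this is precisely the nontrivial content of the equivalence $\MDS(L+1)\Leftrightarrow\LDMDS(\le L)$ and does not follow from your enumeration alone.
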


\begin{remark} Combining the construction from this paper with a different one in~\cite{bgm2021mds}, one can get $c(n,k,L) = n^{O(\min(k,n-k)L)}$ in Theorem~\ref{thm:main-ld-random} (see Remark~\ref{rem:bgm_construction}).
\end{remark}

\subsubsection{Previous Work} \cite{shangguan2020combinatorial} conjectured Theorem~\ref{thm:main-ld} and Theorem~\ref{thm:main-ld-random} and proved them in the case of $L=2,3.$\footnote{They did not conjecture average-radius list-decodability and also \change{did not} conjecture an explicit bound on $c(n,k,L).$} Note that it is also true for $L=1$ trivially, since Reed-Solomon codes are MDS. They also made an algebraic conjecture in their paper (see Conjecture 5.7 from \cite{shangguan2020combinatorial}) about the non-singularity of certain symbolic matrices, which would imply Theorem~\ref{thm:main-ld}. We prove this conjecture in Section~\ref{sec:conj_alg_st20}, the proof follows from some of the results in our paper which we use to prove Theorem~\ref{thm:main-ld}. Table~\ref{tab:RS_LD} shows prior results on list-decoding of random Reed-Solomon codes over fields of size $q.$

\bgroup \def\arraystretch{1.5}%
\begin{table}[h!]  \centering
\label{tab:RS_LD}
\begin{tabular}{ |c|c|c|c|c| } \hline & $\rho$ & $L$ & Rate $R$ & Field size $q$\\ \hline Johnson bound & $1-\eps$ & $qn^2$ & $\eps^2 $ & $n$\\ \hline \cite{rudra2014every} & $1-\eps$ & $O(1/\eps)$ & $\Omega \lp \frac{\eps}{\log(q)\log^5(1/\eps)}\rp$ & $\tilde{\Omega}(n/\eps)$\\ \hline \cite{shangguan2020combinatorial} & $1-R-\frac{1-R}{L+1}$& $L=2,3$ & $R$ & $\exp(n)$\\ \hline \cite{guo2020improved} & $1-\eps$ & $O(1/\eps)$ & $\Omega\lp\frac{\eps}{\log(1/\eps)}\rp$ & $(1/\eps)^{n}$\\ \hline \cite{ferber2022list} & $1-\eps$ & $\lceil 3/\eps \rceil$ & $\frac{\eps}{3(1+\zeta)}$ & $n^{1+1/\zeta}$ \\ \hline \cite{goldberg2021list} & $1-\eps$ & $O(1/\zeta)$ & $\frac{\eps-\zeta}{2-\eps+\zeta}$ & $\poly(n)$ \\ \hline Our work & $1-R-\eps$ & $\frac{1-R-\eps}{\eps}$ & $R$ & $\exp(\tilde{O}(n/\eps))$\\ \hline
\end{tabular}
 \caption{Adapted from \cite{goldberg2021list}. Prior works on list-decoding of random Reed-Solomon codes over fields of size $q.$}
\end{table} \egroup

\subsection{Higher Order MDS Codes}

Our results on list-decodability of generic Reed-Solomon codes follow from studying generalizations of MDS codes called \emph{higher order MDS codes}. We will show that generic Reed-Solomon codes are not just MDS, they are in fact higher order MDS codes. As we will see shortly, this implies that generic Reed-Solomon codes have optimal list-decodability. We will now dive into the rich theory of higher order MDS codes.

A $\change{[n,k]}$-code $C$ is MDS if it has the property that every non-zero codeword has hamming weight at least $n-k+1$. MDS codes have a number of equivalent characterizations. As has recently been explored in the literature~\cite{bgm2021mds,roth2021higher}, for many of the characterizations one can define a suitable generalization of MDS codes, deriving various notions of \emph{higher-order MDS codes}. Each of these has an order parameter $\ell \ge 1$, indicating the degree of generality over MDS codes.

\subparagraph{$\blacktriangleright \MDS(\ell)$.} Suppose $\change{G \in \F^{k\times n}}$ is the generator matrix of an $\change{[n,k]}$-code $C$ over $\F$. For $A \subset [n]$, let \change{$G|_A$ denote the submatrix of $G$ spanned by the columns of $G$ indexed by $A$. Further, let} $G_A$ denote the linear subspace of $\F^k$ spanned by \change{$G|_A$.} $C$ is MDS iff every $k$ columns of $G$ are linearly independent, equivalently $\dim(G_A)=\min\{|A|,k\}$ for all $A\subset [n].$ Equivalently, we can write this as $\dim(G_A)=\dim(W_A)$ where \change{$W$} is a generic \change{$k \times n$} matrix.\footnote{\change{A generic matrix is defined as in Definition~\ref{def:grs}. We consider a quotient field $\K$ of a polynomial ring with $kn$ variables, and have each entry of $W$ be a separate one of these variables.}} If $A,B\subset [n]$ are any two subsets, then
  \begin{align*} \dim(G_A \cap G_B)&=\dim(G_A)+\dim(G_B)-\dim(G_A+G_B)\\ &=\dim(G_A)+\dim(G_B)-\dim(G_{A\cup B})\\ &=\dim(W_A)+\dim(W_B)-\dim(W_{A\cup B})\\ &=\dim(W_A \cap W_B).
  \end{align*} Unfortunately, it may not be true that $\dim(G_{A_1} \cap G_{A_2} \cap G_{A_3})=\dim(W_{A_1} \cap W_{A_2} \cap W_{A_3})$ for all subsets $A_1,A_2,A_3\subset [n]$ if $C$ is MDS. This is because, the usual inclusion-exclusion principle fails for 3 or more subspaces. \cite{bgm2021mds} considered the following generalization of MDS codes which they called \emph{higher order MDS codes}.

\begin{definition}[$\MDS(\ell)$~\cite{bgm2021mds}] Let $C$ be an $\change{[n,k]}$-code with generator matrix $G$.  Let $\ell$ be a positive integer. We say that $C$ is $\MDS(\ell)$ if for any $\ell$ subsets $A_1, \hdots, A_\ell \subseteq [n]$ of size of at most $k$, we have that
  \begin{align} \dim (G_{A_1} \cap \cdots \cap G_{A_\ell}) = \dim (W_{A_1} \cap \cdots \cap W_{A_\ell}), \label{eq:MDS-L}
  \end{align} where \change{$W$ is a generic $k\times n$ matrix} over the same field characteristic.\footnote{Note that $\MDS(\ell)$ is a property of the code $C$ and not a particular generator matrix $G$ used to generate $C$. This is because if $G$ satisfies (\ref{eq:MDS-L}) then $MG$ also satisfies (\ref{eq:MDS-L}) for any $k\times k$ invertible matrix $M$.}
\end{definition}

Since $\dim(W_{A_1}\cap \dots \cap W_{A_\ell})$ is minimized when $W$ is a generic matrix, another intuitive way to think of an $\MDS(\ell)$ code is that $G_{A_1},G_{A_2},\dots,G_{A_\ell}$ intersect as minimally as possible for any $\ell$ subsets $A_1,A_2,\dots,A_\ell.$ The usual MDS codes are $\MDS(\ell)$ for $\ell=1,2$ by the above discussion.  This definition arose out of attempting to understand the properties of \emph{maximally recoverable tensor codes}, which are explained in more detail in Section~\ref{subsec:mrtc}. Briefly, the tensor product of $C$ and a parity check code is a maximally recoverable tensor code iff $C$ is a higher order MDS code of appropriate order (Proposition~\ref{prop:mrtc_mdsell}). Unlike MDS property, $\MDS(\ell)$ is not preserved under duality. The dual of an $\MDS(\ell)$ code is $\MDS(\ell)$ for $\ell\le 3$, but this fails for $\ell\ge 4$ \cite{bgm2021mds}.

To get some intuition for $\MDS(\ell)$, let's understand a $\change{[n,3]}$-code $C$ which is $\MDS(3).$ Let $v_1,v_2,\dots,v_n\in \F^3$ be the columns of a generator matrix of $C$. Since scaling the columns \change{does not} affect $\MDS(3)$, we can think of them as points in the projective plane $\PF^2.$ It is easy to see that $C$ is MDS iff the points $v_1,v_2,\dots,v_n \in \PF^2$ are in general position, that is no three points are collinear. $C$ is $\MDS(3)$ iff in addition, any 3 lines formed by joining disjoint pairs of points in $v_1,v_2,\dots,v_n$ are not concurrent.

\subparagraph{$\blacktriangleright \LDMDS(\ell)$.}  A generalization of the singleton bound was recently proved for list-decoding in \cite{shangguan2020combinatorial,roth2021higher,goldberg2021singleton}. Roth~\cite{roth2021higher} defined a higher order generalization of MDS codes as codes achieving this generalized singleton bound for list-decoding.\footnote{\cite{roth2021higher} also called these \emph{higher order MDS codes} independently of the prior work \cite{bgm2021mds}, leading to some confusion. Fortunately, as we will see shortly, these two notions are nearly equivalent.}

\begin{definition}[$\LDMDS(L)$~\cite{roth2021higher}] Let $C$ be a $\change{[n,k]}$-code. We say that $C$ is list decodable-$\MDS(L)$, denoted by $\LDMDS(L)$, if $C$ is $(\rho,L)$-average-radius list-decodable for $\rho=\frac{L}{L+1}\lp 1-\frac{k}{n}\rp.$ In other words, for any $y\in \F^n$, there \change{do not} exist $L+1$ \emph{distinct} codewords $c_0,c_1,\dots,c_L \in C$ such that
	\begin{equation}
		\label{eq:LDMDS_primal} \sum_{i=0}^L \wt(c_i-y) \le L(n-k).
	\end{equation} We say\footnote{In general, the notion of $\LDMDS(\ell)$ is not monotone in $\ell$.} that $C$ is $\LDMDS(\le L)$ if it is $\LDMDS(\ell)$ for all $1\le \ell \le L.$
\end{definition}

The list-decoding guarantees of $\LDMDS(L)$ are very strong. In particular, $\LDMDS(L)$ codes of rate $R$ get $\eps$-close to list-decoding capacity when $L\ge \frac{1-R-\eps}{\eps}$. Note that the usual MDS codes are $\LDMDS(1).$ \cite{roth2021higher} showed that $\LDMDS(L)$ property is preserved under duality only for $L=1,2$, and also gave some explicit constructions of $\LDMDS(2)$ codes.

\subparagraph{$\blacktriangleright \GZP(\ell)$.} In many coding theory applications, it is useful to have MDS codes with generator matrices having constrained supports, see \cite{dau2014gmmds,Halbawi2014distributed,yan2013algorithms,dau2013balanced} for some such applications to multiple access networks and secure data exchange. Dau et al. \cite{dau2014gmmds} have made a remarkable conjecture that Reed-Solomon codes over fields of size $q\ge n+k-1$ can have generator matrices with arbitrary patterns of zeros, as long as the pattern of zeros do not obviously preclude MDS property by having a large block of zeros. This came to be called the \emph{GM-MDS conjecture}. It was eventually proved independently by Lovett~\cite{lovett2018gmmds} and Yildiz and Hassibi~\cite{yildiz2019gmmds}. Before we state the GM-MDS theorem, we will make some crucial definitions.

Let $\cS=(S_1,S_2,\dots,S_k)$ where $S_1, \hdots, S_{k} \subset [n]$, we call such an $\cS$ a \emph{zero pattern} for $k\times n$ matrices. We say that $\cS$ has \emph{order} $\ell$ if there are $\ell$ distinct non-empty sets among $S_1,S_2,\dots,S_k.$ We say that a matrix $\change{G \in \F^{k \times n}}$ \emph{attains} the zero pattern $\cS$ if there exists an invertible matrix $\change{M \in \F^{k \times k}}$ such that $\widetilde{G} := M G$ has zeros in $\bigcup_{i=1}^{k} \{i\} \times S_i$. Note that $\widetilde{G}$ and $G$ generate the same code.  We now define the crucial notion of a \emph{generic zero pattern}.

\begin{definition}[Generic zero pattern] Suppose $\cS=(S_1,S_2,\dots,S_k)$ is a zero pattern for $k\times n$ matrices. We say $\cS$ is a \emph{generic zero pattern} if for all $I\subset [k]$,
  \begin{equation}
    \label{eq:gzp_hall} \left| \bigcap_{i\in I}S_i\right| \le k-|I|.
  \end{equation}
\end{definition}

It is not hard to see that, by Hall's matching theorem, (\ref{eq:gzp_hall}) is equivalent to the condition that a generic $k\times n$ matrix $W$ which has zeros in $\bigcup_{i=1}^{k} \{i\} \times S_i$ (and the rest of the entries of $W$ are generic), has all $k\times k$ minors non-zero. This is because the condition (\ref{eq:gzp_hall}) ensures that any $k\times k$ submatrix of $W$ has a matching of non-zero entries, and thus ensures non-zero determinant for this $k\times k$ submatrix. (\ref{eq:gzp_hall}) appeared in \cite{dau2014gmmds}, where it is called the MDS condition because it is a necessary condition for a $k\times n$ matrix with zeros in $\cS$ to be MDS. We will now define a new generalization of MDS codes, which we call $\GZP(\ell)$.

\begin{definition}[$\GZP(\ell)$] We say that a $\change{[n,k]}$-code $C$ is $\GZP(\ell)$ if $C$ is MDS \footnote{We explicitly add the MDS condition to avoid degenerate cases like zero matrix being $\GZP(\ell).$} and its generator matrix $\change{G \in \F^{k \times n}}$ attains all $k\times n$ generic zero patterns of order at most $\ell.$ \footnote{Note that $\GZP(\ell)$ is a property of the code $C$, and not a particular generator matrix $G$ used to generate $C$. This if because if $G$ attains a generic zero pattern, then $MG$ also attains it for any invertible $k\times k$ matrix $M.$}
\end{definition}

Thus, a code $C$ is $\GZP(\ell)$ if we can choose a generator matrix of $C$ to have any order $\ell$ generic zero pattern. One can prove that $\GZP(1)$ and $\GZP(2)$ are equivalent to MDS property, therefore this is indeed a generalization of the MDS property. Given how we defined $\GZP(\ell)$, it is not obvious to see why generic matrices should be $\GZP(\ell).$ In Proposition~\ref{prop:our-gm-mds}, we give an elementary proof of the fact that generic matrices are indeed $\GZP(\ell)$, i.e., a fixed generic matrix can attain any generic zero pattern.

 We will now state the GM-MDS theorem in terms of $\GZP(\ell)$ property.

\begin{theorem}[GM-MDS~\cite{dau2014gmmds,lovett2018gmmds,yildiz2019gmmds}]
\label{thm:gmmds_gzp} A generic Reed-Solomon code is $\GZP(\ell)$ for all $\ell$, i.e., a generic Reed-Solomon code can attain any generic zero pattern.
\end{theorem}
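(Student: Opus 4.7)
The statement is the GM-MDS theorem of \cite{dau2014gmmds,lovett2018gmmds,yildiz2019gmmds} recast in the language of $\GZP(\ell)$, so the plan is to sketch how one would prove it directly. First, I would reformulate the problem in polynomial language. A Reed-Solomon code with evaluation points $\alpha_1,\ldots,\alpha_n$ is the image of the evaluation map sending $f\in\F[X]_{<k}$ to $(f(\alpha_1),\ldots,f(\alpha_n))$, so picking a generator matrix amounts to picking a basis $f_1,\ldots,f_k$ of the $k$-dimensional space $\F[X]_{<k}$. Attaining a zero pattern $\cS=(S_1,\ldots,S_k)$ is then equivalent to finding such a basis with $f_i(\alpha_j)=0$ for every $j\in S_i$. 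Since any such $f_i$ of degree less than $k$ must be divisible by $p_i(X):=\prod_{j\in S_i}(X-\alpha_j)$, and since condition (\ref{eq:gzp_hall}) applied with $I=\{i\}$ forces $|S_i|\le k-1$, I can write $f_i=p_i\cdot h_i$ with $\deg h_i\le k-1-|S_i|$ a polynomial I am free to choose.

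Next I would reduce the whole statement to a single non-vanishing claim. Fix the simple ansatz $h_i(X)=X^{a_i}$ for exponents $a_i\in\{0,1,\ldots,k-1-|S_i|\}$ that will be chosen later. Let $M$ be the $k\times k$ matrix whose $i$-th row consists of the coefficients of $f_i=X^{a_i}p_i(X)$ in the monomial basis $1,X,\ldots,X^{k-1}$; each entry of $M$ is a polynomial in $\alpha_1,\ldots,\alpha_n$. Then $f_1,\ldots,f_k$ form a basis of $\F[X]_{<k}$ precisely when $\det M\ne 0$, and over a generic choice of evaluation points this holds iff $\det M$, viewed as a polynomial in the $\alpha_j$'s, is not identically zero. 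Thus the entire theorem comes down to verifying that for some choice of the $a_i$'s this polynomial identity is non-trivial.

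The main obstacle, and the heart of the GM-MDS proof, is proving this non-vanishing. The plan is to identify a specific monomial in $\alpha_1,\ldots,\alpha_n$ whose coefficient in $\det M$ is non-zero by inspecting the Leibniz expansion. Here the generic zero pattern hypothesis enters through Hall's marriage theorem: condition (\ref{eq:gzp_hall}) is exactly Hall's condition for a bipartite graph in which row $i$ must be matched to one of the $k-|S_i|$ columns outside $S_i$, yielding a system of distinct representatives $\sigma\colon[k]\to[n]$ with $\sigma(i)\notin S_i$. Using $\sigma$ I would pick the $a_i$'s and isolate a monomial $\prod_i\alpha_{\sigma(i)}^{e_i}$ that can arise from only one permutation in the determinant expansion, forcing a non-zero coefficient. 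This is precisely the argument pushed through combinatorially by Lovett and Yildiz--Hassibi; the bound $\ell$ on the number of distinct non-empty $S_i$'s plays no role, so the same proof gives $\GZP(\ell)$ for all $\ell$ simultaneously. The delicate part where I would invest the most care is selecting the $a_i$'s from Hall's SDR so that the targeted monomial has a unique preimage in the Leibniz expansion.
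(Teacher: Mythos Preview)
The paper does not give its own proof of this statement: Theorem~\ref{thm:gmmds_gzp} is quoted as the GM-MDS theorem of \cite{dau2014gmmds,lovett2018gmmds,yildiz2019gmmds} and used as a black box. So there is no paper proof to compare against; the question is whether your sketch is a viable route to the cited result.

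Your reduction to a polynomial non-vanishing is correct and standard: attaining the pattern is exactly the linear independence of the $f_i=X^{a_i}p_i$, i.e.\ $\det M\not\equiv 0$ as a polynomial in the $\alpha_j$'s. The gap is in the final paragraph. First, the monomial you propose to isolate cannot appear in $\det M$: row $i$ of $M$ records the coefficients of $X^{a_i}\prod_{j\in S_i}(X-\alpha_j)$, so every entry of row $i$ is a polynomial only in $\{\alpha_j:j\in S_i\}$. A Hall SDR $\sigma$ with $\sigma(i)\notin S_i$ therefore produces variables $\alpha_{\sigma(i)}$ that are \emph{absent} from row $i$, so no monomial of the form $\prod_i\alpha_{\sigma(i)}^{e_i}$ can be read off from the Leibniz expansion in the way you describe. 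You seem to be conflating the coefficient matrix $M$ with a $k\times k$ submatrix of the evaluation matrix $\widetilde{G}$; the Hall SDR is the natural tool for the latter, but non-singularity of $\widetilde{G}$ on those $k$ columns is again equivalent to $\det M\ne 0$ and you are back to the same hard determinant.

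Second, the attribution is off: neither Lovett's nor Yildiz--Hassibi's proof proceeds by picking one monomial from a Hall matching. Both are delicate \emph{inductive} arguments on the structure of $(S_1,\ldots,S_k)$ (splitting on a tight subset $I$ for which $|\bigcap_{i\in I}S_i|=k-|I|$, padding when no tightness is present, and recursing). The direct ``pick the right $a_i$'s and find a surviving monomial'' strategy is essentially what \cite{dau2014gmmds} attempted in formulating the conjecture, and its difficulty is precisely why the problem stayed open until the inductive proofs appeared. If you want a self-contained proof here, the honest plan is to reproduce one of those inductions; the monomial-isolation shortcut, as stated, does not work.
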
 The actual GM-MDS theorem says that for any particular generic zero pattern $\cS$, there exists a Reed-Solomon code over any field of size $q\ge n+k-1$ which can attain $\cS$. This is a simple consequence of Theorem~\ref{thm:gmmds_gzp}, but we are more interested in $\GZP(\ell)$ codes which simultaneously attain \emph{all} order $\ell$ generic zero patterns.

\subparagraph{Equivalence of \change{H}igher-order MDS \change{C}odes.} We are now ready to present the most important theorem of our paper. We show that, surprisingly, all these notions of higher-order MDS codes are equivalent (up to duality).%

\begin{restatable}{theorem}{mdsequiv}\label{thm:main-mds} The following are equivalent for a linear code $C$ for all $\ell\ge 1$.
  \begin{enumerate}
    \item[(a)] $C$ is $\MDS(\ell+1).$
    \item[(b)] $C^{\perp}$ is $\LDMDS(\le \ell).$
    \item[(c)] $C$ is $\GZP(\ell+1).$
  \end{enumerate}
\end{restatable}
\begin{proof} (a) iff (b) is proved in Section~\ref{sec:ld-mds} and (a) iff (c) is proved in Section~\ref{sec:gzp-mds}.
\end{proof}

\begin{remark} One can show that $\MDS(1),\MDS(2),\LDMDS(1),\GZP(1),\GZP(2)$ are all equivalent to MDS (see \cite{bgm2021mds,roth2021higher}).
\end{remark}

As we will see, the core of the proof of Theorem~\ref{thm:main-mds} is combinatorial, with some simple linear algebra on top. Since by GM-MDS theorem, generic Reed-Solomon codes are $\GZP(\ell)$ for all $\ell$, we have the following corollary.

\begin{corollary}
  \label{cor:main-RS} Generic Reed-Solomon codes are $\GZP(\ell),\MDS(\ell)$ and $\LDMDS(\ell)$ for all $\ell.$
\end{corollary}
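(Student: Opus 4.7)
The plan is to chain together Theorem~\ref{thm:gmmds_gzp} and Theorem~\ref{thm:main-mds}. The $\GZP(\ell)$ conclusion is literally the statement of Theorem~\ref{thm:gmmds_gzp}. Feeding this into the equivalence (a) iff (c) of Theorem~\ref{thm:main-mds} yields that a generic Reed-Solomon code $C$ is $\MDS(\ell+1)$ for every $\ell \ge 1$; together with the observation (from the remark following Theorem~\ref{thm:main-mds}) that $\MDS(1)$ and $\MDS(2)$ coincide with the ordinary MDS property, this gives the $\MDS(\ell)$ conclusion for all $\ell \ge 1$.

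For the $\LDMDS(\ell)$ conclusion, the plan is to apply the equivalence (a) iff (b) of Theorem~\ref{thm:main-mds} with $C^{\perp}$ in the role of the code, which yields that $C = (C^{\perp})^{\perp}$ is $\LDMDS(\le \ell)$ iff $C^{\perp}$ is $\MDS(\ell+1)$. It therefore suffices to verify that $C^{\perp}$ is $\MDS(\ell+1)$ for every $\ell$. Here I invoke the classical fact that the dual of the $(n,k)$-Reed-Solomon code with evaluation points $\alpha_1, \ldots, \alpha_n$ is a column-scaling of the $(n, n-k)$-Reed-Solomon code with the same evaluation points. Since the latter is itself a generic Reed-Solomon code (of complementary dimension), by the preceding paragraph it is $\MDS(\ell)$ for every $\ell$; column-scaling invariance of $\MDS(\ell)$ then transfers this property to $C^{\perp}$.

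The one elementary verification the argument needs is this column-scaling invariance: if $G' = G D$ for a nonsingular diagonal matrix $D$, then $G'_A$ and $G_A$ span the same subspace of $\F^k$ for every $A \subseteq [n]$, so every intersection dimension appearing in~(\ref{eq:MDS-L}) is unchanged. Combining everything, $C$ is $\LDMDS(\le \ell)$ for every $\ell$, which in particular yields the $\LDMDS(\ell)$ conclusion. The main conceptual step is the dualization bridging $\LDMDS$ on $C$ with $\MDS$ on $C^{\perp}$; the heavy lifting is done by the GM-MDS theorem (Theorem~\ref{thm:gmmds_gzp}) and the equivalence theorem (Theorem~\ref{thm:main-mds}), both of which I take as given.
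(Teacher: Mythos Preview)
Your proof is correct and follows essentially the same route as the paper: invoke GM-MDS for $\GZP(\ell)$, use the equivalence Theorem~\ref{thm:main-mds} for $\MDS(\ell)$, and then pass through the dual (which is again a generic Reed-Solomon code up to column scaling) to obtain $\LDMDS(\ell)$. The paper packages your ``classical fact'' about the dual and the column-scaling invariance into Proposition~\ref{prop:genericRS_duality}, but the content is identical.
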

\begin{proof} \change{Let $R \in \K^n$ be a generic Reed-Solomon code and let $G \in \K^{k \times n}$ its generator matrix and $H \in \K^{k \times n}$ be its parity check matrix. By the GM-MDS theorem (Theorem~\ref{thm:gmmds_gzp}), we have that $R$ is $\GZP(\ell)$. We prove in Proposition~\ref{prop:genericRS_duality} that, up to a scaling of the columns, $H$ is the generator matrix of a generic Reed-Solomon code. By inspection, each of the properties $\GZP(\ell),\MDS(\ell)$ and $\LDMDS(\ell)$ is invariant with respect to column scaling.} Therefore, \change{by Theorem~\ref{thm:main-mds}, we have that} generic Reed-Solomon codes are $\MDS(\ell)$ and $\LDMDS(\ell)$ for all $\ell$.
\end{proof} This immediately implies our main result that generic Reed-Solomon codes achieve list-decoding capacity (Theorem~\ref{thm:main-ld}).

\subsection{Proof Overview}

The proof of Theorem~\ref{thm:main-mds} has a few key steps, including proving some novel properties of generic zero patterns as well as solving a \emph{generic intersection problem}.

\paragraph{Dimension of generic intersections}

The core of the proof of Theorem~\ref{thm:main-mds} is a combinatorial characterization of the RHS of (\ref{eq:MDS-L}).

\begin{restatable}[Dimension of Generic Intersection]{theorem}{dimgen}\label{thm:main-dim} Given $A_1, \hdots, A_{\ell} \subseteq [n]$ of size at most $k$, for a \change{$k \times n$ generic matrix $W$}, we have that
\begin{align} \dim (W_{A_1} \cap \cdots \cap W_{A_{\ell}}) = \max_{P_1\sqcup P_2 \sqcup \dots \sqcup P_s=[\ell]} \left(\sum_{i\in [s]} \left|\bigcap_{j \in P_i} A_j\right| - (s-1) k\right) \label{eq:gen-dim}
\end{align} where the maximum is over all partitions of $[\ell].$ Note that the result is independent of the characteristic of the underlying field.
\end{restatable}

The proof of Theorem~\ref{thm:main-dim} appears in Section~\ref{sec:gzp_implies_mds}. Since the RHS of (\ref{eq:gen-dim}) has a maximum over exponentially many terms in $\ell$, it gives an $\lp\exp(\tilde{O}(\ell))k\rp$-time algorithm for computing the generic intersection dimension. In Section~\ref{app:compute}, we give a $\change{\poly(n,\ell)}$-time time algorithm for computing generic intersection dimension via LP duality and submodular optimization. In \change{Appendix}~\ref{sec:inv-theory}, we give an alternative $\change{\poly(n,\ell)}$-time algorithm to compute the RHS of (\ref{eq:gen-dim}) by reducing \emph{to a} non-commutative rank computation (see Theorem~\ref{thm:poly-time}).

In literature, one can find many problems that are similar or related to Theorem~\ref{thm:main-dim} in a range of subjects like Schubert calculus, intersection theory, matroid theory, representation stability and homological algebra to name a few. For example, it seems conceivable that there is a matroid-theoretic description or that there is a formula for the intersection dimension coming from Schubert calculus and intersection theory. Despite that, it seems very difficult to adapt the techniques from any of those subjects to say anything meaningful about the problem above, but a more in-depth analysis from the view-point of any of those subjects could lead to new insights in broader contexts (see Section~\ref{subsec:future}).

\paragraph{A novel characterization of sets in order-$\ell$ generic zero patterns} We also show a novel structural result on order $\ell$ generic zero patterns. In particular, if $\cS$ is an order $\ell$ generic zero pattern containing sets $A_1,A_2,\dots,A_\ell$ and say $d$ copies of the empty set. Then by applying (\ref{eq:gzp_hall}), one can easily show that for all partitions $\cP = P_1 \sqcup P_2 \sqcup \cdots \sqcup P_s = [\ell]$ we have that
\begin{equation}
\label{eq:partition_intro} \sum_{i=1}^s \left|\bigcap_{j \in P_i} A_j\right| \le (s-1)k+d.
\end{equation} Surprisingly the converse is also true (see Lemma \ref{lem:hall-part}). $A_1,A_2,\dots,A_\ell$ can be used to form an order $\ell$ generic zero pattern with $d$ copies of the empty set iff (\ref{eq:partition_intro}) holds. The proof involves an intricate induction, which on a high level is comparable to the induction used to prove Hall's matching theorem. In the proof, one identifies the partition $\cP$ for which the above inequality is tight (if no such partition exists, then one pads with elements). One can then recursively apply the induction hypothesis to each portion of the partition, which can then be combined together to show the result. This result is crucially used to prove that $\GZP(\ell)$ codes are also $\MDS(\ell)$ and to prove Theorem~\ref{thm:main-dim}.

The proof of Theorem~\ref{thm:main-dim} proceeds as follows. Let $d$ be the RHS of (\ref{eq:gen-dim}). By Lemma~\ref{lem:hall-part}, there exists a order $\ell$ zero pattern $\cS$ with copies of $A_1, \hdots, A_{\ell} \subseteq [n]$ and $d$ copies of the empty set. Then, since a generic matrix $W$ is $\GZP(\ell)$ (Proposition~\ref{prop:our-gm-mds}), there is an invertible $\change{M \in \K^{k \times k}}$ such that $\widetilde{W}=MW$ has the zero pattern $\cS$. From this, it is straightforward to upper bound the dimension of the intersection $\dim(W_{A_1}\cap \dots \cap W_{A_\ell})=\dim(\widetilde{W}_{A_1}\cap \dots \cap \widetilde{W}_{A_\ell})\le d.$ A matching lower bound follows from the pigeonhole principle and dimension counting. Note that this proof also immediately implies that $\GZP(\ell)$ codes are $\MDS(\ell)$ because in the proof of Theorem~\ref{thm:main-dim}, we only used the $\GZP(\ell)$ property of generic matrices to get the correct dimension.

\paragraph{A generalized Hall's theorem.} One of the key results in \cite{dau2014gmmds} is a \emph{generalized Hall's theorem}. For any generic zero pattern $\cS = (S_1, \hdots, S_k)$ there exists a generic zero pattern $\cS' = (S'_1, \hdots, S'_k)$ which contains $\cS$ (i.e., for all $i$, $S'_i \supseteq S_i$) such that $|S'_i| = k-1$ for all $i$. However, this theorem does not preserve order, if $\cS$ is order $\ell$, the order of $\cS'$ can be as large as $k$ (and in fact must equal $k$).

In Section~\ref{subsec:gen-hall}, we further generalize the generalized Hall's theorem from \cite{dau2014gmmds}. In particular, we show that if $\cS$ is an order $\ell$ generic zero pattern, then there is an order $\ell$ generic zero pattern $\cS'$ which contains $\cS$ such that for the non-empty $A_1, \hdots, A_\ell$ which define $\cS'$, each $A_i$ appears exactly $k - |A_i|$ times. Such an $\cS'$ is called \emph{maximal}.

This new generalized Hall's theorem is used to prove that $\MDS(\ell)$ codes are also $\GZP(\ell)$. Suppose $C$ is an $\MDS(\ell)$ code with generator matrix $G$. Given an order $\ell$ generic zero pattern $\cS$, one uses our generalized Hall theorem (Theorem~\ref{thm:ell-hall}), to find a maximal order $\ell$ generic zero pattern $\cS'$ containing $\cS$. Let $A_1, \hdots, A_\ell$ be the $\ell$ non-empty sets in $\cS'$ and say $\cS'$ has $d$ copies of the empty set. Using the $\MDS(\ell)$ property of $G$ and the fact that $\cS'$ satisfies (\ref{eq:gzp_hall}), we can prove that $\dim(G_{A_1} \cap \cdots \cap G_{A_\ell})=d$ via Theorem~\ref{thm:main-dim}. By taking the dual of this intersection and performing a dimension-counting argument, one can show that\footnote{The dual is of the linear space $G_{A_1} \subseteq \F^k$. We are \emph{not} taking the dual of the original matrix $G$.}  $(G_{A_1})^{\perp}, \hdots, (G_{A_{\ell}})^{\perp}$ are linearly independent. One can then show that bases for these spaces can be put together to build a matrix $M$ such that $MG$ has the desired zero pattern. For the proof to work, we absolutely need the fact that each $A_i$ appears exactly $k-|A_i|$ times in the pattern $\cS'$, which is guaranteed by the generalized Hall's theorem.

\paragraph{Equivalence of $\MDS(\ell+1)$ and $\LDMDS(\le \ell)^{\perp}$.}  The proof of equivalence mostly follows from Theorem~\ref{thm:main-dim}.  We prove the contrapositive: that $C$ is \emph{not} $\MDS(\ell+1)$ iff $C^{\perp}$ is \emph{not} $\LDMDS(\le \ell)$.

Let $G$ be a generator matrix of $C$, note that $G$ is a parity check matrix for $C^\perp$. If $C$ is \emph{not} $\MDS(\ell+1)$, there exists some choice of $A_1, \hdots, A_{\ell+1}$ for which (\ref{eq:gen-dim}) is not satisfied. In fact, using a result of \cite{bgm2021mds}, one can assume that the RHS of (\ref{eq:gen-dim}) is $0$. This implies there is a nontrivial $z \in G_{A_1} \cap \cdots \cap G_{A_{\ell+1}}$ which is not captured by a generic intersection. In particular, for all $i \in [\ell]$, there is $u_i \in \F^n$ with $\supp(u_i) \subseteq A_i$ \change{(where for $u \in \F^n$, $\supp(u) := \{i \in [n] : u_i \neq 0\}$)} with $z = G u_i$. This is almost enough to prove that $C^\perp$ is not $\LDMDS(\ell)$, but some of the $u_i$'s may be equal. To get around this, we consider a partition of $[\ell]$ with two $A_i$'s in the same part if their $u_i$'s are equal. The resulting inequality arising from using this partition with (\ref{eq:gen-dim}) is enough to prove that $C^\perp$ is not $\LDMDS(\ell')$ for some $\ell' \le \ell$.

Now assume that $C^\perp$ is not $\LDMDS(\le \ell)$, WLOG say $C^\perp$ is not $\LDMDS(\ell)$. In particular, this implies that there are \emph{distinct} $u_1, \hdots, u_{\ell+1}$ for which $Gu_1 = \cdots = Gu_{\ell+1}$ and $\sum_{i=1}^{\ell+1} \wt(u_i) \le \ell k.$ One can then let $A_i = \supp(u_i)$, and consider the intersection $G_{A_1} \cap \cdots \cap G_{A_{\ell+1}}$. Using the distinctness of $u_1,\dots,u_{\ell+1},$ one can prove that the dimension of this intersection is strictly greater than the corresponding generic intersection. This is enough to show that $C^{\perp}$ is not $\MDS(\ell)$.

\subsection{Further Applications and Connections} In this section, we mention some further applications and connections of our work to different areas of coding theory and mathematics.

\subsubsection{Generic Gabidulin \change{C}odes \change{A}chieve \change{L}ist-decoding \change{C}apacity} Let $\alpha_1,\alpha_2,\dots,\alpha_n$ be linearly independent over some base field $\F_q.$ A Gabidulin code has the following generator matrix:
\begin{equation}
\label{eq:Gabidulin_intro}
  \begin{pmatrix} \alpha_1 & \alpha_2 & \cdots & \alpha_n\\ \alpha_1^q & \alpha_2^q & \cdots & \alpha_n^q\\ \alpha_1^{q^2} & \alpha_2^{q^2} & \cdots & \alpha_n^{q^2}\\ \vdots & \vdots & \ddots & \vdots\\ \alpha_1^{q^{k-1}} & \alpha_2^{q^{k-1}} & \cdots & \alpha_n^{q^{k-1}}
  \end{pmatrix}.
\end{equation} A generic Gabidulin code is defined \change{as in Definition~\ref{def:grs}} by choosing $\alpha_1,\alpha_2,\dots,\alpha_n$ \change{as generators of the quotient field of a polynomial ring.} Gabidulin codes are rank metric codes which achieve the rank metric singleton bound with applications in network coding, space-time coding and cryptography~\cite{gabidulinrank,gabidulin2021rank}.  GM-MDS theorem was extended to Gabidulin codes over both finite and zero characteristic in \cite{yildiz2019gabidulin,yildiz2020gabidulincharzero}.  Thus generator of matrices of generic Gabidulin codes over both finite and zero characteristic also satisfy $\GZP(\ell)$ for all $\ell.$ Since dual of a generic Gabidulin code is also a generic Gabidulin code \cite{gabidulin2021rank}, Theorem~\ref{thm:main-mds} shows that Gabidulin codes are $\MDS(\ell)$ and $\LDMDS(\ell)$ for all $\ell.$ This implies that generic Gabidulin codes have optimal list-decoding guarantees with respect to the Hamming metric.
\begin{theorem}\label{thm:gabidulin} Generic Gabidulin codes achieve list-decoding capacity (in the Hamming metric). In particular, they are $(\rho,L)$-average-radius list-decodable for $$\rho=1-R-\frac{1-R}{L+1}.$$
\end{theorem}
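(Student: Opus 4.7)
The plan is to mirror the argument used for generic Reed-Solomon codes (Corollary~\ref{cor:main-RS}), replacing the Reed-Solomon GM-MDS theorem with its Gabidulin analogue. The three ingredients I will assemble are: (i) the Gabidulin GM-MDS theorems of Yildiz-Hassibi \cite{yildiz2019gabidulin,yildiz2020gabidulincharzero}, which imply that a generic Gabidulin code is $\GZP(\ell)$ for every $\ell$; (ii) the fact that the dual of a generic Gabidulin code is again a generic Gabidulin code \cite{gabidulin2021rank}; and (iii) the main equivalence Theorem~\ref{thm:main-mds}, which asserts $\MDS(\ell+1) \Leftrightarrow \LDMDS(\le \ell)^{\perp} \Leftrightarrow \GZP(\ell+1)$.

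Let $C$ be a generic $(n,k)$-Gabidulin code of rate $R = k/n$ and fix $L \ge 1$. By (ii), $C^{\perp}$ is also a generic Gabidulin code, so by (i), $C^{\perp}$ is $\GZP(L+1)$. Invoking the equivalence (c)$\Rightarrow$(a) of Theorem~\ref{thm:main-mds}, $C^{\perp}$ is $\MDS(L+1)$. Then applying (a)$\Rightarrow$(b) of Theorem~\ref{thm:main-mds} to $C^{\perp}$ and using $(C^{\perp})^{\perp} = C$, we conclude that $C$ is $\LDMDS(\le L)$, and in particular $\LDMDS(L)$.

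Unpacking the definition of $\LDMDS(L)$ then yields that $C$ is $(\rho, L)$-average-radius list-decodable for
\[
\rho \;=\; \frac{L}{L+1}\lp 1 - \frac{k}{n}\rp \;=\; 1 - R - \frac{1-R}{L+1},
\]
which is exactly the stated bound. Since this holds for every $L \ge 1$, generic Gabidulin codes achieve list-decoding capacity in the Hamming metric.

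There is essentially no nontrivial obstacle: once (i) and (ii) are cited, the proof is a mechanical composition of the equivalences already established in Theorem~\ref{thm:main-mds}. The only point that deserves a sentence of care is the genericity-preservation in (ii) -- that a generic choice of $\F_q$-linearly independent $\alpha_1, \ldots, \alpha_n$ yielding $C$ corresponds under duality to another generic choice defining $C^{\perp}$ as a Gabidulin code -- but this is standard and entirely parallel to the Reed-Solomon duality of Proposition~\ref{prop:genericRS_duality}.
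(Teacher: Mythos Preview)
Your proposal is correct and takes essentially the same approach as the paper: cite the Gabidulin GM-MDS theorems to get $\GZP(\ell)$ for all $\ell$, use that duals of generic Gabidulin codes are again generic Gabidulin codes, and then apply Theorem~\ref{thm:main-mds} to conclude $\LDMDS(\le L)$ for all $L$. The paper presents this argument in the paragraph immediately preceding the theorem statement rather than as a separate proof, but the content is identical.
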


\subsubsection{Field \change{S}ize \change{L}ower \change{B}ounds for \texorpdfstring{$\LDMDS(L)$}{LDMDS(L)} \change{C}odes} The field size lower bound on $\MDS(\ell)$ of~\cite{bgm2021mds} is as follows.

\begin{proposition}[Corollary~4.2~\cite{bgm2021mds}]
\label{prop:lb_mdsell} If $C$ is an $\change{[n,k]}$-code over $\F$ which is $\MDS(\ell)$, then
  \[ |\F| \gtrsim_\ell n^{\min\{k,n-k,\ell\}-1}.
  \]
\end{proposition}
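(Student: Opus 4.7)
The plan is to reduce to a small-dimensional setting and then derive the bound via an arc-type argument in projective space. Let $k^\ast := \min\{k, n-k, \ell\}$. The first step is to exploit the symmetry between $k$ and $n-k$: combining shortening (an $\MDS(\ell)$ code restricted to any $|T| \ge k$ columns remains $\MDS(\ell)$, since the intersection dimensions appearing in the definition are preserved under restriction to $A_i \subseteq T$) with the duality of Theorem~\ref{thm:main-mds} (which relates $\MDS(\ell+1)$ for $C$ to $\LDMDS(\le \ell)$ for $C^\perp$, and whose contrapositive yields a corresponding obstruction for the dual) should allow me to assume $k \le n-k$, so that $k^\ast = \min\{k, \ell\}$. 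When $\ell < k$, I would further reduce to dimension $k^\ast$ by projecting modulo a generic $(k - k^\ast)$-dimensional subspace of $\F^k$, arguing that the resulting $(n, k^\ast)$-code remains $\MDS(\ell)$ because the relevant intersections stay at their generic dimension after a suitably generic projection.

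After this reduction, the columns of the generator matrix are $n$ points $v_1, \dots, v_n \in \PF^{k^\ast - 1}$ in linearly general position (by MDS). The higher-order conditions, via the $\GZP(\ell)$ characterization and the dimension formula of Theorem~\ref{thm:main-dim}, impose additional non-degeneracies that I would try to repackage by passing to a dual configuration: for each $(k^\ast - 1)$-subset $I \subset [n]$, take the hyperplane of $\PF^{k^\ast - 1}$ spanned by $\{v_i : i \in I\}$ and consider its dual point in the dual projective space $\PF^{k^\ast - 1}$. The $\MDS(\ell)$ incidence constraints translate, through Lemma~\ref{lem:hall-part} (characterization of order-$\ell$ realizable tuples) and Theorem~\ref{thm:main-dim}, into the statement that no $k^\ast$ of these $\binom{n}{k^\ast - 1}$ dual points are linearly dependent, i.e., they form an arc in $\PF^{k^\ast - 1}$.

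Once the dual configuration is established as an arc, the classical bound (an arc in $\PF^d$ has at most $|\F| + O(1)$ points) yields $\binom{n}{k^\ast - 1} \lesssim_\ell |\F|$, which rearranges to $|\F| \gtrsim_\ell n^{k^\ast - 1}$. The main obstacle I anticipate is verifying that this dual configuration genuinely forms an arc of the claimed size: every potential linear dependence among the dual points needs to be exhibited as a violation of an $\MDS(\ell)$ intersection condition using at most $\ell$ sets $A_i$, and one must ensure that no parity of dual points is exhausted without producing a new constraint. The combinatorial machinery of the generalized Hall theorem (Theorem~\ref{thm:ell-hall}) together with the partition characterization (Lemma~\ref{lem:hall-part}) should furnish exactly the $\ell$ sets needed in each case. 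As a fallback, one could bypass the projective-geometry argument entirely by invoking the correspondence between higher-order MDS codes and maximally recoverable tensor codes (Proposition~\ref{prop:mrtc_mdsell}) and importing the field-size lower bounds established in that setting.
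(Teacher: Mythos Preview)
The paper does not contain a proof of this proposition at all: it is quoted verbatim as Corollary~4.2 of \cite{bgm2021mds} and used as a black box (to derive Corollary~\ref{prop:lb_ldmds}). So there is nothing in the present paper to compare your argument against.

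That said, your sketch has real gaps that you should be aware of. First, the duality reduction to $k \le n-k$ does not go through as written. The paper explicitly notes that $\MDS(\ell)$ is \emph{not} preserved under duality for $\ell \ge 4$, and Theorem~\ref{thm:main-mds} does not give a symmetric statement: from $C$ being $\MDS(\ell)$ you learn that $C^{\perp}$ is $\LDMDS(\le \ell-1)$, but converting that back to an $\MDS$-type condition on $C^{\perp}$ via the same theorem just returns you to $C$. You would be invoking the very corollary (Corollary~\ref{prop:lb_ldmds}) that is \emph{derived from} the proposition you are trying to prove. Second, your ``generic projection'' to cut down to dimension $k^\ast$ is delicate precisely because we are over a finite field whose size we are trying to lower-bound; there may be no projection that preserves all of the $\MDS(\ell)$ conditions simultaneously, and asserting one exists is close to assuming the conclusion. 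Third, the arc step is where the real work hides: you would need every $k^\ast$-tuple of the $\binom{n}{k^\ast-1}$ dual hyperplane-points to be independent, and each such dependence must be witnessed by an $\MDS(\ell)$ violation using at most $\ell$ sets $A_i \subseteq [n]$. It is not clear that Lemma~\ref{lem:hall-part} or Theorem~\ref{thm:ell-hall} hand you such sets for an arbitrary $k^\ast$-tuple of $(k^\ast-1)$-subsets, and you have not indicated how to produce them.

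Your fallback---passing through Proposition~\ref{prop:mrtc_mdsell} to maximally recoverable tensor codes and quoting field-size lower bounds from that literature---is in fact closer to how \cite{bgm2021mds} obtains the result, and is the honest route here.
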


As an immediate corollary of Theorem~\ref{thm:main-mds}, we have the following:

\begin{corollary}
\label{prop:lb_ldmds} If $C$ is an $\change{[n,k]}$-code over $\F$ which is $\LDMDS(\le L)$, then
  \[ |\F| \gtrsim_L n^{\min\{k-1,n-k-1,L\}}.
  \]
\end{corollary}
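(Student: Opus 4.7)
The plan is to derive this bound by combining the duality equivalence in Theorem~\ref{thm:main-mds} with the known field size lower bound for $\MDS(\ell)$ codes in Proposition~\ref{prop:lb_mdsell}.

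First, I would invoke Theorem~\ref{thm:main-mds} applied to the dual code $C^{\perp}$ (in place of the ``$C$'' appearing in that theorem statement). Since $(C^{\perp})^{\perp} = C$, the equivalence (a) $\Leftrightarrow$ (b) tells us that $C^{\perp}$ is $\MDS(L+1)$ if and only if $C$ is $\LDMDS(\le L)$. Thus the hypothesis that $C$ is $\LDMDS(\le L)$ immediately translates into the statement that $C^{\perp}$ is an $(n, n-k)$-code that is $\MDS(L+1)$.

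Next, I would apply Proposition~\ref{prop:lb_mdsell} directly to $C^{\perp}$. With parameters $n' = n$, $k' = n-k$, $\ell = L+1$, that proposition yields
\[
  |\F| \gtrsim_{L+1} n^{\min\{k', n'-k', L+1\} - 1} = n^{\min\{n-k,\, k,\, L+1\} - 1} = n^{\min\{n-k-1,\, k-1,\, L\}},
\]
which is exactly the stated bound (noting that $\gtrsim_{L+1}$ and $\gtrsim_{L}$ coincide up to the implicit constant hidden in the notation).

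There is essentially no obstacle here: the substantive work has already been done in establishing Theorem~\ref{thm:main-mds} (which gives the duality bridge between $\LDMDS$ and $\MDS(\ell)$) and in the field size lower bound of \cite{bgm2021mds}. The only thing to check carefully is the bookkeeping of parameters: the dimension of $C^{\perp}$ is $n-k$, so the roles of $k$ and $n-k$ are swapped when invoking Proposition~\ref{prop:lb_mdsell}, and the $\ell$-index shifts from $L$ to $L+1$, producing the final exponent $\min\{k-1, n-k-1, L\}$.
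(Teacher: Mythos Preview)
Your proposal is correct and follows exactly the paper's approach: the paper states this result as ``an immediate corollary of Theorem~\ref{thm:main-mds}'' and Proposition~\ref{prop:lb_mdsell}, and your dualization-and-parameter-substitution argument is precisely what that entails. The bookkeeping (swapping $k \leftrightarrow n-k$ and shifting $\ell = L+1$) is handled correctly.
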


In particular, if $C$ is a code of constant rate (thus both $k$ and $n-k$ tend to infinity), we have that $|\F| \gtrsim_L n^L$. In this constant-rate regime, our lower bound significantly improves upon the lower bound from \cite{roth2021higher} which says that $|\F| \gtrsim_L \lp\frac{n}{n-k}\rp^{\min\{k-1,L\}}$.

\subsubsection{Maximally \change{R}ecoverable \change{T}ensor \change{C}odes}\label{subsec:mrtc}

Gopalan et al. \cite{Gopalan2016} introduced the notion of maximally recoverable (MR) codes with grid-like topologies. These codes have applications in distributed storage in datacenters, where they offer a good trade-off between low latency, durability and storage efficiency \cite{huang2012erasure}.  An important special case of such codes are MR tensor codes. A code $C$ is a \emph{$(m,n,a,b)$-tensor code} if it can be expressed as $C_{col} \otimes C_{row}$, where $C_{col}$ is a $\change{[m,m-a]}$ code and $C_{row}$ is a $\change{[n,n-b]}$ code. In other words, the codewords of $C$ are $m\times n$ matrices where each row belongs to $C_{row}$ and each column belongs to $C_{col}.$ There are $a$ parity checks per column and $b$ parity checks per row. For example, the f4 storage architecture of Facebook (now Meta) uses an $(m=3,n=14,a=1,b=4)$-tensor code~\cite{muralidhar2014f4}. Such a code $C$ is \emph{maximally recoverable} if it can recover from every erasure pattern $E \subseteq [m] \times [n]$ which can be recovered from by choosing a generic $C_{col}$ and $C_{row}$. Thus MR tensor codes are optimal codes since they can recover from any erasure pattern that is information theoretically possible to recover from. MR tensor codes are poorly understood with no known explicit constructions. Even a characterization of which erasure patterns are correctable by an $(m,n,a,b)$-MR tensor code is not known except in the case of $a=1$~\cite{Gopalan2016}.  \cite{bgm2021mds} defined $\MDS(\ell)$ codes motivated by the following proposition.
\begin{proposition}[\cite{bgm2021mds}]
\label{prop:mrtc_mdsell} Let $C=C_{col}\otimes C_{row}$ be an $(m,n,a=1,b)$-tensor code. Here $a = 1$ and thus $C_{col}$ is a parity check code. Then $C$ is maximally recoverable if and only if $C_{row}$ is $\MDS(m)$.
\end{proposition}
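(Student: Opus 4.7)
The plan is to characterize codewords of $C$ supported on an erasure pattern via a linear-algebraic condition in $\F^k$ (with $k = n-b$), derive a dimension formula involving $\dim \bigcap_i G_{B_i}$ for certain subsets $B_i \subseteq [n]$, and match both directions of the equivalence to $\MDS(m)$ applied to $C_{row}$. Let $G \in \F^{k \times n}$ be a generator matrix of $C_{row}$ and $v \in \F^m$ a parity-check vector for the MDS code $C_{col}$ (so every $v_i \neq 0$). For an erasure pattern $E \subseteq [m] \times [n]$, write $A_i = \{j : (i,j) \in E\}$ and $B_i = [n] \setminus A_i$, and expand row $i$ of a putative codeword as $X_i = u_i^\top G$. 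The row-support condition $\supp X_i \subseteq A_i$ is equivalent to $u_i \in U_i := G_{B_i}^\perp$, and a short calculation (using $u_i \in G_{B_i}^\perp$ to kill $g_j$ for $j \in B_i$) collapses the $n$ column-parity equations $v^\top X^{(j)} = 0$ into the single $\F^k$-valued constraint $\sum_i v_i u_i = 0$.

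Since all $v_i \neq 0$, the image of $(u_i) \mapsto \sum v_i u_i$ is $\sum_i U_i = \bigl(\bigcap_i G_{B_i}\bigr)^\perp$, and rank-nullity gives
\begin{equation*}
\dim\{X \in C : \supp X \subseteq E\} \;=\; \sum_{i=1}^m \dim U_i \;-\; k \;+\; \dim \bigcap_{i=1}^m G_{B_i}.
\end{equation*}
Because $C_{row}$ is MDS, $\dim U_i = (|A_i| - b)^+$ depends only on $|A_i|$, so the entire dependence on $C_{row}$ is captured by $\dim \bigcap_i G_{B_i}$; replacing $G$ by a generic $W$ gives the analogous generic formula, and semi-continuity yields $\dim \bigcap_i G_{B_i} \ge \dim \bigcap_i W_{B_i}$. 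The forward direction $\MDS(m) \Rightarrow {}$MR is then immediate: $\MDS(m)$ forces $\dim \bigcap_i G_{B_i} = \dim \bigcap_i W_{B_i}$ for every tuple with $|B_i| \le k$ (the $|B_i| > k$ case being degenerate, since then $G_{B_i} = W_{B_i} = \F^k$ and drops from the intersection), so specific and generic codeword-space dimensions agree for every $E$ and in particular vanish simultaneously.

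For the reverse direction the main obstacle is that MR only directly constrains erasure patterns where the generic codeword-space dimension is zero, i.e., configurations where $\dim \bigcap_i W_{B_i}$ saturates the singleton-partition bound $\sum |B_i| - (m-1)k$ appearing in Theorem~\ref{thm:main-dim}. I would argue that any $\MDS(m)$ violation $\dim \bigcap_i G_{B_i} > \dim \bigcap_i W_{B_i}$ can be promoted to one at such a critical configuration: using Theorem~\ref{thm:main-dim} to identify an optimal partition $\cP^* = P_1^* \sqcup \cdots \sqcup P_s^*$ realizing $\dim \bigcap_i W_{B_i}$, replace $(B_1, \ldots, B_m)$ by the merged collection $\bigl(\bigcap_{i \in P_j^*} B_i\bigr)_{j \in [s]}$; on this smaller collection the singleton partition becomes optimal (making it critical on the generic side) while the gap between $\dim \bigcap G$ and $\dim \bigcap W$ persists, yielding an explicit erasure pattern that is generically correctable but not $C$-correctable. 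Verifying this reduction cleanly (possibly iterating to reduce $s$ further, or invoking induction on $m$) is the main technical step, after which MR fails, contradicting the hypothesis and forcing $C_{row}$ to be $\MDS(m)$.
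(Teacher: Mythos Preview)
The paper does not give its own proof of this proposition; it is quoted from \cite{bgm2021mds}. That said, your dimension formula
\[
\dim\{X \in C : \supp X \subseteq E\} \;=\; \sum_{i=1}^m \dim G_{B_i}^{\perp} \;-\; k \;+\; \dim \bigcap_{i=1}^m G_{B_i}
\]
is exactly the formula underlying the proof in \cite{bgm2021mds}, and your forward direction ($\MDS(m)\Rightarrow{}$MR) is correct and matches theirs.

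For the reverse direction, your merging idea is viable, but the step you flag as ``the main technical step'' needs a specific observation you do not supply. The worry is that merging $B_i \mapsto B'_j = \bigcap_{i\in P^*_j} B_i$ only gives $\bigcap_j G_{B'_j} \subseteq \bigcap_i G_{B_i}$, so a priori the gap could collapse. What saves you is the following: if $\cP^*$ is an \emph{optimal} partition in Theorem~\ref{thm:main-dim}, then for each part $P^*_j$ one has $\bigl|\bigcup_{i\in P^*_j} B_i\bigr|\le k$ (otherwise splitting $P^*_j$ in two would strictly increase the objective, contradicting optimality). Since $G$ is MDS, the at-most-$k$ columns indexed by $\bigcup_{i\in P^*_j} B_i$ are independent, whence $\bigcap_{i\in P^*_j} G_{B_i}=G_{B'_j}$, and therefore $\bigcap_j G_{B'_j}=\bigcap_i G_{B_i}$; the gap persists exactly. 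You also need to handle separately the case that $C_{row}$ is not MDS (so that $\dim G_{B_i}^\perp$ might differ from $k-|B_i|$); this is easy, but you skip it.

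There is a shorter route that avoids this reduction entirely: Lemma~\ref{lem:null} (which is Lemma~3.1 of \cite{bgm2021mds}) says that if $C_{row}$ is MDS but not $\MDS(m)$, then there already exist $B_1,\dots,B_m$ with $|B_i|\le k$, $\sum_i |B_i|=(m-1)k$, $\bigcap_i W_{B_i}=0$ generically, and $\bigcap_i G_{B_i}\neq 0$. Plugging these directly into your dimension formula gives generic codeword-space dimension $0$ and $C$-codeword-space dimension $>0$ for the corresponding erasure pattern, so MR fails immediately—no partition gymnastics required. This is essentially how the argument runs in \cite{bgm2021mds}, and it does not need Theorem~\ref{thm:main-dim}.
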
 Thus, better understanding higher order MDS codes is essential to understanding maximally recoverable tensor codes. We hope that Theorem~\ref{thm:main-mds} which shows the importance of higher order MDS codes to various areas of coding theory, will help in designing explicit $\MDS(\ell)$ codes and thus explicit maximally recoverable tensor codes. The following is a direct corollary of Proposition~\ref{prop:mrtc_mdsell} and Corollary~\ref{cor:main-RS}.

\begin{corollary} The tensor product of a parity check code and a generic Reed-Solomon code is maximally recoverable.
\end{corollary}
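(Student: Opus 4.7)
The plan is simply to combine the two results cited immediately above the statement. Let $C = C_{col} \otimes C_{row}$, where $C_{col}$ is an $(m, m-1)$ parity check code and $C_{row}$ is a generic $(n, n-b)$ Reed-Solomon code. Since a parity check code is precisely the $a=1$ case, the setup matches the hypothesis of Proposition~\ref{prop:mrtc_mdsell}, which reduces maximal recoverability of $C$ to showing that $C_{row}$ is $\MDS(m)$.

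Next, I would invoke Corollary~\ref{cor:main-RS}, which asserts that a generic Reed-Solomon code is $\MDS(\ell)$ for all $\ell \ge 1$. In particular, taking $\ell = m$, the generic Reed-Solomon code $C_{row}$ is $\MDS(m)$. Feeding this back into Proposition~\ref{prop:mrtc_mdsell} yields that $C = C_{col} \otimes C_{row}$ is maximally recoverable, which is exactly the claim.

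The main thing to be careful about — though it is really bookkeeping rather than a genuine obstacle — is that the notion of ``generic'' is used consistently across the two statements. In both cases it refers to choosing the evaluation points $\alpha_1, \dots, \alpha_n$ of the Reed-Solomon code generically, so any field over which Corollary~\ref{cor:main-RS} applies will also furnish a valid $C_{row}$ for Proposition~\ref{prop:mrtc_mdsell}. If one wants a concrete finite-field statement, the quantitative version (Theorem~\ref{thm:main-ld-random}, or the refined bound in Remark~\ref{rem:bgm_construction}) can be used instead of the purely generic version, giving an explicit lower bound on the field size sufficient for the tensor code to be maximally recoverable with positive probability.
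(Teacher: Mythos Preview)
Your proposal is correct and matches the paper's approach exactly: the paper states this as ``a direct corollary of Proposition~\ref{prop:mrtc_mdsell} and Corollary~\ref{cor:main-RS}'' without further proof, and you have simply spelled out that combination.
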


\subsubsection{Efficient Computation of Generic Intersection Dimension}

Although Theorem~\ref{thm:main-dim} produces a closed formula for generic intersection dimension, there are exponentially many (in \change{$\ell$}) partitions to consider. This can make computing the dimension cumbersome. We give \change{two} deterministic\footnote{We remark that a simple randomized polynomial-time algorithm to compute generic intersection dimension is to randomly sample $W$ over a large enough field and compute the intersection dimension directly by a rank computation \cite{bgm2021mds}.} polynomial-time algorithm\change{s for computing the generic intersection dimension.}

\begin{restatable}{theorem}{polytime}\label{thm:poly-time} Given $k, n$ and $A_1 \dots A_\ell \subseteq [n]$, we can compute the intersection dimension $\dim (W_{A_1} \cap \cdots \cap W_{A_{\ell}}) $ for generic $W$ in $\change{\poly(n,\ell)}$-time.
\end{restatable}
\begin{remark} Explicitly computing the formula found by Theorem~\ref{thm:main-dim} takes $\change{nk\exp(\tilde{O}(\ell))}$ time, as there are $\exp(\tilde{O}(\ell))$ partitions of $[\ell]$~(e.g., \cite{de1981asymptotic}). Thus, Theorem~\ref{thm:poly-time} is superior when $\ell \ge \operatorname{polylog}(\change{n})$.
\end{remark}

\change{The first algorithm, presented in Section~\ref{app:compute}, uses techniques in combinatorial optimization, including finding an integral solution to a linear-programming relaxation via submodular optimization. The second algorithm, presented in Appendix~\ref{sec:inv-theory}, utilizes techniques in \emph{invariant theory}. In particular, the algorithm reduces to computing the \emph{non-commutative rank} of a suitable linear matrix. The techniques used are based on recent breakthroughs in invariant theory \cite{GGOW,IQS18,IQS17,DM17} and} are part of a larger ambitious program of Mulmuley \cite{GCTV} that attempts to approach central problems in complexity via orbit problems in invariant theory that has seen much progress over the last decade, see \cite{BFGOWW} and references there-in.

\subsection{Subsequent Work and Open Questions}\label{subsec:future}

Since this paper was original posted, there have been a number of exciting developments in the theory of higher order MDS codes and related questions.

\begin{itemize}
\item The works of Guo-Zhang~\cite{guo2023randomly} and Alrabiah-Guruswami-Li~\cite{alrabiah2023randomly} showed that random Reed-Solomon codes achieve list-decoding capacity over fields of size $O_{\eps}(n^2)$ and $O_{\eps}(n)$, respectively. The latter paper also proves random linear codes achieve capacity over fields of size $\exp(O(1/\eps^2))$. They show that in order to achieve list-decoding capacity, it suffices to have only a sparse subset of the higher order MDS conditions to hold. With clever combinatorial arguments and the fact (by Corollary~\ref{cor:main-RS}) that all these determinants generically non-vanish, they show that this sparsity allows for one to have exponential gains in success probability when compared to the Schwartz-Zippel analysis in Section~\ref{sec:ld-mds}. \change{Since then, this result has also been extended to AG codes~\cite{brakensiek2023generalized,bdgz2023}, where they show that random AG codes of field size $\exp(O(1/\eps^2))$ achieve list-decoding capacity, matching the bound on random linear codes.}

\item The work of Brakensiek-Dhar-Gopi~\cite{brakensiek2023improved} prove that a $\change{[n,k]}$-$\MDS(3)$ code require field size at least $\binom{n-2}{k-1}+1$, which is exponential in $n$ in the constant-rate regime. This lower bound supersedes nearly all previously known lower bound for higher order MDS codes. This was later extended by Alrabiah-Guruswami-Li~\cite{alrabiah2023ag} to show that to get $\eps$-close to list-decoding capacity, the code (even nonlinear!) requires field size at least $\exp(\Omega(1/\eps))$. They show also show this bound is essentially tight for non-linear codes.

\change{
\item The connections between list-decoding and GM-MDS have spurred further generalizations of GM-MDS. The work of Brakensiek-Dhar-Gopi~\cite{brakensiek2023generalized} extended the GM-MDS theorem from Reed-Solomon codes to arbitrary polynomial codes. Further, Ron-Zewi, Venkitesh, and Wootters~\cite{ron2024efficient} prove a generalization of GM-MDS to ``polynomial ideal codes'' which captures families of codes like folded Reed-Solomon codes which are codes defined over an extension field, but they are only linear in the base field. Their work also gives efficient list-decoding algorithms for random polynomial ideal codes. However, efficient list decoding of random Reed-Solomon codes is still open (see Question~\ref{ques:rs}).

\item The work of Guo-Xing-Yuan-Zhang~\cite{guo2024random} has generalized the notion of higher order MDS codes, where the metric is Hamming distance, to rank metric codes. In this setting, they construct analogues of $\GZP(\ell), \MDS(\ell), and \LDMDS(\ell)$ called $\operatorname{GKP}(\ell),\operatorname{MRD}(\ell),\operatorname{LDMRD}(\ell),$ respectively, and show that the analogue of Theorem~\ref{thm:main-mds} holds. They also prove a generalized GM-MDS theorem that generic Gabidulin are $\operatorname{GKP}(\ell)$ for all $\ell \ge 2$; and thus generalize Theorem~\ref{thm:gabidulin} by showing that generic Gabidulin codes achieve list decoding capacity with respect to the rank metric.
}
\end{itemize}

Even with all this progress, there are still a number of exciting directions that warrant further exploration.

\subparagraph{Constructions of \change{H}igher-order MDS \change{C}odes.} Despite knowing that generic Reed-Solomon codes are higher-order MDS, we do not know of any good explicit constructions of such higher-order MDS codes in general.\footnote{Note that one can always get an ``explicit'' construction over doubly exponential size fields by choosing $\alpha_i$ to be in a degree $k$ field extension over $\F_2(\alpha_1,\dots,\alpha_{i-1})$ in (\ref{eq:Vandermonde_intro}) \cite{shangguan2020combinatorial}. A more sophisticated construction with better field size (although still doubly exponential) appears in \cite{brakensiek2023improved}.}

As previously mentioned, Theorem~\ref{thm:main-ld-random} and the results of \cite{bgm2021mds} imply that $\MDS(\ell)$ codes exist over fields of size $n^{O(\min\{k,n-k\}\ell)}$.

   \change{Conversely, $\MDS(\ell)$ codes require fields of size $\Omega_{\ell}(n^{\min\{\ell,k,n-k\}-1})$~\cite{bgm2021mds} (see Proposition~\ref{prop:lb_mdsell}). Later on, Brakensiek-Dhar-Gopi~\cite{brakensiek2023improved} showed that even $\MDS(3)$ codes require fields of size $\ge \binom{n-2}{k-1}+1$.}
     
   The simplest non-trivial case where we don't know the correct field size is when $k=3,\ell=3$. The lower bounds imply that, we need field size at least $q=\Omega(n^2)$.
\begin{question} \change{More} concretely, can we construct an explicit $\MDS(3)$ $\change{[n,3]}$-code over a field of size $O(n^2)$?
\end{question}

Note that \cite{bgm2021mds} gives $O(n^2)$ field size constructions for notions slightly weaker than $\MDS(3)$. \cite{roth2021higher} gives an explicit construction of size $O(n^{32})$ and a non-explicit construction over fields of size $O(n^5)$. This was improved to an explicit construction over fields of size $O(n^3)$ by ~\cite{brakensiek2023improved}. More generally~\cite{roth2021higher} gives an explicit construction of $\change{[n,k]}$-$\MDS(3)$ code over fields of size $O(n^{k^{2k}}).$

\subparagraph{Maximally \change{R}ecoverable \change{T}ensor \change{C}odes \change{W}hen $a,b\ge 2.$} We saw that $\MDS(\ell)$ codes arise naturally from studying $(m,n,a,b)$-MR tensor codes when $a=1$. It would be interesting to study, what properties of the row and column codes would be needed to construct MR tensor codes for $a,b\ge 2.$

   As previously mentioned, the work of~\cite{Gopalan2016} fully characterized the correctable erasure patterns for a $(m,n,a,b)$ tensor code when $a = 1$. Theorem~\ref{thm:main-dim}, when combined with the results of~\cite{bgm2021mds}, fully characterizes the linearly independent patterns when $a = 1$. We hope that these results lead to insights which resolve question of characterizing correctable erasure patterns in the general case. More precisely,

\begin{question} Given generic vectors $u_1, \hdots, u_m \in \F^{m-a}$ and $v_1, \hdots, v_n \in \F^{n-b}$, for which $E \subseteq [m] \times [n]$ is $\{u_i \otimes v_j : (i,j) \in E\}$ of full rank? For which $E$ are they linearly independent?
\end{question}

\change{The subsequent work of Brakensiek-Dhar-Gao-Gopi-Larson~\cite{brakensiek2024rigidity} has found a characterization in the case that $a = b = 2$. However, the computational complexity of determining whether $\{u_i \otimes v_j : (i,j) \in E\}$ has full rank lies in $\mathsf{NP}$.}

\subparagraph{Efficient \change{L}ist-decoding of $\LDMDS(\le L)$ \change{C}odes.} As previously mentioned, the result of Guruswami-Sudan~\cite{Guruswami1998} shows that any $\change{[n,k]}$-Reed-Solomon code of rate $R$ can be efficiently list-decoded up to radius $\rho = 1-\sqrt{R}$. A hardness result by Cheng and Wan~\cite{cheng2007list} states that it is discrete-logarithm-hard to decode up to radius $\hat{\rho} := 1 - \hat{g}/n$, where
\[ \hat{g} = \min \left\{g : \binom{n}{g}|\F|^{k-g} \le 1\right\}
\] However, this result only applies for small field sizes. Since we now know that $|\F| \ge \binom{n-2}{k-1}$, we have that $\hat{g} = k + \Theta(1)$, which is essentially the list-decoding capacity bound of radius $1-R$. On the other hand, $\LDMDS(\le L)$ are list decodable only up to list-decoding radius $\rho=1-R - \frac{1-R}{L+1}$ with list size $L$.  Thus, we believe the following is open in general.

\begin{question}\label{ques:rs} Assume $C$ is a $\change{[n,k]}$-Reed-Solomon code which is $\LDMDS(\le L)$. Given $y \in \F^n$, can one efficiently list all $c \in C$ with distance from $y$ at most $\frac{L}{L+1}(n-k)$?
\end{question}

\change{The best progress toward answering this question so far is by Ron-Zewi, Venkitesh, and Wootters~\cite{ron2024efficient}, who give efficient list-decoding algorithms for random polynomial ideal codes.}

\subsection*{Notation}
\label{sec:prelims}

A linear $\change{[n,k]}$-code $C$ is a $k$-dimensional subspace of $\F^n$.\footnote{In this paper, we will only work with linear codes. So unless specified otherwise, a $\change{[n,k]}$-code is always a linear code.} A matrix $\change{G \in \F^{k \times n}}$ is a generator matrix of $C$, if the rows of $G$ are a basis of $C$. A matrix $\change{H \in \F^{(n-k)\times n}}$ is called a parity check matrix for $C$ if $C=\{x:Hx=0\}$. The dual code $C^\perp$ is defined as $C^\perp=\{y:\inpro{x}{y}=0\ \forall x\in C\}.$ $C^\perp$ is a $\change{[n,n-k]}$-code and its generator matrix is the parity check matrix of $C.$

We let $[n]$ denote the set $\{1,2, \hdots, n\}$. Given a collection of sets $A_1, \hdots, A_k \subseteq [n]$, and a nonempty set $I \subseteq [k]$, we let $A_I = \bigcap_{i \in I} A_i$.

Let $V$ be a $k \times n$ matrix. For all $i \in [n]$, let $v_i$ denote the $i$th column of $V$. Given $A \subseteq [n],$ we let $V_A = \Span\{v_j : j \in A\}$. This notation should not be confused with the $A_I$ notation.

\subsection*{Organization}

The remainder of the paper is organized as follows. In Section~\ref{sec:gzp}, we discuss generic zero patterns in more detail, particularly how they relate to the GM-MDS conjecture. In Section~\ref{sec:gzp-mds}, we prove that $\GZP(\ell)$ and $\MDS(\ell)$ are equivalent. In the process, we prove Theorem~\ref{thm:main-dim}. In Section~\ref{sec:ld-mds}, we prove that $\MDS(\ell)$ and $\LDMDS(\le \ell-1)$ are equivalent, up to duality. This completes the proofs of Theorem~\ref{thm:main-ld} and Theorem~\ref{thm:main-mds}. We also prove Conjecture 5.7 from \cite{shangguan2020combinatorial}. In Section~\ref{app:compute}, we give a deterministic polynomial-time algorithm for computing the generic intersection dimension. In \change{Appendix}~\ref{sec:inv-theory}, we show that Theorem~\ref{thm:main-dim} also holds in the non-commutative setting, yielding an alternative deterministic polynomial-time algorithm for generic intersection dimension.

\subsection*{Acknowledgments}

We thank Venkatesan Guruswami, Sergey Yekhanin, and June Huh for valuable discussions and encouragement. \change{We thank Zihan Zhang for pointing out an error in the padding argument in the proof of Lemma~\ref{lem:hall-part}.} We thank anonymous reviewers for numerous helpful comments \change{and corrections}.

\section{Generic Zero Patterns (GZPs)}
\label{sec:gzp}

Recall that a zero pattern $\cS=(S_1,S_2,\dots,S_k)$ is called a generic zero pattern if
\begin{equation}
	\label{eqn:GZP_hall} \left|\bigcap_{i\in I} S_i\right|\le k-|I| \ \ \forall I\subset [k].
\end{equation} Also recall that (\ref{eqn:GZP_hall}) is equivalent to the fact that a generic matrix with zero pattern $\cS$ has all of its $k \times k$ minors non-zero.  We will now prove a generic matrix can attain any generic zero pattern.
\begin{proposition}\label{prop:our-gm-mds} A generic $k\times n$ matrix can attain any $k\times n$ generic zero pattern. In other words, generic codes are $\GZP(\ell)$ for all $\ell.$
\end{proposition}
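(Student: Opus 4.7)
The plan is to give an explicit construction of the invertible $M$ using generalized cross-products of columns of $W$, and then verify invertibility by evaluating the resulting determinant polynomial at a carefully chosen specialization of $W$.

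First, I would reduce to the case $|S_i|=k-1$ for all $i$ by invoking the generalized Hall theorem of Dau et al.~\cite{dau2014gmmds} recalled in the introduction: any generic zero pattern extends to one with all $|S'_i|=k-1$, and since $S_i\subseteq S'_i$, attaining $\cS'$ automatically attains $\cS$. Under this assumption, for generic $W$ the $k-1$ columns of $W$ indexed by $S_i$ are linearly independent, so $(W_{S_i})^{\perp}\subseteq\F^k$ is one-dimensional. I would take $m_i$ to be the generalized cross-product of these columns, whose $l$-th coordinate is $(-1)^{l+1}\det(W|_{[k]\setminus\{l\},\,S_i})$. The matrix $M$ with rows $m_1,\ldots,m_k$ then automatically satisfies $(MW)_{ij}=0$ for $j\in S_i$, so the remaining task is to verify that $\det M$ is not identically zero as a polynomial in the entries of $W$.

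For this verification I would evaluate at a specific $W^{**}$ obtained by choosing a generic matrix that itself has zero pattern $\cS$, i.e., $(W^{**})_{ij}=0$ for $j\in S_i$ and the remaining entries are generic. Every column of $W^{**}$ indexed by $S_i$ then lies in the coordinate hyperplane $\{v\in\F^k:v_i=0\}$, so the cross-product $m_i$---being perpendicular to all of them in $\F^k$---is a scalar multiple $c_i\,e_i$ of the $i$-th standard basis vector, with $c_i=\pm\det(W^{**}|_{[k]\setminus\{i\},\,S_i})$. Consequently at $W=W^{**}$ we have $M=\operatorname{diag}(c_1,\ldots,c_k)$, and $\det M=\prod_i c_i$.

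It remains to show each $c_i$ is a nonzero polynomial in the free entries of $W^{**}$. The submatrix $W^{**}|_{[k]\setminus\{i\},\,S_i}$ carries the induced zero pattern $\widetilde{\cS}^{(i)}=(S_i\cap S_l)_{l\ne i}$, and a direct verification shows that (\ref{eqn:GZP_hall}) for $\cS$ forces Hall's condition for $\widetilde{\cS}^{(i)}$ on $(k-1)\times(k-1)$ matrices: for any $I\subseteq[k]\setminus\{i\}$, $|\bigcap_{l\in I}(S_i\cap S_l)|=|S_{I\cup\{i\}}|\le k-|I\cup\{i\}|=(k-1)-|I|$. By the standard equivalence between (\ref{eqn:GZP_hall}) and generic non-singularity of matrices with a prescribed zero pattern (recalled immediately before the proposition and following from Hall's marriage theorem), each $c_i$ is nonzero on a Zariski-open dense set of $W^{**}$'s. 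Hence $\det M\not\equiv 0$ and a generic $W$ yields an invertible $M$; taking a finite intersection over the finitely many generic zero patterns of $k\times n$ matrices yields a single generic $W$ that attains every one. The main technical point is the reduction of the non-vanishing of each diagonal scalar $c_i$ to a smaller instance of the Hall equivalence, which is what allows the argument to close elementarily without invoking the full GM-MDS theorem.
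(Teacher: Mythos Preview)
Your proposal is correct and follows essentially the same route as the paper's proof: reduce to $|S_i|=k-1$ via the generalized Hall theorem of \cite{dau2014gmmds}, take $m_i$ to be (a generator of) $W_{S_i}^{\perp}$ via cofactors, and verify $\det M\not\equiv 0$ by specializing $W$ to a matrix $W^{**}$ that itself carries the zero pattern $\cS$, whereupon $M$ becomes diagonal. The only cosmetic difference is in the last step: the paper argues that $W^{**}$ is generically MDS (by the Hall equivalence), hence the $k-1$ columns indexed by $S_i$ are independent and the cross-product $m_i$ is a nonzero multiple of $e_i$; you instead verify directly that the $(k-1)\times(k-1)$ submatrix $W^{**}|_{[k]\setminus\{i\},S_i}$ inherits a generic zero pattern, which is the same Hall argument localized to that submatrix.
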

\begin{proof} Let $\change{V \in \K^{k\times n}}$ be a generic matrix, which is the generator matrix of a generic $\change{[n,k]}$-code.  Let $\cS=(S_1,\dots, S_{k})$ be a generic zero pattern for $k\times n$ matrices, i.e. $\cS$ satisfies (\ref{eqn:GZP_hall}). We want to show that there exists some invertible matrix $\change{M \in \K^{k \times k}}$ such that $MV$ has zeros in $\cup_{i\in [k]} \{i\}\times S_i.$ By the generalized Hall's theorem (Theorem \ref{thm:gen-hall}), there exists $S_i'\subset [n]$ such that $S_i' \supset S_i$, $|S_i'|=k-1$ and $(S_1',S_2',\dots,S_k')$ satisfy (\ref{eqn:GZP_hall}). Therefore, WLOG we can assume that $|S_i|=k-1$ for all $i.$ Let $v_1,v_2,\dots,v_n$ be the columns of $V$. Let $m_1,m_2,\dots,m_k$ be the rows of $M$. For $MV$ to have zeros in $i\times {S_i}$, it must be that $\inpro{m_i}{v_j}=0$ for all $j\in S_i.$ Therefore $m_i = V_{S_i}^\perp$ (up to scaling), note that $V_{S_i}^\perp$ is a one-dimensional space since $|S_i|=k-1$. Moreover the entries of $m_i=V_{S_i}^\perp$ can be expressed as $(k-1)\times (k-1)$ minors of $V_{S_i}$ (with some $\pm$ signs) which are some polynomials in the entries of $V$. Therefore $M$ is completely determined (up to scaling of rows) by $V$, and the entries of $M$ are some polynomials in the entries of $V$. Now we just need to prove that $\det(M)$ which is a polynomial in the entries of $V$ is not identically zero.

To prove this, we give a particular setting of
$V=V^* \in \F^{k\times n}$ for which $\det(M)\ne 0$, for any large
enough field $\F$ (of any characteristic). Set all the entries
$V^*_{ij}=0$ whenever $j\in S_i$ and set the remaining entries
randomly from $\F$. Since $\cS$ is a generic zero pattern, $V^*$ is an
MDS matrix with high probability by Hall's matching theorem. Therefore
$m_i=(V^*_{S_i})^\perp=e_i$ (up to scaling), where $e_i$ is the $i$th
standard basis vector. Therefore $M=I_{k\times k}$ is the $k\times k$
identity matrix (upto scaling of rows), which has non-zero
determinant.
\end{proof}

\subsection{Generalized Hall's Theorem and Maximal GZPs}\label{subsec:gen-hall}

While formulating the GM-MDS conjecture, \cite{dau2014gmmds} proved a variant of a Generalized Hall's Theorem and used it to show that any generic zero pattern can be extended to a maximal generic zero pattern.

\begin{theorem}[Generalized Hall's Theorem--modern statement~\cite{dau2014gmmds}]\label{thm:gen-hall} Let $\cS = (S_1, \hdots, S_k)$ be a generic zero pattern for $\change{[n,k]}$-codes. Then, there exists a generic zero pattern $\cS' = (S'_1, \hdots , S'_k)$ such that for all $i \in [k]$, $|S'_i| = k-1$ and $S_i \subseteq S'_i$.
\end{theorem}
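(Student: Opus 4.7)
I plan to proceed by greedy one-element extensions. If $|S_i|=k-1$ for every $i$ we are done; otherwise I will show that any index $i$ with $|S_i|<k-1$ admits some $x\in[n]\setminus S_i$ whose addition to $S_i$ preserves (\ref{eqn:GZP_hall}). Since $\sum_i|S_i|$ strictly increases at each step and is bounded above by $k(k-1)$, this procedure terminates at the desired pattern $\cS'$. All the content of the proof lies in the existence of the extension element $x$.

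Fix $i$ with $|S_i|<k-1$ and suppose for contradiction that every attempted extension of $S_i$ fails. For each $x\in[n]\setminus S_i$ there must then be some $I_x\subseteq[k]$ whose (\ref{eqn:GZP_hall}) inequality is violated after the enlargement; necessarily $i\in I_x$, since subsets of $[k]$ not containing $i$ are unaffected by the change. A short calculation forces $I_x$ to be \emph{tight}, i.e. $|\bigcap_{j\in I_x}S_j|=k-|I_x|$, together with $x\in\bigcap_{j\in I_x\setminus\{i\}}S_j$. Since $\{i\}$ is not tight (else $|S_i|=k-1$), every $|I_x|\ge 2$.

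The key structural input is that the family $\cT$ of tight sets containing $i$ is closed under intersection (and union). Given $I_1,I_2\in\cT$, set $A=\bigcap_{j\in I_1}S_j$ and $B=\bigcap_{j\in I_2}S_j$; then $A\cap B=\bigcap_{j\in I_1\cup I_2}S_j$ and $A\cup B\subseteq \bigcap_{j\in I_1\cap I_2}S_j$, so the identity $|A|+|B|=|A\cap B|+|A\cup B|$ combined with two applications of (\ref{eqn:GZP_hall}) (to $I_1\cup I_2$ and to $I_1\cap I_2$) collapses to equalities when $I_1,I_2$ are tight. Hence $I_1\cap I_2,I_1\cup I_2\in\cT$. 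Iterating gives $F:=\bigcap_{I\in\cT}I\in\cT$, with $|F|\ge 2$ (else $\{i\}\in\cT$, contradicting $|S_i|<k-1$). Every $I\in\cT$ contains $F$, so $\bigcap_{j\in I\setminus\{i\}}S_j\subseteq T:=\bigcap_{j\in F\setminus\{i\}}S_j$; in particular every $x\in[n]\setminus S_i$ lies in $T$, i.e. $[n]=S_i\cup T$.

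A one-line count now closes the argument: using (\ref{eqn:GZP_hall}) on $F\setminus\{i\}$ to bound $|T|\le k-|F|+1$, tightness of $F$ for $|S_i\cap T|=|\bigcap_{j\in F}S_j|=k-|F|$, and $|S_i|\le k-2$, we obtain
\begin{equation*}
n=|S_i|+|T|-|S_i\cap T|\le |S_i|+(k-|F|+1)-(k-|F|)=|S_i|+1\le k-1,
\end{equation*}
contradicting $n\ge k$. The step I expect to require the most care is the submodularity claim on $\cT$: one has to orient the containments $A\cap B=\bigcap_{j\in I_1\cup I_2}S_j$ and $A\cup B\subseteq\bigcap_{j\in I_1\cap I_2}S_j$ correctly and chain them with two uses of (\ref{eqn:GZP_hall}) so that every inequality turns into an equality. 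Once this lattice structure and the canonical set $F$ are in place, the contradiction is essentially immediate.
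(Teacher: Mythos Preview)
Your argument is correct. The greedy extension strategy with the tight-set/uncrossing lemma is a clean and standard way to prove this kind of Hall-type result, and every step you outlined goes through: the submodularity computation is right (note $i\in I_1\cap I_2$, so the intersection is nonempty and (\ref{eqn:GZP_hall}) applies to it), the minimal tight set $F$ is well-defined and has $|F|\ge 2$, and the final inclusion-exclusion count gives the contradiction $n\le k-1$.

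As for comparison with the paper: the paper does \emph{not} prove Theorem~\ref{thm:gen-hall} at all --- it is quoted as a known result from~\cite{dau2014gmmds}. The closest thing the paper proves is the strictly more general Theorem~\ref{thm:ell-hall}, and its method is quite different from yours. There the approach is, for a fixed $i$ with $|A_i|<k-\delta_i$, to pick an arbitrary $k$-subset $T\supseteq A_i$, restrict everything to $T$, apply \emph{classical} Hall's theorem to obtain a system of distinct representatives $t_1,\dots,t_k$ of the complements, and then set $U_1=\bigcap_{j\in J_1}(T\setminus\{t_j\})$ as the enlargement of $A_1$. The verification that $U_1$ still satisfies (\ref{eq:ell-hall}) is a direct chain of inequalities. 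So the paper's route is ``reduce to classical Hall via a matching,'' whereas yours is ``uncross tight witnesses to find a minimal obstruction and count.'' Your argument is more self-contained (it does not invoke Hall's theorem as a black box) and gives a slightly finer structural picture of why no obstruction can exist; the paper's route, on the other hand, generalizes more readily to the order-$\ell$ setting of Theorem~\ref{thm:ell-hall}, where multiplicities $\delta_i$ enter and the single-element greedy step is less natural.
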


\begin{remark} Note that Theorem~\ref{thm:gen-hall} is a generalization of the classic Hall's theorem about existence of a bipartite matching when $k=n.$ In this case, if we form a bipartite graph between $[k]$ and $[n]$ where $i\in [k]$ has neighborhood $\bar{S_i}$, (\ref{eqn:GZP_hall}) becomes the Hall's matching condition that the neighborhood $N(I)$ of any set $I$ satisfies $|N(I)|\ge |I|$.
\end{remark}

If we apply Theorem~\ref{thm:gen-hall} to an order $\ell$ pattern, the resulting pattern $\cS'$ will not be order $\ell$ in general (in fact, it will be order $k$). To extend an order $\ell$ generic zero pattern to a maximal order $\ell$ generic zero pattern, we need a further generalization of the generalized Hall's theorem (Theorem~\ref{thm:ell-hall}). First, we state an equivalence.

\begin{proposition}\label{prop:gzp-ell} Assume $n \ge k$. Let $A_1, \hdots, A_{\ell} \subseteq [n]$ of size at most $k$. The following are equivalent.
  \begin{enumerate}
  \item[(a)] There exist $\delta_1, \hdots, \delta_\ell \ge 0$. such that for all nonempty $I \subseteq [\ell]$
    \begin{align} |A_I| \le k - \sum_{i \in I} \delta_i.\label{eq:ell-hall}
    \end{align}
  \item[(b)] The pattern $(S_1, \hdots, S_k)$, with $\delta_i$ copies of $A_i$ for $i \in [\ell]$ and additional $k-\sum_{i=1}^\ell \delta_i$ copies of the empty set, is a generic zero pattern order $\ell$.
  \end{enumerate}
\end{proposition}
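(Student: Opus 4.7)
The plan is to prove the two implications directly, in both cases by translating between subsets $I \subseteq [\ell]$ (indexing the distinct sets $A_i$ weighted by multiplicities $\delta_i$) and subsets $J \subseteq [k]$ (indexing positions in the padded pattern $\cS = (S_1,\dots,S_k)$). The entire argument amounts to bookkeeping with the Hall-type inequality (\ref{eqn:GZP_hall}), so I do not expect any real obstacle.

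For $(b) \Rightarrow (a)$, given a nonempty $I \subseteq [\ell]$, I would choose $J \subseteq [k]$ to be the collection of all positions holding one of the $\delta_i$ copies of $A_i$ for $i \in I$. Then $|J| = \sum_{i \in I} \delta_i$ and $\bigcap_{j \in J} S_j = \bigcap_{i \in I} A_i = A_I$, so the GZP condition (\ref{eqn:GZP_hall}) applied to this $J$ yields $|A_I| \le k - \sum_{i \in I} \delta_i$, which is exactly (a).

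For $(a) \Rightarrow (b)$, I would first apply (a) with $I = [\ell]$ to conclude $\sum_i \delta_i \le k$, which ensures that the padding by $k - \sum_i \delta_i$ copies of $\emptyset$ is well-defined. To verify (\ref{eqn:GZP_hall}) for an arbitrary $J \subseteq [k]$, I would partition $J = J_0 \sqcup J_1 \sqcup \cdots \sqcup J_\ell$ according to whether position $j$ holds $\emptyset$ (if $j \in J_0$) or a copy of $A_i$ (if $j \in J_i$). If $J_0 \ne \emptyset$, then $\bigcap_{j \in J} S_j = \emptyset$ and the condition is vacuous because $|J| \le k$. Otherwise, setting $I = \{i \in [\ell] : J_i \ne \emptyset\}$, I observe that $\bigcap_{j \in J} S_j = A_I$ and $|J| = \sum_{i \in I} |J_i| \le \sum_{i \in I} \delta_i$, and then invoke (a) to conclude $|A_I| \le k - \sum_{i \in I} \delta_i \le k - |J|$.

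The only mild delicacy is tracking multiplicities: I need $|J_i| \le \delta_i$ to ensure the rightmost inequality above goes the correct way, which holds simply because $J_i$ is contained in the set of positions holding copies of $A_i$, of which there are only $\delta_i$. The ``order $\ell$'' part of (b) is immediate provided the $A_i$ are distinct and every $\delta_i \ge 1$ (otherwise one may drop the $i$ with $\delta_i = 0$ and relabel, so this is without loss of generality).
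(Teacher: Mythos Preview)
Your proof is correct and follows essentially the same approach as the paper's: both directions are handled by the obvious correspondence between subsets $I\subseteq[\ell]$ and subsets $J\subseteq[k]$ of positions carrying copies of the $A_i$'s, with the empty-set case treated separately. The paper phrases the $(a)\Rightarrow(b)$ direction as a worst-case reduction (``we might as well include all $\delta_i$ copies''), whereas you use the equivalent observation $|J|\le\sum_{i\in I}\delta_i$ directly; these are the same argument.
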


\begin{proof}

  First we prove that (a) implies (b). Let $d = k - \sum_{i \in [\ell]} \delta_i$. By (\ref{eq:ell-hall}), we know that $d$ is nonnegative. We need to show that (\ref{eqn:GZP_hall}) holds for $(S_1, \hdots, S_k)$. Let $I \subseteq [k]$ be any non-empty subset, we want to show that (\ref{eqn:GZP_hall}) holds for $I$. If $I$ includes at least one $i$ for which $S_i = \emptyset$, then (\ref{eqn:GZP_hall}) trivially holds.  Further, if $I \subseteq [k]$ \change{has the property that there exists $i \in I$, $i' \not\in I$, and $j \in [\ell]$ such that $S_i = S_{i'}= A_j$, then} we might as well \change{add $i'$ to $I$} as the LHS of (\ref{eqn:GZP_hall}) is unchanged and the RHS can only decrease. \change{That is, for each $i \in [\ell]$, the multiset $\{S_j : j \in I\}$ has $0$ or $\delta_i$ copies of $A_i$.} \change{In that case,} (\ref{eqn:GZP_hall}) is exactly (\ref{eq:ell-hall}) for the relevant set of $A_i$'s.

  To prove that (b) implies (a), for every $I \subseteq [\ell]$ for (\ref{eq:ell-hall}), look at the subset $I' \subseteq [k]$ \change{for} which \change{$\{S_j : j \in I'\}$} includes $\delta_i$ copies of $A_i$ for each $i \in I$. The truth of (\ref{eq:ell-hall}) is then implied by applying (\ref{eqn:GZP_hall}) to $I'$.
\end{proof}

We now state our generalized Hall's theorem which is a further generalization of Theorem~\ref{thm:gen-hall}.

\begin{theorem}[Generalized Hall's Theorem (new)]
\label{thm:ell-hall} Assume $n \ge k$\change{. Let} $A_1, \hdots, A_{\ell}$ be subsets of $[n]$ of size at most $k$. Assume there exist $\delta_1, \hdots, \delta_\ell \ge 0$ such that for all nonempty $I \subseteq [\ell]$, (\ref{eq:ell-hall}) holds.  Then, there exists $A'_i \supseteq A_i$ such that (\ref{eq:ell-hall}) holds for $A_1',\dots,A_\ell'$ and for all $i \in [\ell]$, $|A'_i| = k - \delta_i$.
\end{theorem}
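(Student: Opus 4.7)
My plan is to prove the theorem by induction on the total deficit $d := \sum_{i=1}^\ell (k - \delta_i - |A_i|)$. When $d = 0$ every set already has the required size, so the base case is immediate. For the inductive step it suffices to show that whenever some $|A_{i_0}| < k - \delta_{i_0}$, one may add a single element to $A_{i_0}$ without breaking (\ref{eq:ell-hall}); the deficit then drops by one and we invoke the inductive hypothesis.

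To prove this addability, introduce the slack function $\sigma(I) := k - \sum_{i \in I}\delta_i - |A_I|$ on nonempty $I \subseteq [\ell]$. Condition (\ref{eq:ell-hall}) says $\sigma \geq 0$ everywhere. Since $A_{I \cup J} = A_I \cap A_J$ and $A_{I \cap J} \supseteq A_I \cup A_J$, a short inclusion-exclusion computation shows that $I \mapsto |A_I|$ is supermodular, hence $\sigma$ is submodular:
$$\sigma(I \cup J) + \sigma(I \cap J) \leq \sigma(I) + \sigma(J).$$
Call $I$ \emph{tight} if $\sigma(I) = 0$. Inspection of cases shows that an $x \in [n] \setminus A_{i_0}$ cannot be added to $A_{i_0}$ iff some tight $I \ni i_0$ with $|I| \geq 2$ satisfies $x \in A_{I \setminus \{i_0\}}$.

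Now suppose for contradiction that no $x \in [n] \setminus A_{i_0}$ is addable. Let $\mathcal{J}$ be the family of nonempty $J \subseteq [\ell] \setminus \{i_0\}$ for which $\{i_0\} \cup J$ is tight; then $\mathcal{J}$ is nonempty and covers $[n] \setminus A_{i_0} \subseteq \bigcup_{J \in \mathcal{J}} A_J$. Submodularity of $\sigma$ shows $\mathcal{J}$ is closed under union and under nonempty intersection. Crucially, if $J_1, J_2 \in \mathcal{J}$ had $J_1 \cap J_2 = \emptyset$, then submodularity applied to $\{i_0\} \cup J_1$ and $\{i_0\} \cup J_2$ would force $\sigma(\{i_0\}) = 0$, contradicting $\sigma(\{i_0\}) = k - \delta_{i_0} - |A_{i_0}| > 0$. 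Therefore $J_* := \bigcap_{J \in \mathcal{J}} J$ lies in $\mathcal{J}$, and since $J_* \subseteq J$ implies $A_{J_*} \supseteq A_J$, we conclude $[n] \setminus A_{i_0} \subseteq A_{J_*}$, i.e.\ $A_{i_0} \cup A_{J_*} = [n]$.

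The contradiction now comes from counting. With $I^* := \{i_0\} \cup J_*$, tightness gives $|A_{I^*}| = k - \delta_{i_0} - \sum_{j \in J_*} \delta_j$, while $|A_{i_0}| < k - \delta_{i_0}$ and $|A_{J_*}| \leq k - \sum_{j \in J_*}\delta_j$. Substituting into $n = |A_{i_0} \cup A_{J_*}| = |A_{i_0}| + |A_{J_*}| - |A_{I^*}|$ yields $n < k$, violating $n \geq k$. Hence some $x$ is always addable, closing the induction. The main delicacy is handling $J_1 \cap J_2 = \emptyset$ in the submodular argument; this is precisely where the strict slack $\sigma(\{i_0\}) > 0$ and the hypothesis $n \geq k$ are used, and this case-check is what distinguishes the proof from the simpler $\ell = k$ statement of Theorem~\ref{thm:gen-hall}.
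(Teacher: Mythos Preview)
Your proof is correct and takes a genuinely different route from the paper's.

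The paper extends one set at a time as well, but the mechanism is different. It first disposes of the case $\delta_{i_0}=0$ by a direct observation, and for $\delta_{i_0}\ge 1$ it \emph{unrolls} the data via Proposition~\ref{prop:gzp-ell} into a length-$k$ generic zero pattern, restricts to a size-$k$ set $T\supseteq A_{i_0}$, applies the classical Hall matching theorem to obtain a system of distinct representatives $t_1,\dots,t_k$, and then builds the extension $U_{i_0}=\bigcap_{j\in J_{i_0}}(T\setminus\{t_j\})$ in one shot, verifying (\ref{eq:ell-hall}) for the new family by hand. Your argument instead works intrinsically with the slack function $\sigma$: you observe it is submodular, use an uncrossing argument to show that the tight sets containing $i_0$ form a sublattice whose minimum element $J_*$ must satisfy $A_{i_0}\cup A_{J_*}=[n]$ if no single element is addable, and then derive the contradiction $n<k$ by inclusion--exclusion. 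This is the standard ``tight-set lattice'' trick from polymatroid theory, and it handles the $\delta_{i_0}=0$ case uniformly without a separate check. Conversely, the paper's construction is more explicit (it produces the entire extension of $A_{i_0}$ at once, not element by element) and makes the connection to the original Hall theorem transparent, which matches the paper's framing of the result as a ``further generalization'' of Theorem~\ref{thm:gen-hall}.

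One small point worth making explicit in your write-up: when you pass from ``pairwise intersections of $\mathcal{J}$ are nonempty'' to ``$J_*=\bigcap_{J\in\mathcal{J}}J\in\mathcal{J}$'', you are implicitly iterating the closure-under-nonempty-intersection property (pairwise nonempty intersection alone does not imply nonempty total intersection). This is fine here since $\mathcal{J}$ is finite and closed under that operation, but it deserves a sentence.
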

\begin{proof} It suffices to show that if for some $i \in [\ell]$, we have that $|A_i| < k- \delta_i$, then we can find $A'_i \supset A_i$ for which $|A'_i| = k - \delta_i$ and replacing $A'_i$ with $A_i$ still satisfies (\ref{eq:ell-hall}). WLOG, by permutation of the $A_i$'s, we may assume that $|A_1| < k - \delta_1$. Note that if $\delta_1 = 0$, we may extend $A_1$ to an arbitrary superset of size $k$\change{.} Since (\ref{eq:ell-hall}) holds for all $I$ with $1 \not\in I$, adding $1$ to $I$ can only decrease the LHS but not change the RHS. Thus, this is a valid extension.

  We now assume $\delta_1 \ge 1$. By Proposition~\ref{prop:gzp-ell}, we have that $S_1, \hdots, S_k$, with $\delta_i$ copies of $A_i$ for each $i \in [\ell]$ and the rest the empty set is a generic zero pattern. For all $i \in [\ell]$, let $J_i \subseteq [k]$ (possibly empty) be the indices for $j$ for which $S_j = A_i$. Let $T$ be an arbitrary subset of $[n]$ of size $k$ for which $A_1 \subseteq T$. Define $T_i := S_i \cap T$ for all $i \in [k]$.  Note that $(T_1, \hdots, T_k)$ is a generic zero pattern, \change{since} for each \change{condition required by (\ref{eqn:GZP_hall}),} the LHS could only decrease going from $S_i$ to $T_i$.

  Note that the ``$T$-complements'' $\bar{T}_i := T \setminus T_i$ satisfy the matching conditions of classical Hall's theorem (e.g.,~\cite{van2001course}). That is, for any nonempty $I \subseteq [k]$,
  \[ \left|\bigcup_{i \in I} \bar{T}_i\right| = |T| - \left|\bigcap_{i \in I} T_i\right| \ge k - (k - |I|) = |I|.
  \] Thus, by Hall's theorem, the elements of $T$ can be listed as $(t_1, \hdots, t_k)$ such that $t_i \in \bar{T}_i$ for all $i$. Define $T'_i = T \setminus \{t_i\}$. Note that $ T_i \subseteq T'_i \subset T$ for all $i \in [k]$.  Since each $T'_i$ excludes a distinct element, the family $(T'_1, \hdots, T'_k)$ satisfies (\ref{eqn:GZP_hall}).  For all $i \in [\ell]$, let
  \[ U_i := \begin{cases} \bigcap_{j \in J_i} T'_j & J_i\text{ nonempty}\\ T & \text{otherwise}.
      \end{cases}
  \]

  Note that $A_i \cap T \subseteq U_i$ for all $i$. For $i = 1$, since $A_1 \subseteq T$, so $A_1 \subseteq U_1$. By (\ref{eqn:GZP_hall}), we have that $|U_1| \le k - \delta_1$. Further, since $|T \setminus T'_i| = 1$ for all $i \in J_1$, we can deduce by the pigeonhole principle that $|U_1| = k - \delta_1$.
  
  We claim that upon replacing $A_1$ with $U_1$, (\ref{eq:ell-hall}) is still satisfied for all nonempty $I$. Since we keep the other sets unchanged, (\ref{eq:ell-hall}) holds for all $I$ with $1 \not\in I$. Now, assume that $1 \in I$. By applying (\ref{eqn:GZP_hall}) with $\bigcup_{i \in I} J_i$, we have that {\allowdisplaybreaks \begin{align*} k - \sum_{i \in I} \delta_i &= k - \sum_{i \in I} |J_i| \\ &\ge \left|\bigcap_{i \in I}\bigcap_{j \in J_i} T'_j\right|\\ &= \left|U_1 \cap \bigcap_{i \in I \setminus \{1\}} U_i\right|\\ & \ge \left|U_1 \cap \bigcap_{i \in I \setminus \{1\}} (A_i \cap T)\right|\\ &= \left|(U_1 \cap T) \cap \bigcap_{i \in I \setminus \{1\}} A_i\right|\\ &= \left|U_1 \cap \bigcap_{i \in I \setminus \{1\}} A_i\right|,
  \end{align*}} as desired. Thus, extending $A_1$ to $U_1$ was valid.

  By recursively applying the argument, we may deduce that each $A_i$ can be extended to a set of size $k - \delta_i$.
\end{proof}

 If a generic pattern $(S_1, \hdots, S_k)$ contains $\delta_i$ copies of a set $A_i$, then (\ref{eqn:GZP_hall}) implies that $|A_i| \le k-\delta_i$.  As an immediate corollary of Theorem~\ref{thm:ell-hall} and Proposition~\ref{prop:gzp-ell}, we can now show that any generic zero pattern of order $\ell$ can be extended to a ``maximal'' generic zero pattern of order $\ell$.
\begin{corollary}\label{cor:gzp-ell} Assume $n \ge k$. Let $(S_1, \hdots, S_k)$ be a generic zero pattern order $\ell$ containing $\delta_i$ copies of nonempty $A_i$ for all $i \in [\ell]$. Then, there exists $A'_i \supseteq A_i$ with $|A'_i| = k - \delta_i$ for all $\change{i \in [\ell]}$, such that the zero pattern $(S'_1, \hdots, S'_k)$ with $\delta_i$ copies of $A'_i$ for all $i$ (and the same number of copies of the empty set) is a generic zero pattern of order $\ell$.
\end{corollary}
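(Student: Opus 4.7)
The plan is to derive this as a direct combination of Theorem~\ref{thm:ell-hall} and Proposition~\ref{prop:gzp-ell}. The corollary is essentially a repackaging: Theorem~\ref{thm:ell-hall} is stated in the ``inequality'' language of (\ref{eq:ell-hall}), while the corollary is stated in the ``pattern'' language of GZPs. Proposition~\ref{prop:gzp-ell} is the dictionary between these two languages, so the proof amounts to translating, applying the Hall-type extension theorem, and translating back.

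First I would apply the direction (b) $\Rightarrow$ (a) of Proposition~\ref{prop:gzp-ell} to the hypothesized pattern $(S_1, \ldots, S_k)$. Since this pattern is an order-$\ell$ generic zero pattern containing $\delta_i$ copies of $A_i$ and $k - \sum_{i \in [\ell]} \delta_i$ copies of the empty set, the proposition yields that for every nonempty $I \subseteq [\ell]$,
\[
|A_I| \le k - \sum_{i \in I} \delta_i.
\]
(Note the side condition $\sum_i \delta_i \le k$ is automatic since those copies of $A_i$ fit inside $[k]$.)

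Next I would feed these $A_1, \ldots, A_\ell$ together with the $\delta_i$ into Theorem~\ref{thm:ell-hall}. The theorem produces supersets $A'_i \supseteq A_i$ with $|A'_i| = k - \delta_i$ for each $i \in [\ell]$ such that (\ref{eq:ell-hall}) continues to hold for the tuple $(A'_1, \ldots, A'_\ell)$ with the same multiplicities $\delta_i$. Finally, I would apply the direction (a) $\Rightarrow$ (b) of Proposition~\ref{prop:gzp-ell} to this extended data: it certifies that the pattern $(S'_1, \ldots, S'_k)$ built by taking $\delta_i$ copies of $A'_i$ for each $i \in [\ell]$ together with $k - \sum_i \delta_i$ copies of $\emptyset$ is a generic zero pattern of order $\ell$, which is exactly the conclusion.

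There is no real obstacle here; the substantive work is already packaged inside Theorem~\ref{thm:ell-hall} (the nontrivial extension argument via classical Hall's theorem applied to the $T$-complements). The only minor thing worth double-checking is that the order of the extended pattern remains $\ell$ and not less, i.e.\ that the $A'_i$ are still nonempty and pairwise distinct. Nonemptiness is immediate from $A'_i \supseteq A_i$ and $A_i \neq \emptyset$. Distinctness can be read off from (\ref{eq:ell-hall}) applied to the two-element subset $I = \{i,j\}$: if $A'_i = A'_j$ then $|A'_i| = |A'_{\{i,j\}}| \le k - \delta_i - \delta_j < k - \delta_i = |A'_i|$, a contradiction (using $\delta_j \ge 1$, which we may assume since any $A_j$ with $\delta_j = 0$ can be dropped from the list without affecting anything).
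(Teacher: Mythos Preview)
Your proposal is correct and is exactly the approach the paper intends: the paper states this result as ``an immediate corollary of Theorem~\ref{thm:ell-hall} and Proposition~\ref{prop:gzp-ell}'' without further argument, and your translation via Proposition~\ref{prop:gzp-ell} in both directions is precisely that derivation. Your added verification that the order stays equal to $\ell$ (nonemptiness and distinctness of the $A'_i$) is a detail the paper silently assumes; note that the case $\delta_j = 0$ cannot occur since the hypothesis ``order $\ell$'' already forces each $\delta_i \ge 1$.
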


\begin{remark} Following the methods of \cite{dau2014gmmds}, Corollary~\ref{cor:gzp-ell}, as well as the other results in this subsection can be made into efficient algorithms.
\end{remark}

\subsection{A Characterization of Sets in Order-\texorpdfstring{$\ell$}{ell} Generic Zero Patterns}\label{subsec:partition} Given subsets $A_1,A_2,\dots,A_\ell\subset [n]$, we are interested in knowing if there exists an order-$\ell$ generic zero pattern $\cS=(S_1,S_2,\dots,S_k)$ for $k\times n$ matrices which is formed by copies of $A_1,A_2,\dots,A_\ell$ and $d$ copies of the empty set.  We now give a combinatorial characterization of sets $A_1,A_2,\dots,A_\ell$ for which this can be done, based on partitions of the $\ell$ sets. This characterization is essential for relating generic zero patterns with the generic intersection problem (see Section~\ref{sec:gzp-mds} for its application).

\begin{lemma}\label{lem:hall-part} Assume $n \ge k$ and $d\ge 0$. Let $A_1, A_2, \hdots, A_\ell \subseteq [n]$ of size at most $k$. The following are equivalent.
  \begin{enumerate}
    \item[(a)] There exists an order at most $\ell$ generic zero pattern for $k\times n$ matrices which contains only copies of $A_1,A_2,\dots,A_\ell$ and an additional $d$ copies of the empty set.
    \item[(b)] There exist $\delta_1, \hdots, \delta_{\ell} \ge 0$ such that $\sum_{i=1}^\ell \delta_i = k - d$ and for all nonempty $I \subseteq [\ell]$,
      \begin{align} |A_I| \le k - \sum_{i \in I} \delta_i.\label{eq:pa}
      \end{align}
    \item[(c)] For all partitions $P_1 \sqcup P_2 \sqcup \cdots \sqcup P_{s} = [\ell]$, we have that
      \begin{align} \sum_{i=1}^s |A_{P_i}| \le (s-1)k + d.\label{eq:pb}
      \end{align}
  \end{enumerate}
\end{lemma}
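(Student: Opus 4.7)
The equivalence (a)$\Leftrightarrow$(b) is immediate from Proposition~\ref{prop:gzp-ell} once one observes that a zero pattern with $\delta_i$ copies of $A_i$ and $d$ empty sets has total count $k$, so $\sum_i \delta_i = k - d$. For (b)$\Rightarrow$(c), sum the inequality $|A_{P_j}| \le k - \sum_{i \in P_j}\delta_i$ over the blocks of any partition of $[\ell]$. The substantive direction is (c)$\Rightarrow$(b), which I prove by induction on $\ell$; the base $\ell = 1$ is trivial (set $\delta_1 = k - d$). For $\ell \ge 2$ I split on the existence of a \emph{nontrivial} tight partition, meaning one with $s \ge 2$ parts attaining equality in (\ref{eq:pb}).

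\textbf{Case A (nontrivial tight partition exists).} Choose such $\cP^* = (P_1,\ldots,P_s)$ with $s$ \emph{minimal} among tight partitions, and apply the inductive hypothesis to each block $P_i$ with $d$ replaced by $|A_{P_i}|$. Sub-instance (c) follows because any partition $Q_1 \sqcup \cdots \sqcup Q_t$ of $P_i$ extends to a partition of $[\ell]$ by appending the other $P_j$'s, and tightness of $\cP^*$ cancels the extra $|A_{P_j}|$ terms, leaving $\sum_h |A_{Q_h}| \le (t-1)k + |A_{P_i}|$. Concatenating the resulting $\delta_j^{(i)}$'s gives $\delta_j \ge 0$ summing to $\sum_i(k - |A_{P_i}|) = k - d$, and (\ref{eq:pa}) holds for any $I$ contained in a single block. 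For $I = \bigsqcup_{i \in S}(I \cap P_i)$ spanning $|S| \ge 2$ blocks, the needed ``cross bound'' $\sum_{i \in S}|A_{I \cap P_i}| \ge |A_I| + (|S|-1)k$ follows from two ingredients: (i) merging the blocks $\{P_i\}_{i \in S}$ produces a proper coarsening of $\cP^*$ with $s - |S| + 1 < s$ parts which, by $s$-minimality, is not tight, giving the \emph{strict} $P$-level bound $\sum_{i \in S}|A_{P_i}| > |A_{\bigcup_{i \in S}P_i}| + (|S|-1)k$; and (ii) the function $f(X_1,\ldots,X_m) := \sum_i |X_i| - |\bigcap_i X_i|$ is monotone non-decreasing in each $X_i$ (adding an element to $X_i$ adds $1$ to $|X_i|$ and at most $1$ to $|\bigcap_j X_j|$), so applying monotonicity with $A_{I \cap P_i} \supseteq A_{P_i}$ transfers the $P$-level bound to the $I$-level.

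\textbf{Case B (no nontrivial tight partition).} If no partition is tight at all, let $\eta \ge 1$ be the minimum slack across all partitions and decrement $d$ by $\eta$; this preserves (c) and creates a tight partition, so by Case A (or iteration) one obtains $\delta$'s summing to $k - d + \eta$ for the reduced instance, then decreases these $\delta$'s by $\eta$ in total (legal since this only relaxes (\ref{eq:pa})). If the only tight partition is $\{[\ell]\}$, I \emph{pad}: choose $j$ with $|A_j| < k$ and append a fresh element $z \notin [n]$ to $A_j$. Since $z$ lies only in (the padded) $A_j$, this keeps $|A_I|$ unchanged for every $I$ with $|I| \ge 2$, thereby preserving (c), while reducing by $1$ the slack of every partition containing $\{j\}$ as a singleton. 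Iterating either creates a nontrivial tight partition (returning to Case A) or forces $|A_j| = k$, in which case $\delta_j = 0$ is forced and one recurses on the remaining $\ell - 1$ indices; the padded constraints dominate the original ones, so any solution lifts.

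\textbf{Main obstacle.} The crux is the cross bound in Case A, which essentially demands \emph{both} the $s$-minimality of $\cP^*$ (to get the \emph{strict} $P$-level coarsening inequality) and the monotonicity of $f$ (to carry that bound to arbitrary intersecting $I$'s). Neither ingredient alone suffices: without strictness, the degradation in passing from $P_i$ to $I \cap P_i$ by monotonicity would break the bound, and without monotonicity one cannot compare $A_{I \cap P_i}$ to $A_{P_i}$ at all. Case B's reductions are combinatorially routine once one recognizes that a fresh element outside $[n]$ affects only singleton constraints.
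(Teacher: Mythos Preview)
Your proof is correct and follows the same inductive skeleton as the paper: obtain a tight partition with $s\ge 2$ parts (padding if necessary), recurse on each block, and verify (\ref{eq:pa}) for sets $I$ crossing blocks via the coarsening inequality together with tightness of $\cP^*$. The bookkeeping differs slightly (you keep $k$ fixed and set $d'=|A_{P_i}|$ for the sub-instance; the paper subtracts $A_{P_i}$ and works with $B_j=A_j\setminus A_{P_i}$, $k_i=k-|A_{P_i}|$, $d'=0$), your fresh-element padding is cleaner than the paper's pigeonhole argument over $[n]$, and your monotonicity-of-$f$ observation is a pleasant repackaging of the paper's explicit inclusion--exclusion chain.

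That said, your ``Main obstacle'' paragraph misdiagnoses the argument, and the stated version has a small gap. The $s$-minimality of $\cP^*$ is \emph{not} needed, and your strictness claim (i) fails when $|S|=s$: the coarsening is then the trivial partition $\{[\ell]\}$, about which minimality among nontrivial tight partitions says nothing (and $|A_{[\ell]}|=d$ can certainly occur in your Case~A). Fortunately you only need the \emph{non-strict} $P$-level bound $\sum_{i\in S}|A_{P_i}|-|A_{\bigcup_{i\in S}P_i}|\ge(|S|-1)k$, which follows directly from tightness of $\cP^*$ plus (\ref{eq:pb}) applied to the coarsening $\{\bigcup_{i\in S}P_i\}\cup\{P_j:j\notin S\}$ --- no minimality required. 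Your own monotonicity of $f$ then gives $\sum_{i\in S}|A_{I\cap P_i}|-|A_I|\ge \sum_{i\in S}|A_{P_i}|-|A_{\bigcup_{i\in S}P_i}|$, because passing from $A_{P_i}$ to the \emph{larger} set $A_{I\cap P_i}$ can only increase $f$; there is no ``degradation''. Drop the $s$-minimality and the strictness claim and your argument is both correct and clean.
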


\begin{proof} We first prove that (a) iff (b). If (a) is true, let $\delta_i$ be the number of times that $A_i$ appears in the generic zero pattern. Note that the empty set must then appear $d = k - \sum_{i=1}^\ell \delta_i$ times. Thus, by Proposition~\ref{prop:gzp-ell}, we have that (b) holds.

If (b) is true, then by Proposition~\ref{prop:gzp-ell}, there is a generic zero pattern which uses $A_1, \hdots, A_{\ell}$ as its nonempty sets with precisely additional $k - \sum_{i=1}^\ell \delta_i = d$ empty sets.

Now we prove that (b) iff (c). The proof that (b) implies (c) is near-immediate, for any partition $P_1 \sqcup P_2 \sqcup \cdots \sqcup P_s = [\ell]$, we have that by (\ref{eq:pa}),
\begin{align*} \sum_{i=1}^s |A_{P_i}| &\le \sum_{i=1}^s \left(k - \sum_{j \in P_i} \delta_j\right)\\ &= sk - \sum_{j=1}^{\ell} \delta_j\\ &= (s-1)k + d.
\end{align*}

The proof that (c) implies (b) is rather nontrivial and requires a careful induction on $\ell$.\footnote{Although our induction is purely combinatorial, it has some similarities to \cite{yildiz2019gmmds}'s proof of the GM-MDS theorem.} The base case $\ell = 1$ follows by taking $\delta_1 = k - d$ and the fact that $|A_1| \le d$.

Now assume that $\ell \ge 2$. First, by applying the discrete partition $\{1\} \sqcup \{2\} \sqcup \cdots \sqcup \{\ell\}$ to (\ref{eq:pb}), we know that
\begin{align} \sum_{i = 1}^{\ell} |A_i| \le (\ell-1)k + d\label{eq:p1},
\end{align} which is \change{at most $nk$}. Thus, we can pad the $A_i$'s with additional elements \change{while maintaining that each $A_i$ has size at most $k$} until (\ref{eq:pb}) is an equality for some partition $P_1 \sqcup P_2 \sqcup \cdots \sqcup P_s = [\ell]$. We call such a partition \emph{tight}. Note that padding the $A_i$'s with additional elements can only make (\ref{eq:pa}) more difficult to satisfy, and thus we only need to prove \change{(c)} implies \change{(b)} for the padded family of sets.

If the tight partition satisfies $s = 1$, that is, $|A_{[\ell]}| = d$, \change{we} claim that we can continue to pad\footnote{A similar padding argument appears in Appendix B of \cite{bgm2021mds}.} the $A_i$'s until a partition with $s \ge 2$ is tight. If (\ref{eq:p1}) is tight, this is already true. Otherwise, we have that
\[ \sum_{i = 1}^{\ell} |A_i \setminus A_{[\ell]}| < (\ell-1)(k-d)\change{.}
\]
\change{Let $I \subseteq [\ell]$ be the set of indices for which $|A_i| = k$ and so $|A_i \setminus A_{[\ell]}| = k-d$. These are the sets which we can no longer pad. From the previous equation, we can deduce that
\[
\sum_{i \in [\ell] \setminus I} |A_i \setminus A_{[\ell]}| < (\ell-|I|-1)(k-d).
\]
}
Thus, by the pigeonhole principle, there is $\change{j} \in [n]\setminus A_{[\ell]}$ \change{for which} 
\[
\change{\{i \in [t] \setminus I : j \in A_i\} \le \lfloor ((\ell-|I|-1)(k-d) - 1) / (n-d)\rfloor \le \ell - |I| - 2.}
\]
Thus, we can continue padding the $A_i$'s, while keeping $|A_{[\ell]}| = d$ \change{and maintaining that each $A_i$ has size at most $k$}, until a partition $P_1 \sqcup P_2 \sqcup \cdots \sqcup P_s = [\ell]$ with $s \ge 2$ is tight.

Fix $i \in [s]$. For all $j \in P_i$, define $B_j = A_j \setminus A_{P_i}$. Let $k_i = k - |A_{P_i}|$. Since each $A_j$ has size at most $k$, each $B_j$ has size at most $k_i\le k$. We claim that for all partitions $Q_1 \sqcup Q_2 \sqcup \cdots \sqcup Q_t = P_i$, we have that
\begin{align} \sum_{j=1}^t |B_{Q_j}| \le (t-1) k_i\label{eq:qb}
\end{align} To see why, note that $P_1 \sqcup \cdots \sqcup P_{i-1} \sqcup Q_1 \sqcup \cdots \sqcup Q_t \sqcup P_{i+1} \sqcup \cdots \sqcup P_{s} = [\ell]$ is a partition. Thus we may apply (\ref{eq:qb}) on this partition and the fact that $P_1 \sqcup P_2 \sqcup \cdots \sqcup P_s$ is a tight partition to obtain that.
\begin{align*} \sum_{j=1}^t |B_{Q_j}| &= \sum_{j=1}^t |A_{Q_j}| - t|A_{P_i}|\\ &= \left(\sum_{j=1}^t |A_{Q_j}| + \sum_{j \in [s] \setminus i} |A_{P_j}|\right) - \sum_{j=1}^s |A_{P_j}| - (t-1)|A_{P_i}|\\ &\le (s+t-2)k + d - ((s-1)k+d) - (t-1)(k - k_i)\\ &= (t-1)k_i.
\end{align*}

Since $s \ge 2$, we have that $|P_i| < \ell$. Thus, we may apply the induction hypothesis\footnote{Note that recursive padding in this induction as the ``output'' of the recursion is the $\delta_i$'s satisfying (\ref{eq:pa}) and not the subsequent modifications to the sets.} to $(B_j : j \in P_i)$ to get that there exist $\delta_j \ge 0$ for all $j \in P_i$ such that $\sum_{j \in P_i} \delta_j = k_i$ and for all nonempty $J \subseteq P_i$,
\begin{align} |B_J| \le k_i - \sum_{j \in J} \delta_j \label{eq:qa}
\end{align}

We can perform this procedure for all $i \in [s]$, and thus obtain a $\delta_j$ for all $j \in [\ell]$ via that $i \in [s]$ for which $j \in P_i$. We claim that these exact same $\delta_j$'s satisfy condition (a) for $A_1, \hdots, A_\ell$. First, observe that
\[ \sum_{j=1}^\ell \delta_j = \sum_{i=1}^s \sum_{j\in P_i} \delta_j = \sum_{i=1}^s k_i = \sum_{i=1}^s (k - |A_{P_i}|) = k - d,
\] where the last equation uses that the partition is tight.

Now we verify (\ref{eq:pa}). Pick nonempty $I \subseteq [\ell]$. Let $\sigma \subseteq [s]$ be the set of indices $i$ for which $I \cap P_i$ is nonempty. Then, {\allowdisplaybreaks
\begin{align*} |A_I| &= \left|\bigcap_{i \in \sigma} \bigcap_{j \in I \cap P_i} A_j\right|\\ &= \left|\bigcap_{i \in \sigma} \left(A_{P_i} \cup \bigcap_{j \in I \cap P_i} B_j\right)\right|\\ &\le \left|\bigcap_{i \in \sigma} A_{P_i}\right| + \sum_{i \in \sigma} |B_{I \cap P_i}|\\ &= \left|\bigcap_{i \in \sigma} A_{P_i}\right| + \sum_{i \in \sigma} \left(k_i - \sum_{j \in I \cap P_i} \delta_j\right)\\ &= \left|\bigcap_{i \in \sigma} A_{P_i}\right| - \sum_{i \in \sigma} |A_{P_i}| + k|\sigma| - \sum_{j \in I} \delta_i\\ &= \left|\bigcap_{i \in \sigma} A_{P_i}\right| + \sum_{i \not\in \sigma} |A_{P_i}| - \sum_{i \in [s]} |A_{P_i}| + k|\sigma| - \sum_{j \in I} \delta_i\\ &\le ((s-|\sigma|+1)-1)k + d - ((s-1)k + d)+ k|\sigma| - \sum_{j \in I} \delta_i\\ &= k - \sum_{j \in I} \delta_i,
\end{align*} } where the last inequality follows from applying (\ref{eq:pb}) to the partition $\{\bigcup_{i \in \sigma} P_i\} \cup \{P_j : j \not\in \sigma\}$.
\end{proof}

\section{\texorpdfstring{Equivalence of $\GZP(\ell)$ and $\MDS(\ell)$}{Equivalence of GZP(ell) and MDS(ell)}}\label{sec:gzp-mds}

In this section, we prove \change{two of the main results. First, we prove Theorem~\ref{thm:main-dim}} on computing the dimension of a generic intersection\change{.}

\change{\dimgen*}

\change{Second, we show} that $\MDS(\ell)$ is equivalent to $\GZP(\ell)$ for all $\ell \ge 1$ (and thus that (a) and (c) are equivalent in Theorem~\ref{thm:main-mds}).

\change{\mdsequiv*}

\subsection{\texorpdfstring{$\GZP(\ell)$ Implies $\MDS(\ell)$}{GZP(ell) Implies MDS(ell)}}
\label{sec:gzp_implies_mds}

The key result toward proving that $\GZP(\ell)$ implies $\MDS(\ell)$ is the following:

\begin{lemma}\label{lem:gzp-dim} Let $C$ be an $\change{[n,k]}$-code whose generator matrix $\change{G \in \F^{k \times n}}$ is $\GZP(\ell)$. Let $A_1, \hdots, A_\ell \subseteq [n]$ be sets of size at most $k$. Then,
  \begin{align} \dim (G_{A_1} \cap \cdots \cap G_{A_{\ell}}) = \max_{P_1\sqcup P_2 \sqcup \dots \sqcup P_s=[\ell]} \left(\sum_{i\in [s]} \left|\bigcap_{j \in P_i} A_j\right| - (s-1) k\right) \label{eq:gzp-dim}
  \end{align} where the maximum is over all partitions of $[\ell].$
\end{lemma}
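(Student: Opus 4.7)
The plan is to prove matching upper and lower bounds, with $d$ denoting the right-hand side of (\ref{eq:gzp-dim}). Observe first that $d\ge 0$ (use the one-part partition $s=1$, which contributes $|A_{[\ell]}|\ge 0$) and $d\le k$ (each part contributes $|A_{P_i}|\le k$, so the expression is at most $sk-(s-1)k=k$). In particular all the quantities appearing below are in the expected range.

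First I would establish the lower bound $\dim(G_{A_1}\cap\cdots\cap G_{A_\ell})\ge d$; this direction uses only that $G$ is MDS, which is built into $\GZP(\ell)$. Fix any partition $P_1\sqcup\cdots\sqcup P_s=[\ell]$. Since $A_{P_i}\subseteq A_j$ for every $j\in P_i$, we have $G_{A_{P_i}}\subseteq \bigcap_{j\in P_i} G_{A_j}$, and therefore
\begin{align*}
\dim\Big(\bigcap_{j=1}^\ell G_{A_j}\Big) \;\ge\; \dim\Big(\bigcap_{i=1}^s G_{A_{P_i}}\Big) \;\ge\; \sum_{i=1}^s \dim G_{A_{P_i}} - (s-1)k \;=\; \sum_{i=1}^s |A_{P_i}| - (s-1)k,
\end{align*}
where the middle inequality is the iterated submodular bound $\dim(V\cap W)\ge \dim V + \dim W - k$ and the last equality uses $\dim G_{A_{P_i}}=|A_{P_i}|$ (the MDS property together with $|A_{P_i}|\le k$). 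Maximizing over partitions yields $\ge d$.

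The upper bound is where the $\GZP(\ell)$ hypothesis does the real work. By the very definition of $d$, we have $\sum_i |A_{P_i}|\le (s-1)k + d$ for every partition, so the equivalence in Lemma~\ref{lem:hall-part} (applying (c)$\Rightarrow$(a)) produces an order-$\ell$ generic zero pattern $\cS=(S_1,\ldots,S_k)$ consisting of $\delta_j$ copies of each $A_j$ (with $\sum_j \delta_j = k-d$) together with $d$ copies of $\emptyset$. Since $G$ is $\GZP(\ell)$, there is an invertible $M_{k\times k}$ with $\widetilde{G}:=MG$ attaining $\cS$, i.e.\ $\widetilde{G}_{ic}=0$ whenever $c\in S_i$. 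Let $I_j\subseteq[k]$ be the (disjoint) set of row-indices at which $\cS$ places a copy of $A_j$, so $|I_j|=\delta_j$. For every column $c\in A_j$ and every $i\in I_j$ we have $\widetilde{G}_{ic}=0$, so $\widetilde{G}_{A_j}$ is contained in the coordinate subspace $U_j:=\{v\in\F^k : v_i=0\text{ for all }i\in I_j\}$. Intersecting over $j$, the space $\bigcap_j \widetilde{G}_{A_j}$ is contained in $\bigcap_j U_j$, which is the coordinate subspace supported outside $\bigcup_j I_j$ and hence has dimension exactly $k-\sum_j\delta_j = d$. Because $M$ is invertible, $\widetilde{G}_{A_j}=M\cdot G_{A_j}$ and dimensions of intersections are preserved, giving $\dim(\bigcap_j G_{A_j})\le d$ as required.

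The only subtle ingredient is the appeal to Lemma~\ref{lem:hall-part}: it is precisely the combinatorial bridge that converts the partition-style expression for $d$ into a concrete generic zero pattern that $\GZP(\ell)$ can realize. Once the pattern is in hand, the upper bound becomes transparent — each $\widetilde{G}_{A_j}$ is confined to a coordinate subspace, and because the rows assigned to distinct $A_j$'s are disjoint, the intersection can only use the $d$ ``empty'' rows. All other steps are routine linear algebra.
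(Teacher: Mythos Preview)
Your proof is correct and follows essentially the same approach as the paper's own proof: the lower bound via the iterated dimension inequality using only the MDS property, and the upper bound by invoking Lemma~\ref{lem:hall-part} to produce a generic zero pattern with $d$ empty rows, applying the $\GZP(\ell)$ hypothesis to realize it as $MG$, and reading off that the intersection is confined to the $d$ ``empty'' coordinates. The only cosmetic differences are that you phrase the lower bound via submodularity rather than via $\dim G_{A_{P_i}}^\perp$, and you package the upper bound using explicit coordinate subspaces $U_j$; both are equivalent to what the paper does.
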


By Proposition~\ref{prop:our-gm-mds}, a generic matrix is $\GZP(\ell)$ for all $\ell$ and thus Theorem~\ref{thm:main-dim} is an immediate corollary of Lemma~\ref{lem:gzp-dim}. As a result, replacing the RHS of (\ref{eq:gzp-dim}) with the LHS of (\ref{eq:gen-dim}), we have that any $\GZP(\ell)$ code is also $\MDS(\ell)$.

\begin{proof}[Proof of Lemma~\ref{lem:gzp-dim}] We prove (\ref{eq:gzp-dim}) as two inequalities. First we show that $\ge$ direction of (\ref{eq:gzp-dim}).  Observe that for any $S \subset [\ell]$, $G_{A_S} \subset \bigcap_{j \in S} G_{A_j}$. Thus, for every partition $P_1\sqcup P_2 \sqcup \dots \sqcup P_s=[\ell]$, we have that
  \[ \dim \bigcap_{i \in [s]} G_{A_{P_i}} \le \dim \bigcap_{j \in [\ell]} G_{A_j}.
  \] Thus, since $G$ is MDS,\footnote{See a similar argument in \cite{bgm2021mds}.}
  \[ \dim \bigcap_{j \in [\ell]} G_{A_j} \ge \dim \bigcap_{i\in [s]} G_{A_{P_i}} \ge k - \sum_{i \in [s]} \dim G^{\perp}_{A_{P_i}} = k - \sum_{i \in [s]} (k - |A_{P_i}|) = \sum_{i \in [s]} |A_{P_i}| - (s-1)k,
  \] as desired.

  Now we show that $\le$ direction. Let $d$ be the RHS of (\ref{eq:gzp-dim}); that is the minimum choice of $d$ such that condition (b) in Lemma~\ref{lem:hall-part} holds. Thus, by the lemma, there exist $\delta_1, \hdots, \delta_\ell \ge 0$ such that $\sum_{i=1}^\ell \delta_i = k - d$ and for all nonempty $I \subseteq [\ell]$,
  \begin{align*} |A_I| \le k - \sum_{i \in I} \delta_i.
  \end{align*} By Proposition~\ref{prop:gzp-ell}, the pattern $\cS := (S_1, \hdots ,S_k)$ with $\delta_i$ copies of $A_i$ for each $i \in [\ell]$ and $d$ copies of the empty set (assume $S_{k-d+1} = \cdots = S_k = \emptyset$) is a generic zero pattern of order $\ell$. Thus, since $C$ is a $\GZP(\ell)$ code, there exists an invertible matrix $\change{M \in \F^{k \times k}}$ such that $MG$ has the zero pattern $\cS$.

  Note that $\dim((MG)_{A_1} \cap \cdots \cap (MG)_{A_{\ell}}) = \dim (G_{A_1} \cap \cdots \cap G_{A_{\ell}})$. Thus, in order to show that $\dim (G_{A_1} \cap \cdots \cap G_{A_{\ell}}) \le d$, it suffices to show that for any $z \in (MG)_{A_1} \cap \cdots \cap (MG)_{A_{\ell}}$, only the last $d$ coordinates can be nonzero.

  For each $i \in [\ell]$, let $I_i$ be the indices $j$ for which $S_j = A_i$. Since $z \in (MG)_{A_i}$, and $MG$ has the zero pattern $\cS$, we have that $z|_{I_i} = 0$. Thus,
  \[z|_{\bigcup_{i \in [\ell]} I_i} = 0.\] Thus, $z$ is only nonzero on the coordinates corresponding to empty $S_i$'s in the partition. That is, only the last $d$ coordinates of $z$ can be nonzero, proving the dimension upper bound.
\end{proof}

\subsection{\texorpdfstring{$\MDS(\ell)$ Implies $\GZP(\ell)$}{MDS(ell) Implies GZP(ell)}}

The ``(a) implies (c)'' part of Theorem~\ref{thm:main-mds} follows from the following lemma.

\begin{lemma} Let $C$ be a $\change{[n,k]}$-code which is $\MDS(\ell)$.  Let $\change{G \in \F^{k \times n}}$ be its generator matrix. Let $\cS= (S_1, \hdots, S_k)$ be a generic zero pattern of order $\ell$. Then, $G$ attains $\cS$.
\end{lemma}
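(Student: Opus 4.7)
The plan is to use the generalized Hall's theorem to extend $\cS$ to a \emph{maximal} order-$\ell$ generic zero pattern $\cS'$, and then build the matrix $M$ row-by-row using bases of the dual subspaces $G_{A_i}^\perp$. Since zeros of $\cS'$ imply zeros of $\cS$, it suffices to show $G$ attains $\cS'$.

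First, I apply Corollary~\ref{cor:gzp-ell} to obtain a maximal generic zero pattern $\cS' = (S'_1, \hdots, S'_k)$ containing $\cS$ of the same order $\ell$, in which each non-empty set $A_i$ appears exactly $\delta_i := k - |A_i|$ times and the empty set appears $d := k - \sum_{i=1}^\ell \delta_i$ times. The maximality is crucial: it ensures that for each $A_i$ we have exactly $\delta_i = k - |A_i| = \dim G_{A_i}^\perp$ rows of $M$ available to be chosen from $G_{A_i}^\perp$, matching the dimension count perfectly.

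Next, I compute $\dim(G_{A_1} \cap \cdots \cap G_{A_\ell})$. Since $C$ is $\MDS(\ell)$, this intersection dimension equals the generic value given by Theorem~\ref{thm:main-dim}. Applying the discrete partition $\{1\},\dots,\{\ell\}$ yields the lower bound $\sum_i |A_i| - (\ell-1)k = k - \sum_i \delta_i = d$, and Lemma~\ref{lem:hall-part}(c) (applied to the family $A_1,\dots,A_\ell$ and this value of $d$) gives the matching upper bound. Hence $\dim\bigcap_i G_{A_i} = d$. Dualizing inside $\F^k$ and using that $G$ is MDS gives
\[
  \dim \sum_{i=1}^\ell G_{A_i}^\perp \;=\; k - d \;=\; \sum_{i=1}^\ell \delta_i \;=\; \sum_{i=1}^\ell \dim G_{A_i}^\perp,
\]
so the subspaces $G_{A_1}^\perp, \hdots, G_{A_\ell}^\perp$ are in direct sum position inside $\F^k$.

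Finally, I construct $M$. For each $i \in [\ell]$ fix a basis of $G_{A_i}^\perp$ of size $\delta_i$, and extend the union of these bases by $d$ vectors $w_1, \hdots, w_d$ to a basis of $\F^k$ (possible by the direct-sum calculation). Now read off the rows of $M$ from $\cS'$: whenever $S'_j = A_i$, place one of the $\delta_i$ basis vectors of $G_{A_i}^\perp$ into row $j$ (each basis vector used exactly once, since $A_i$ appears exactly $\delta_i$ times); whenever $S'_j = \emptyset$, place one of the $w_t$'s. By construction the rows of $M$ are a basis of $\F^k$, so $M$ is invertible, and for each row $j$ the chosen vector annihilates every column of $G$ indexed by $S'_j$, so $MG$ realizes the zero pattern $\cS'$, hence $\cS$. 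The main conceptual step is the dimension computation in step two, which relies essentially on the equivalence of the partition inequality (Lemma~\ref{lem:hall-part}(c)) with the existence of a maximal extension; the remaining linear algebra is routine once the $\delta_i$'s line up.
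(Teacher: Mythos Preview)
Your proof is correct and follows essentially the same approach as the paper: extend $\cS$ to a maximal order-$\ell$ generic zero pattern via Corollary~\ref{cor:gzp-ell}, use the $\MDS(\ell)$ property together with Theorem~\ref{thm:main-dim} and Lemma~\ref{lem:hall-part} to pin down $\dim\bigcap_i G_{A_i}=d$, dualize to conclude that the $G_{A_i}^\perp$ are independent, and assemble $M$ from their bases. The only differences are expository---you state the direct-sum conclusion explicitly and spell out the row assignment of $M$ according to $\cS'$---but the argument is the same.
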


\begin{proof} By Corollary~\ref{cor:gzp-ell}, there exists a maximal order-$\ell$ generic zero pattern $\cS'$ which is a superset of $\cS$. That is, there are $A_1, \hdots, A_\ell$ and $\delta_1, \hdots, \delta_\ell$, such that $\cS'$ is the union for all $i \in [\ell]$ of $\delta_i$ copies of $A_i$ plus $d := k - \sum_{i=1}^\ell \delta_i$ copies of the empty set. Further, by ``maximality'', each $A_i$ is of size $k - \delta_i$. It suffices to show that $C$ attains $\cS'$.

  Because $\cS'$ is a generic zero pattern, we have for all nonempty $I \subseteq [\ell]$,
  \[ |A_I| \le k - \sum_{i \in I} \delta_i.
  \] Thus, by Lemma~\ref{lem:hall-part}, we have that for all partitions $\cP$ of $[\ell]$, we have that
  \[ \sum_{P \in \cP} |A_P| \le (|\cP|-1)k + d.
  \] For the partition, $\{1\} \cup \cdots \cup \{\ell\}$, we have equality because each $|A_i| = k - \delta_i$.

  Thus, since $C$ is $\MDS(\ell)$, by Theorem~\ref{thm:main-dim}, we have that
  \[ \dim(G_{A_1} \cap \cdots \cap G_{A_\ell}) = d.
  \] Let $X := G_{A_1} \cap \cdots \cap G_{A_\ell}$ be this $d$-dimensional subspace of $\F^k$. Thus, by taking the dual,
  \[ X^{\perp} = G_{A_1}^{\perp} + \cdots + G_{A_\ell}^{\perp}.
  \] Note that $\dim(G_{A_i}^{\perp}) = k - |A_i| = \delta_i$. And $\dim(X^{\perp}) = k - d = \sum_{i \in [\ell]} \delta_i$.  Therefore, if for each $i \in [\ell]$, we pick a basis $(v^{i}_1, \hdots, v^{i}_{\delta_i})$ for $G_{A_i}^{\perp}$, then $(v^i_j : i \in [\ell], j \in [\delta_i])$ is a basis for $X^{\perp}$.

  Let $\change{M \in \F^{k \times k}}$ be an arbitrary invertible matrix whose first $k-d$ rows are $(v^i_j : i \in [\ell], j \in [\delta_i])$. We claim that $MG$ has zero pattern $\cS$. For each $i \in [\ell]$ and $j \in [\delta_i]$, let $i'$ be the row of $M$ corresponding to $v^i_j$. For any $j' \in [n]$, we have that $(MG)_{i',j'}$ is the inner product\footnote{Here, ``inner product'' is the informal name for the bilinear form $\langle v, w\rangle = \sum_{i=1}^k v_i w_i.$} of $v^i_j$ and $G_{j'}$. If $j' \in A_i$, then $G_{j'} \in G_{A_i}$. Since $v^i_j \in G_{A_i}^{\perp}$, the inner product is $0$.

  Thus, $MG$ indeed has the desired zero pattern.
\end{proof}

\subsection{Characterizing the Null Intersection Property}

We recall from \cite{bgm2021mds} the following definition.

\begin{definition} Let $A_1, \hdots, A_\ell \subseteq [n]$ be sets of size at most $k$. We say that $A_1, \hdots, A_\ell$ have the \emph{null intersection} property if for a generic matrix, $\change{W \in \F^{k \times n}}$ we have that
  \[ W_{A_1} \cap \cdots \cap W_{A_k} = 0.
  \]
\end{definition}

Note that Theorem~\ref{thm:main-dim} characterizes which sets have the null intersection property: they are those for which the RHS of (\ref{eq:gen-dim}) is equal to $0$. Notably, \cite{bgm2021mds} also shows that the $\MDS(\ell)$ condition is equivalent to attaining null intersection for any family of $\ell$ sets with the null intersection property:
\begin{lemma}[Lemma 3.1~\cite{bgm2021mds}]\label{lem:null} Let $C$ be a $\change{[n,k]}$-code with generator matrix $G$. Let $\ell \ge 2$. The following are equivalent.
\begin{enumerate}
\item[(a)] $C$ is $\MDS(\ell)$.
\item[(b)] $C$ is $\MDS$ and for all $A_1, \hdots, A_\ell \subseteq [n]$ of size at most $k$ and $|A_1| + \cdots + |A_\ell| = (\ell-1)k$, we have that
  \[ G_{A_1} \cap \cdots \cap G_{A_{\ell}} = 0
  \] if and only if it holds generically.
\end{enumerate}
\end{lemma}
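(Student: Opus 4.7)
The direction (a) $\Rightarrow$ (b) is essentially by unpacking definitions. If $C$ is $\MDS(\ell)$, then $\dim(G_{A_1}\cap\cdots\cap G_{A_\ell})$ equals the generic intersection dimension for every tuple of sets of size at most $k$, so in particular the intersection vanishes iff the generic one does. Setting $A_1=\cdots=A_\ell=A$ in (\ref{eq:MDS-L}) forces $\dim G_A=|A|$ for every $A$ with $|A|\le k$, which is the MDS condition.

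For (b) $\Rightarrow$ (a) I argue by contrapositive. The ``$\ge$''-half of the proof of Lemma~\ref{lem:gzp-dim} uses only MDS, so $d_G:=\dim(G_{A_1}\cap\cdots\cap G_{A_\ell})\ge d_W:=\dim(W_{A_1}\cap\cdots\cap W_{A_\ell})$ always; if (a) fails, there is a tuple $A_1,\dots,A_\ell$ of sets of size $\le k$ with $d_G>d_W$. The plan is to manufacture $A_1'',\dots,A_\ell''$ of size $\le k$ satisfying $\sum_i|A_i''|=(\ell-1)k$, $d_W''=0$, and $d_G''>0$, directly contradicting (b).

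\textbf{Phase 1 (trim to $d_W=0$).} While $d_W>0$, pick a nonzero $z\in W_{A_1}\cap\cdots\cap W_{A_\ell}$. Since $|A_i|\le k$ and a generic $W$ is MDS, the representation $z=Wu_i$ with $\supp(u_i)\subseteq A_i$ is unique and each $u_i$ is nonzero. Choose $j\in\supp(u_1)$ and replace $A_1$ by $A_1\setminus\{j\}$; uniqueness forces $z$ out of the new generic intersection, so $d_W$ drops by exactly $1$, while the $G$-intersection drops by at most $1$, preserving the gap. Iterate until $d_W=0$ and $d_G>0$, producing $A_1',\dots,A_\ell'$.

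\textbf{Phase 2 (inflate to total size $(\ell-1)k$).} By Theorem~\ref{thm:main-dim}, $d_W'=0$ means $\sum_i|A'_{P_i}|\le(|P|-1)k$ for every partition $P$ of $[\ell]$, which is criterion (c) of Lemma~\ref{lem:hall-part} with $d=0$. The lemma produces $\delta_1,\dots,\delta_\ell\ge 0$ with $\sum_i\delta_i=k$ and $|A'_I|\le k-\sum_{i\in I}\delta_i$ for every nonempty $I\subseteq[\ell]$, and Theorem~\ref{thm:ell-hall} then enlarges each $A_i'$ to $A_i''\supseteq A_i'$ of size exactly $k-\delta_i\le k$, still satisfying (\ref{eq:ell-hall}). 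Summing gives $\sum_i|A_i''|=(\ell-1)k$; Theorem~\ref{thm:main-dim} applied to (\ref{eq:ell-hall}) forces $d_W''=0$; and since enlarging index sets can only grow the $G$-intersection, $d_G''\ge d_G'>0$. The tuple $A_1'',\dots,A_\ell''$ violates (b), completing the contrapositive. The main obstacle is controlling the enlargement without resurrecting the generic intersection; this is resolved cleanly by the equivalence of partition inequalities and $\delta$-parameters in Lemma~\ref{lem:hall-part}, combined with the size-controlled extension provided by Theorem~\ref{thm:ell-hall}.
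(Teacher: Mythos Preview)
The paper does not prove this lemma; it is quoted from \cite{bgm2021mds} as Lemma~3.1 there, so there is no in-paper argument to compare against directly. Your proof is correct. The (a)$\Rightarrow$(b) direction is immediate. For (b)$\Rightarrow$(a), your two-phase contrapositive works: in Phase~1 the uniqueness of the representation $z=Wu_1$ (from $|A_1|\le k$ and $W$ MDS) guarantees that deleting a support element drops $d_W$ by exactly one while $d_G$ drops by at most one, so the strict gap survives until $d_W=0$; in Phase~2 you correctly chain Theorem~\ref{thm:main-dim} $\to$ Lemma~\ref{lem:hall-part}(c)$\Rightarrow$(b) $\to$ Theorem~\ref{thm:ell-hall} to inflate to total size $(\ell-1)k$ with $d_W''=0$ preserved and $d_G''\ge d_G'>0$. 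There is no circularity, since none of Theorem~\ref{thm:main-dim}, Lemma~\ref{lem:hall-part}, or Theorem~\ref{thm:ell-hall} invoke Lemma~\ref{lem:null} in their proofs.

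What distinguishes your argument from the original in \cite{bgm2021mds} is that you lean entirely on machinery introduced in the present paper --- the partition formula for generic intersection dimension and the order-preserving Hall extension --- none of which existed when \cite{bgm2021mds} was written. Their proof is necessarily more hands-on, working directly with the block-matrix rank characterization (their Appendix~B) and an ad hoc padding. Your route is conceptually cleaner once that machinery is available, and in effect demonstrates that Lemma~\ref{lem:null} becomes a corollary of the structural results developed here; the trade-off is that it is no longer self-contained relative to \cite{bgm2021mds}.
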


\begin{remark} Composing this lemma and the proof that $\MDS(\ell)$ and $\GZP(\ell)$ are equivalent, one can in fact show that, for $\ell \ge 2$, $\GZP(\ell)$ is equivalent to the seemingly weaker property that the matrix $G$ attains all generic zero patterns with at most $\ell$ distinct sets \emph{including} the empty set. We omit further details.
\end{remark}

\section{Applications to List Decoding: Proof of Theorem~\ref{thm:main-ld}}\label{sec:ld-mds}

In this section, we show \change{that (a) and (b) are equivalent in Theorem~\ref{thm:main-mds}}.

\change{\mdsequiv*}

\change{In other words, we show} that the recently established notions of higher-order MDS codes in the works of \cite{bgm2021mds} and \cite{roth2021higher} are in fact equivalent.  An equivalent way to define $\LDMDS(L)$ codes is using the parity check matrix $\change{H \in \F^{(n-k)\times n}}$ matrix of $C.$ $C$ is $\LDMDS(L)$ if there \change{do not} exist $L+1$ \emph{distinct} vectors $u_0,u_1,\dots,u_L \in \F^n$ such that
\begin{align}
	\label{eqn:LDMDS_paritycheck} \sum_{i=0}^L \wt(u_i) \le L(n-k) \text{ and }Hu_0=Hu_1=\dots=Hu_L.
\end{align}

\subsection{Equivalence of \texorpdfstring{$\MDS$}{MDS} and \texorpdfstring{$\LDMDS$}{LDMDS} (up to Duality)}

We now prove that (a) iff (b) in Theorem~\ref{thm:main-mds}.  We will break the proof into Propositions~\ref{prop:LDMDS_implies_MDS} and \ref{prop:MDS_implies_LDMDS}.

\begin{proposition}
	\label{prop:LDMDS_implies_MDS} $C$ is $\LDMDS(\le L) \Rightarrow$ $C^\perp$ is $\MDS(L+1).$
\end{proposition}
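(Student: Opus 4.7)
The plan is to establish the contrapositive: assuming $C^{\perp}$ is not $\MDS(L+1)$, I will exhibit some $\ell \in \{1, \dots, L\}$ for which $C$ fails to be $\LDMDS(\ell)$. Let $H$ denote a parity-check matrix of $C$, which is the same as a generator matrix of $C^{\perp}$. If $C^{\perp}$ is not even MDS, then $C$ itself is not MDS (the dual of an MDS code is MDS), so $C$ already fails $\LDMDS(1)$ and we are done. Otherwise $C^{\perp}$ is MDS but not $\MDS(L+1)$, and Lemma~\ref{lem:null} (applied to $C^{\perp}$) produces subsets $A_1, \dots, A_{L+1} \subseteq [n]$ of size at most $n-k$ with $|A_1| + \cdots + |A_{L+1}| = L(n-k)$ such that $H_{A_1} \cap \cdots \cap H_{A_{L+1}}$ is nonzero even though the corresponding generic intersection in $\F^{n-k}$ is zero.

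Fix a nonzero vector $z$ in this intersection. Since $z \in H_{A_i}$ for each $i$, I may write $z = H u_i$ for some $u_i \in \F^n$ with $\supp(u_i) \subseteq A_i$. To apply the parity-check reformulation~(\ref{eqn:LDMDS_paritycheck}) of $\LDMDS$ I need these $u_i$'s to be pairwise distinct, which a priori they need not be; this potential collapsing is the main obstacle. The plan for handling it is to consolidate: partition $[L+1]$ into blocks $P_1, \dots, P_s$ according to the equivalence classes determined by the relation $u_i = u_j$, and let $u^{(t)}$ be the common value on block $P_t$. Then $u^{(1)}, \dots, u^{(s)}$ are distinct by construction, each satisfies $H u^{(t)} = z$, and $\supp(u^{(t)}) \subseteq \bigcap_{i \in P_t} A_i = A_{P_t}$, so $\wt(u^{(t)}) \le |A_{P_t}|$.

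The key observation is that the partition formula in Theorem~\ref{thm:main-dim}, applied with ambient dimension $n-k$, delivers exactly the inequality I need: since the generic intersection of the $A_i$'s is zero, every partition of $[L+1]$ must satisfy $\sum_{t=1}^{s} |A_{P_t}| \le (s-1)(n-k)$. Combined with the support bound, this gives $\sum_{t=1}^s \wt(u^{(t)}) \le (s-1)(n-k)$. A brief sanity check shows $s \ge 2$: the trivial partition $\{[L+1]\}$ would force $|A_{[L+1]}| \le 0$, so the common $u$ would vanish, contradicting $z \ne 0$. Therefore $s - 1 \in \{1, \dots, L\}$, and the distinct vectors $u^{(1)}, \dots, u^{(s)}$ furnish a direct violation of~(\ref{eqn:LDMDS_paritycheck}) for $\ell = s-1$, so $C$ is not $\LDMDS(s-1)$ and hence not $\LDMDS(\le L)$, completing the contrapositive.
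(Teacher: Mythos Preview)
Your proof is correct and follows essentially the same approach as the paper: prove the contrapositive, invoke Lemma~\ref{lem:null} to obtain sets with zero generic intersection but nonzero actual intersection, pick a nonzero $z$ and preimages $u_i$, partition $[L+1]$ by equality of the $u_i$'s, and use the partition inequality from Theorem~\ref{thm:main-dim} to bound the total weight and rule out $s=1$. Your treatment is in fact slightly more careful than the paper's in that you explicitly dispose of the case where $C^\perp$ is not MDS before invoking Lemma~\ref{lem:null}.
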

\begin{proof} We will prove the contrapositive: $C^\perp$ is not $\MDS(L+1) \Rightarrow$ $C$ is not $\LDMDS(\le L)$.

	Let $C$ be a $\change{[n,k]}$-code and $\change{H \in \F^{(n-k)\times n}}$ be its parity check matrix. Note that $H$ is the generator matrix of $C^\perp.$ By Lemma~\ref{lem:null}, the fact that $C^\perp$ is not $\MDS(L+1)$ implies that there exists subsets $J_0,J_1,\dots,J_L \subset [n]$ such that
	\begin{enumerate}
		\item $H_{J_0} \cap H_{J_1} \cap \dots \cap H_{J_L} \ne 0$
		\item $W_{J_0} \cap W_{J_1} \cap \dots \cap W_{J_L} = 0$ for a generic \change{$(n-k) \times n$ matrix $W.$}
	\end{enumerate} By Theorem~\ref{thm:main-dim}, (2) implies that for all partitions $P_1 \sqcup P_2 \sqcup \dots \sqcup P_s = \{0,1,\dots,L\}$, we have $\sum_{i=1}^s |J_{P_i}| \le (s-1)(n-k)$ where $J_{P_i}=\bigcap_{j \in P_i} J_j.$ (1) implies that there exist non-zero $u_0,u_1,\dots,u_L\in \F^n$ such that $$\supp(u_i)\subset J_i \text{ and } Hu_0=Hu_1=\dots=Hu_L.$$ Suppose there are $s$ distinct vectors among $u_0,u_1,\dots,u_L.$ Let $P_1 \sqcup P_2 \sqcup \dots \sqcup P_s = \{0,1,\dots,L\}$ be the partition of $L+1$ into $s$ parts such that all the $\{u_j: j\in P_i\}$ are equal for every $i\in [s]$ (and they are distinct if they fall in different parts). Let $u_{P_i}$ be the common vector equal to $\{u_j: j \in P_i\}.$ Note that $\supp(u_{P_i}) \subset \bigcap_{j\in P_i} J_j = J_{P_i}$. Therefore we have $s$ distinct non-zero vectors $u_{P_1},u_{P_2},\dots, u_{P_s}$ such that
	$$\sum_{i=1}^s \wt(u_{P_i})\le\sum_{i=1}^s |J_{P_i}| \le (s-1)(n-k) \text{ and } Hu_{P_1}=Hu_{P_2}=\dots=Hu_{P_s}.$$ If $s=1$, we get $\wt(u_{P_1})\le 0$ which is not possible since $u_{P_1}$ is non-zero. Therefore $s\ge 2$ and this violates $\LDMDS(s-1)$ and therefore $\LDMDS(\le L).$
\end{proof}

\begin{proposition}
	\label{prop:MDS_implies_LDMDS} $C^\perp$ is $\MDS(L+1) \Rightarrow$ $C$ is $\LDMDS(\le L)$
\end{proposition}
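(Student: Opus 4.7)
The plan is to prove the contrapositive: if $C$ is not $\LDMDS(\le L)$, then $C^{\perp}$ is not $\MDS(L+1)$. The case where $C$ itself is not MDS is immediate, since then $C^\perp$ is also not MDS, and $\MDS(L+1)$ implies MDS (by duplicating sets in the definition). So assume $C$ is MDS, i.e.\ $\LDMDS(1)$, and pick $\ell\in\{2,\dots,L\}$ \emph{minimal} with $C$ not $\LDMDS(\ell)$. By the parity-check formulation~\eqref{eqn:LDMDS_paritycheck} we obtain distinct $u_0,\dots,u_\ell\in\F^n$ with $Hu_0=\cdots=Hu_\ell=:v$ and $\sum_{i=0}^\ell \wt(u_i)\le \ell(n-k)$. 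Set $A_i:=\supp(u_i)$.

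I next verify two preliminary facts. First, $v\ne 0$: otherwise every $u_i\in C$, and by distinctness at least $\ell$ are nonzero codewords of weight $\ge n-k+1$ (MDS minimum distance), forcing $\sum_i\wt(u_i)\ge \ell(n-k+1)>\ell(n-k)$, a contradiction. Second, $|A_i|\le n-k$ for every $i$: otherwise write $I:=\{i:|A_i|\le n-k\}$ with $r+1:=|I|$; each $j\notin I$ contributes $\wt(u_j)\ge n-k+1$, which gives $\sum_{i\in I}\wt(u_i)\le \ell(n-k)-(\ell-r)(n-k+1)<r(n-k)$. The case $r=0$ is excluded directly by this bound, so $r\ge 1$ and $\{u_i\}_{i\in I}$ are $r+1$ distinct vectors witnessing $C$ not $\LDMDS(r)$ with $r<\ell$, contradicting the minimality of $\ell$. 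In particular, since $H$ is MDS, its restriction to any set of at most $n-k$ columns is injective, so $u_i$ is the unique vector supported in $A_i$ with $Hu_i=v$; consequently distinct $u_i$'s force distinct $A_i$'s.

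The heart of the proof is to compare $\dim \bigcap_{i=0}^\ell H_{A_i}$ (which contains $v\ne 0$, hence has dimension $\ge 1$) with the generic dimension
\[
  \dim_{\mathrm{gen}} := \max_{P_1\sqcup\cdots\sqcup P_s=\{0,\dots,\ell\}}\Bigl(\sum_{j=1}^s |A_{P_j}| - (s-1)(n-k)\Bigr)
\]
supplied by Theorem~\ref{thm:main-dim}. Fix a partition $P$ with $s$ parts and set $B_j:=A_{P_j}$; then $\bigcap_j H_{B_j}\subseteq \bigcap_i H_{A_i}$, and iterating the standard inequality $\dim(V\cap W)\ge \dim V+\dim W-(n-k)$ inside $\F^{n-k}$ (with $\dim H_{B_j}=|B_j|$ from MDS of $H$) yields $\dim \bigcap_j H_{B_j}\ge \sum_j |B_j|-(s-1)(n-k)$. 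The crucial observation is: if some $|P_j|\ge 2$, then $v\notin H_{B_j}$. For if $v\in H_{B_j}$ then $v$ would admit a representation supported in $B_j\subseteq A_i$, which by the MDS-uniqueness of the representation of $v$ supported in $A_i$ must coincide with $u_i$; this forces $A_i=B_j$ for every $i\in P_j$, contradicting the distinctness of the $A_i$. Hence $v$ supplies an extra dimension beyond $\bigcap_j H_{B_j}$, so $\dim \bigcap_i H_{A_i}\ge \sum_j |A_{P_j}|-(s-1)(n-k)+1$ for every such partition; and for the discrete partition, $\sum_i |A_i|-\ell(n-k)\le 0<1\le \dim\bigcap_i H_{A_i}$. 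Taking the maximum over all partitions, $\dim\bigcap_i H_{A_i}>\dim_{\mathrm{gen}}$, which directly contradicts the definition of $\MDS(\ell+1)$ applied to the family $(A_0,\dots,A_\ell)$. Since $\MDS(L+1)$ implies $\MDS(\ell+1)$ for every $\ell+1\le L+1$ (by duplicating sets in the family), $C^\perp$ is not $\MDS(L+1)$, completing the contrapositive. The main technical obstacle is the MDS-uniqueness argument producing $v\notin H_{B_j}$ for parts of size $\ge 2$: this is the step where distinctness of the $u_i$ together with MDS of $H$ is converted into a genuine dimension gap.
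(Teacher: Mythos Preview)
Your proof is correct and follows essentially the same route as the paper's: both arguments prove the contrapositive, exhibit the common syndrome $v=Hu_i\in\bigcap_i H_{A_i}$, and compare against the generic dimension from Theorem~\ref{thm:main-dim} partition by partition, using distinctness of the $u_i$ together with MDS-injectivity of $H$ on $\le n-k$ columns to produce a strict dimension gap. Your presentation is in two respects tidier than the paper's: you justify $|A_i|\le n-k$ via the minimality of $\ell$ (the paper simply asserts this ``WLOG''), and you argue directly that $v\notin H_{B_j}$ for parts of size $\ge 2$, whereas the paper routes the same fact through auxiliary spaces $V$ and $V_{\cP}$ and the bijections $f,f_{\cP}$---these are isomorphic arguments, with yours avoiding the detour.
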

\begin{proof} Since $\MDS(\ell)\implies \MDS(\le \ell)$, it is enough to show that $C^\perp$ is $\MDS(\ell) \Rightarrow$ $C$ is $\LDMDS(\ell-1)$ for $\ell \ge 2.$ We will prove the contrapositive: $C$ is MDS and $C$ is not $\LDMDS(\ell-1) \Rightarrow$ $C^\perp$ is not $\MDS(\ell).$

\change{We prove this by induction on $\ell$. The base case of $\ell=2$ follows from the fact that $\MDS(2)$ and $\LDMDS(1)$ are both equivalent to MDS (Section 1.2 of \cite{bgm2021mds} and Section I.B of \cite{roth2021higher}).}

\change{We may now assume that $\ell \ge 3$.} \change{Assume} $C$ \change{is} an MDS $\change{[n,k]}$-code \change{(or else $C^{\perp}$ is not even $\MDS(2)$)} and \change{let} $\change{H \in \F^{(n-k)\times n}}$ be its parity check matrix. Note that $H$ is the generator matrix of $C^\perp.$ Since $C$ is not $\LDMDS(\ell-1)$, there exist \emph{distinct} $u_1,u_2,\dots,u_\ell \in \F^n$ such that
	$$\sum_{i=1}^\ell \wt(u_i) \le (\ell-1)(n-k) \text{ and } Hu_1=Hu_2=\dots=Hu_\ell.$$ Let $J_i=\supp(u_i).$ \change{If for some $i \in [\ell]$, we have that $|J_i| \ge n-k$, then we can observe that $\{u_1, \hdots, u_\ell\} \setminus \{u_i\}$ certifies that $C$ is not $\LDMDS(\ell-2)$. Therefore, by induction, $C^{\perp}$ is neither $\MDS(\ell-1)$ nor $\MDS(\ell)$.}

\change{Therefore, we can assume that for all $i \in [\ell]$ that $|J_i|< n-k.$} We can also infer that \change{$Hu_i$ is non-zero for all $i \in [\ell]$. Otherwise, $Hu_1 = Hu_2 = \cdots Hu_\ell = 0$. Since $H$ is $\MDS$, every $n-k$ columns of $H$ are linearly independent. Therefore, for all $i \ge 1$, either $u_i = 0$ or $\wt(u_i)\ge n-k+1$.} \change{Since at most one of the $u_i$'s can be equal to zero, these bounds violate} the $\sum_{i=1}^\ell \wt(u_i)\le (\ell-1)(n-k)$ condition. \change{Therefore, we assume for all $i \in [\ell]$ that $Hu_i$ is nonzero.} The following claim completes the proof, since it proves that $C^\perp$ is not $\MDS(\ell).$
	\begin{claim} $\dim(H_{J_1} \cap H_{J_2} \cap \dots \cap H_{J_\ell}) > \dim(W_{J_1} \cap W_{J_2} \cap \dots \cap W_{J_\ell})$ for a generic \change{$(n-k)\times n$ matrix $W.$}
	\end{claim}
	\begin{proof} If $\dim(W_{J_1} \cap \dots \cap W_{J_\ell})=0$, then we are done because $Hu_1\ne 0$ and $Hu_1 \in H_{J_1} \cap \dots \cap H_{J_\ell}.$ Therefore assume that $\dim(W_{J_1} \cap W_{J_2} \cap \dots \cap W_{J_\ell})>0.$ By Theorem~\ref{thm:main-dim}, $$\dim(W_{J_1} \cap \dots \cap W_{J_\ell})= \sum_{i=1}^s |J_{P_i}| - (s-1)(n-k)$$ for some partition $P_1\sqcup \dots \sqcup P_s = [\ell]$. Note that $s<\ell$, since $\sum_{i=1}^\ell |J_i| - (\ell-1)(n-k)\le 0.$ Let $$V = \{(x_1,x_2,\dots,x_\ell) : Hx_1=Hx_2=\dots=Hx_\ell \text{ and } \supp(x_i)\subset J_i\}$$ which is a subspace of $(\F^{n})^\ell$.  Clearly $(u_1,u_2,\dots,u_\ell)\in V.$ Define the linear map $f:H_{J_1} \cap H_{J_2} \cap \dots \cap H_{J_\ell} \to V$ as: $f(y)=(x_1,x_2,\dots,x_\ell)$ where $x_i$ is uniquely defined by $Hx_i = y$ and $\supp(x_i)\subset J_i.$ It is clear that $f$ is a bijection which implies that $$\dim(V)=\dim(H_{J_1} \cap H_{J_2} \cap \dots \cap H_{J_\ell}).$$ Define $$V_\cP = \{(x_1,x_2,\dots,x_\ell) : Hx_1=Hx_2=\dots=Hx_\ell,\ \supp(x_i)\subset J_i \text{ and } x_j=x_{j'} \text{ if }j,j'\in P_i\}.$$ Clearly $V_\cP$ is a subspace of $V$. But $(u_1,u_2,\dots,u_\ell)\notin V_\cP$ since all the $u_i$ are distinct, but the partition $\cP=P_1\sqcup \dots \sqcup P_s$ has only $s<\ell$ parts. Therefore $\dim(V)>\dim(V_\cP).$ We will now show that $\dim(V_\cP)\ge \sum_{i=1}^s |J_{P_i}| - (s-1)(n-k)$ which finishes the proof of the claim.

		Define the linear map $f_\cP : H_{J_{P_1}} \cap H_{J_{P_2}} \cap \dots \cap H_{J_{P_s}} \to V_\cP$ as: $f_{\cP}(y)=(x_1,x_2,\dots,x_\ell)$ where $x_i$ is uniquely defined by $Hx_i=y$ where $\supp(x_i)\subset J_{P_j}$ where $i\in P_j.$ Again, $f_\cP$ is also a bijection. Therefore $\dim(V_\cP)=\dim(H_{J_{P_1}} \cap \dots \cap H_{J_{P_s}})$. By Theorem~\ref{thm:main-dim}, for a generic \change{$(n-k)\times n$ matrix $W,$} $\dim(H_{J_{P_1}} \cap \dots \cap H_{J_{P_s}})\ge \dim(W_{J_{P_1}} \cap \dots \cap W_{J_{P_s}}) \ge \sum_{i=1}^s |J_{P_i}|-(s-1)(n-k)$.
\end{proof}
\end{proof}

\subsection{Reed-Solomon Codes}

\subsubsection{Generic Reed-Solomon \change{C}odes}

Our main result Theorem~\ref{thm:main-ld} follows from this next result on duals of generic Reed-Solomon codes.

\begin{proposition}
  \label{prop:genericRS_duality} The dual of a generic Reed-Solomon code is equivalent to a generic Reed-Solomon code (up-to scaling of columns of the generator matrix).
\end{proposition}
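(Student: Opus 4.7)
The plan is to invoke the classical duality formula for Reed-Solomon codes: the dual of the $(n,k)$-Reed-Solomon code with distinct evaluation points $\alpha_1, \ldots, \alpha_n$ admits a generator matrix whose columns are scalar multiples of the columns of an $(n,n-k)$-Vandermonde matrix in the same $\alpha_i$. Since genericity of a Reed-Solomon code depends only on the evaluation points, this will immediately give the proposition.

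Concretely, I would define $v_i := \prod_{j \neq i}(\alpha_i - \alpha_j)^{-1}$, which is nonzero since the $\alpha_i$ are distinct, and let $H$ be the $(n-k)\times n$ matrix whose entry in row $b+1$ (for $0\le b\le n-k-1$) and column $i$ is $v_i\alpha_i^b$. The task is to verify that $H$ is a parity-check matrix for the Reed-Solomon code $C$ generated by the Vandermonde matrix $G$ with $G_{a+1,i}=\alpha_i^{a}$. One computes
$$(HG^{\top})_{b+1,a+1} = \sum_{i=1}^n \frac{\alpha_i^{a+b}}{\prod_{j\neq i}(\alpha_i-\alpha_j)},$$
and this vanishes whenever $a+b\le n-2$ by the standard partial-fraction identity obtained from comparing the $x^{-1}$-coefficient at infinity in $x^{a+b}/\prod_i(x-\alpha_i)$. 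Since $0\le a\le k-1$ and $0\le b\le n-k-1$ force $a+b\le n-2$, we conclude $HG^{\top}=0$. Linear independence of the rows of $H$ follows from the Vandermonde structure obtained after scaling columns by $v_i^{-1}$, so $H$ indeed generates $C^{\perp}$.

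To conclude, observe that scaling the $i$-th column of $H$ by $v_i^{-1}$ produces precisely the $(n-k)\times n$ Vandermonde matrix with the same evaluation points $\alpha_1,\ldots,\alpha_n$, which is the generator matrix of the $(n,n-k)$-Reed-Solomon code in those points. Since the $\alpha_i$ are generic by hypothesis, the resulting code is by definition a generic Reed-Solomon code, and so $C^{\perp}$ is equivalent to a generic Reed-Solomon code up to column scaling, as claimed. There is no substantive obstacle here: the only technical ingredient is the elementary partial-fraction identity displayed above, and the key structural observation is that the dual is governed by the same evaluation points as the primal code, so genericity of the $\alpha_i$ transfers automatically.
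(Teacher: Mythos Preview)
Your proof is correct and follows essentially the same approach as the paper: both identify the parity-check matrix $H$ with entries $H_{b+1,i}=\alpha_i^{b}/\prod_{j\neq i}(\alpha_i-\alpha_j)$ and observe that column-scaling by $\prod_{j\neq i}(\alpha_i-\alpha_j)$ yields the $(n-k)\times n$ Vandermonde matrix in the same generic evaluation points. The only difference is that the paper cites this duality formula as a standard fact about Reed--Solomon codes, whereas you supply a self-contained verification via the partial-fraction (equivalently, Lagrange interpolation) identity.
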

\begin{proof} Let $C$ be a generic $\change{[n,k]}$-Reed-Solomon code. Let $(\alpha_1, \hdots, \alpha_n)$ be the generators of $C$. Let $C'$ be the $\change{[n,n-k]}$-code also generated by $(\alpha_1, \hdots, \alpha_n)$. Let $G'_{n-k,n}$ be the Vandermonde matrix with generators $(\alpha_1, \hdots, \alpha_n)$. Note that by definition $G'$ is also generic.

  For all $i \in [n]$, let $\Delta_i = \prod_{j \in [n]\setminus\{i\}} (\alpha_i - \alpha_j)$. By standard results\footnote{See, for example, \cite{macwilliams1977theory} or Theorem 5.1.6 in~\cite{hall7notes}.} on Reed-Solomon codes, $C^{\perp}$ is a \emph{generalized} Reed-Solomon code with generator matrix $H$ such that $H_{j,i} = \alpha_i^{j-1} / \Delta_i$. Since $C$ is generic, we have that $\Delta_i \neq 0$ for all $i \in [n]$. Therefore, each column of $H$ is a nonzero scalar multiple of a column of $G'$. Thus $C^\perp$ is equivalent to a generic Reed-Solomon code.
\end{proof}

\begin{proof}[Proof of Theorem~\ref{thm:main-ld}] From Theorem~\ref{thm:gmmds_gzp} (GM-MDS), Theorem~\ref{thm:main-mds}, and Proposition~\ref{prop:genericRS_duality}, we know that generic Reed-Solomon codes are $\LDMDS(L)$ for all $L$. In other words they are $(\rho,L)$-average-radius list-decodable for $\rho=\frac{L}{L+1}(1-R)$ for all $L.$ This also implies that they are $(\rho,L)$-list-decodable for the same $\rho.$
\end{proof}

\subsubsection{Random Reed-Solomon \change{C}odes}
\label{sec:random_rs_construction}

Here we prove Theorem~\ref{thm:main-ld-random} and show how one can make Theorem~\ref{thm:main-ld} more quantitative in order to reason about the list-decoding capabilities of a \emph{random} Reed-Solomon code. For code parameters $\change{[n,k]}$ and a finite field $\F$, we define a random $\change{[n,k]}$-Reed-Solomon code to be one whose generators $\alpha_1, \hdots, \alpha_n$ are chosen uniformly and independently from $\F$.

\begin{proposition}\label{prop:rRS_ldmds} Let $n,k,L$ be positive integers. There exists a function $c(n,k,L) = 2Ln^2\binom{n}{\le n-k}^{L+1}$ such that a random $\change{[n,k]}$-Reed-Solomon code is $\LDMDS(\le L)$, and thus $(\frac{\ell}{\ell+1}(1-k/n),\ell)$-list decodable for all $\ell \le L$, with probability at least $1 - c / |\F|$.
\end{proposition}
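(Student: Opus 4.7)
My plan is to dualize via Theorem~\ref{thm:main-mds} and then apply Schwartz--Zippel over a carefully-chosen finite family of polynomial conditions. By Theorem~\ref{thm:main-mds}, $C$ is $\LDMDS(\le L)$ iff $C^{\perp}$ is $\MDS(L+1)$, and by Proposition~\ref{prop:genericRS_duality} the dual $C^{\perp}$ is, up to column scaling (which preserves $\MDS(\ell)$ since scaling columns does not change the subspaces $G_A$), the $(n,n-k)$-Reed-Solomon code with generator given by the Vandermonde matrix $G'$ with $G'_{j,i} = \alpha_i^{j-1}$ in the same random evaluation points. So it suffices to bound the probability that $G'$ fails $\MDS(L+1)$.

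By Lemma~\ref{lem:null}, $G'$ is $\MDS(L+1)$ iff it is MDS and, for every tuple $A = (A_1,\ldots,A_{L+1})$ with $|A_i|\le n-k$ and $\sum_i|A_i|=L(n-k)$, the intersection $\bigcap_i G'_{A_i}$ vanishes iff the generic intersection $\bigcap_i W_{A_i}$ does. The MDS part fails with probability at most $\binom{n}{2}/|\F|$ (only distinctness of the $\alpha_i$ is needed). Since the generic dimension is a lower bound on the actual dimension, the only tuples that can cause trouble are the \emph{null-intersection} tuples, i.e.\ those for which the generic intersection is already zero.

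For each such tuple I will encode the bad event $\bigcap_i G'_{A_i}\neq 0$ as the vanishing of a single polynomial $\det M_A(\alpha)$. Dualizing, $\bigcap_i G'_{A_i} = 0$ iff $\sum_i (G'_{A_i})^{\perp} = \F^{n-k}$; identifying $\F^{n-k}$ with $\F[x]_{<n-k}$, a column $(1,\alpha,\ldots,\alpha^{n-k-1})^T$ of $G'$ corresponds to evaluation at $\alpha$, so $(G'_{A_i})^{\perp} = q_i \cdot \F[x]_{<n-k-|A_i|}$, where $q_i(x) = \prod_{j\in A_i}(x-\alpha_j)$. The source and target both have dimension $n-k$, so surjectivity of $\bigoplus_i \F[x]_{<n-k-|A_i|} \to \F[x]_{<n-k}$, $(r_i)\mapsto \sum_i q_i r_i$, is equivalent to the square matrix $M_A(\alpha)$ (whose columns are the coefficient vectors of $x^s q_i(x)$) being nonsingular. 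Every entry in the $i$-th column block is, up to sign, an elementary symmetric polynomial in $\{\alpha_j : j\in A_i\}$ and thus has $\alpha$-degree at most $|A_i|$; since block $i$ contributes $n-k-|A_i|$ columns to each permutation expansion term,
\[
  \deg_\alpha \det M_A \;\le\; \sum_{i=1}^{L+1} |A_i|(n-k-|A_i|) \;\le\; L(n-k)^2 \;\le\; Ln^2.
\]
By Corollary~\ref{cor:main-RS}, generic Reed-Solomon codes are $\MDS(L+1)$, so for a null-intersection tuple $\det M_A$ is a nonzero polynomial in the $\alpha_i$; Schwartz--Zippel then gives $\Pr[\det M_A(\alpha)=0]\le Ln^2/|\F|$.

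Union-bounding over the at most $\binom{n}{\le n-k}^{L+1}$ relevant tuples and adding the MDS-failure, the total failure probability is at most $\binom{n}{2}/|\F| + Ln^2 \binom{n}{\le n-k}^{L+1}/|\F| \le 2Ln^2\binom{n}{\le n-k}^{L+1}/|\F|$, which is the claimed bound; the $(\rho,\ell)$-list-decodability consequences then follow from the definition of $\LDMDS(\le L)$. The main obstacle is the polynomial formulation of $\bigcap_i G'_{A_i}$: choosing the $q_i \cdot \F[x]_{<n-k-|A_i|}$ viewpoint is what both makes the relevant matrix square (so that nonsingularity is a single determinant condition) and gives the clean $O(Ln^2)$ degree bound. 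Nontriviality of that determinant as a polynomial in $\alpha$ is handed to us for free by Corollary~\ref{cor:main-RS} (i.e.\ GM-MDS plus Theorem~\ref{thm:main-mds}); everything else is bookkeeping.
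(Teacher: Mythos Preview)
Your proof is correct and follows the same high-level arc as the paper: dualize via Theorem~\ref{thm:main-mds} and Proposition~\ref{prop:genericRS_duality}, reduce to checking null-intersection tuples via Lemma~\ref{lem:null}, encode each bad event as the vanishing of a single determinant, invoke Corollary~\ref{cor:main-RS} for nontriviality, apply Schwartz--Zippel, and union-bound.

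The one genuine difference is the choice of determinant. The paper uses the $L(n-k)\times L(n-k)$ block matrix from Appendix~B of \cite{bgm2021mds},
\[
  \begin{pmatrix}
    G'_{A_1} & G'_{A_2} & & \\
    G'_{A_1} & & G'_{A_3} & \\
    \vdots & & & \ddots \\
    G'_{A_1} & & & & G'_{A_{L+1}}
  \end{pmatrix},
\]
whose entries are monomials of degree at most $n-k-1$ in the $\alpha_i$, giving a determinant of degree at most $L(n-k)(n-k-1)\le Ln^2$. You instead dualize to $\sum_i (G'_{A_i})^{\perp}=\F^{n-k}$ and use the $(n-k)\times(n-k)$ Sylvester/resultant-type matrix whose columns are the coefficient vectors of $x^s q_i(x)$; its entries are elementary symmetric polynomials of degree at most $|A_i|$, giving a determinant of degree at most $\sum_i |A_i|(n-k-|A_i|)=L(n-k)^2-\sum_i|A_i|^2\le Ln^2$. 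Both encodings are valid and land on the same degree bound; yours yields a smaller matrix at the cost of more complicated entries, and has the pleasant feature that the containment $q_i\,\F[x]_{<n-k-|A_i|}\subseteq (G'_{A_i})^{\perp}$ holds even without distinctness of the $\alpha_j$, so $\det M_A\neq 0$ still implies the good event unconditionally.
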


\begin{proof} Let $C$ be the code generated by the random \change{choice of} $\alpha_1, \hdots, \alpha_n$. With probability at least $1 - n^2/|\F|$, we have that $\alpha_i \neq \alpha_j$ for all $i \neq j \in [n]$ and thus $C$ is $\MDS$. From the proof of Proposition~\ref{prop:genericRS_duality} and Theorem~\ref{thm:main-mds}, we then have that $(\alpha_1, \hdots, \alpha_n)$ generate \change{an} $\LDMDS(\le L)$ matrix if and only if the $(n,n-k)$-matrix $G'$ with entries $G'_{j,i} = \alpha_i^{j-1}$ is $\MDS(L+1)$. In order to check that $G'$ is $\MDS(\change{L+1})$, it suffices to show by Lemma~\ref{lem:null} that for all null intersecting families $A_1, \hdots, A_{L+1} \subseteq [n]$, each of size at most $n-k$, with total size $L(n-k)$, we have that
  \begin{align} G'_{A_1} \cap \cdots \cap G'_{A_{L+1}} = 0.\label{eq:asdf}
  \end{align}
\change{We now use the following result of \cite{bgm2021mds}.
\begin{proposition}[Proposition B.1~\cite{bgm2021mds}, c.f., \cite{tian2019formulas}]\label{prop:bgm}
For any $G \in \F^{k \times n}$ and $A_1, \hdots, A_\ell \subseteq [n]$,
\[
  \rank \begin{pmatrix} \change{G|_{A_1}} & \change{G|_{A_2}} & & &\\ \change{G|_{A_1}} & & \change{G|_{A_3}} & &\\ \vdots & & & \ddots &\\ \change{G|_{A_1}} & & & & \change{G|_{A_{\ell}}}
    \end{pmatrix}
    = \sum_{i=1}^{\ell} \dim(G_{A_i}) - \dim(G_{A_1} \cap \cdots \cap G_{A_\ell}).
\]
\end{proposition}
}
\change{Since $G'$ is MDS and each $A_i$ has size at most $n-k$, $\dim(G_{A_i}) = |A_i|$ for all $i \in [L+1]$. Further, size the total size of the $A_i$'s is $L(n-k)$, we have that (\ref{eq:asdf}) is equivalent to the following block matrix being nonsingular.}
  \[
    \begin{pmatrix} \change{G'|_{A_1}} & \change{G'|_{A_2}} & & &\\ \change{G'|_{A_1}} & & \change{G'|_{A_3}} & &\\ \vdots & & & \ddots &\\ \change{G'|_{A_1}} & & & & \change{G'|_{A_{L+1}}}
    \end{pmatrix}
  \] The square matrix has size $L(n-k)$ and each entry has degree at most $n-k-1$. Therefore, the determinant, which we know must not symbolically vanish by Corollary~\ref{cor:main-RS}, has total degree at most $Ln^2$. Therefore, by the Schwartz-Zippel lemma~\cite{Sch80,Zip79}, the probability that this determinant is zero is at most $\frac{Ln^2}{|\F|}$. Now, the number of choices\footnote{This can be optimized.}  of $A_1, \hdots, A_{L+1} \subseteq [n]$ to consider is at most $\binom{n}{\le n-k}^{L+1}$. Therefore, the probability that $G'$ is $\MDS(L+1)$, and thus $C$ is $\LDMDS(\le L)$ is at most,
  \[ 1 - \frac{n^2 + Ln^2\binom{n}{\le n-k}^{L+1}}{|\F|},
  \] as desired.
\end{proof}

\begin{remark}
  \label{rem:bgm_construction} The work \cite{bgm2021mds} shows that a random linear code achieves $\MDS(\ell)$ at field size $O_{\ell}(n^{(n-k)(\ell-1)} (n-k)^{2\ell(n-k)})$, and thus is $\MDS(\le L)$ with field size $n^{O(kL)}$. Since we prove that generic Reed-Solomon codes are $\LDMDS(\le L)$, one can adapt their methods to then show that one can take $c(n,k,L) = n^{O(\min(k,n-k)L)}$ in Theorem~\ref{thm:main-ld-random}.
\end{remark}

\subsection{Resolution of Conjecture 5.7 of \texorpdfstring{\cite{shangguan2020combinatorial}}{[ST20]}}
\label{sec:conj_alg_st20}

Shangguan and Tamo~\cite{shangguan2020combinatorial} made an algebraic conjecture in their paper (see Conjecture 5.7 from \cite{shangguan2020combinatorial}) about the non-singularity of certain symbolic matrices, which would imply that generic Reed-Solomon codes achieve list decoding capacity. In this section, we prove this conjecture. The proof follows from some of the results we use to prove Theorem~\ref{thm:main-ld}.  We first introduce the necessary notation to properly state Conjecture~5.7 of \cite{shangguan2020combinatorial}.

Let $C$ be a $\change{[n,k]}$-code with generator matrix $G$. Let $J_1, \hdots, J_t \subseteq [n]$. We define a block matrix $M_{G,(J_1,\hdots, J_t)}$ \change{with $\binom{t}{2}k$ columns and $\binom{t-1}{2}k + \sum_{i \neq j} |J_i \cap J_j|$ rows} as follows.

\change{To describe the entries of $M$, we split the columns of $M$ into $\binom{t}{2}$ blocks of size $k$, where we identify each block with a pair $\{i,j\} \in \binom{[t]}{2}$. For the rows, we have two types (a) and (b), which consist of $\binom{t-1}{2}k$ and $\sum_{i \neq j} |J_i \cap J_j|$ rows, respectively. For the (a)-type rows, we split them into $\binom{t-1}{2}$ blocks of size $k$, where identify the blocks with pairs $\{i,j\} \in \binom{[t-1]}{2}$. For the (b)-type rows, we split them into $\binom{t}{2}$ blocks, which are identified with pairs $\{i,j\} \in \binom{[t]}{2}$. The $(i,j)$th type-(b) row block has $|J_i \cap J_j$ rows. The entries of $M$ are filled in as follows.}

\begin{itemize}
\item[(a)] \change{Consider the type (a) row block indexed by $\{i,j\} \in \binom{[t-1]}{2}$ with $i < j$. Where this row meets the column blocks $\{i,j\}$ and $\{j,t\}$, we insert the identity matrix $I_k$. Where this row meets the column block $\{i,t\}$, the matrix $-I_k$ appears. All other entries in this row block are zero.}

\item[(b)] \change{Consider the type (b) row block indexed by $\{i,j\} \subseteq [t].$ Where this row block meets column block $\{i,j\}$, we insert the matrix $G^{\top}|_{J_i \cap J_j}$. All other entries in this row block are zero.}
\end{itemize}

The following lemma of \cite{shangguan2020combinatorial} relates $M_{G,(J_1,\hdots, J_t)}$ to list decoding properties of $C$.

\begin{lemma}[\cite{shangguan2020combinatorial}, lightly edited] Let \change{$t \ge 2$ and }$\tau \ge 1$. Assume there exists $\change{c_1, c_2, \hdots, c_t} \in C \cap B_{\tau}(y)$. Then, there exist $\change{J_1, \hdots, J_t}$ such that $|J_i| \ge n - \tau$ for all $i$ and $M_{G,\change{(J_1, \hdots, J_t)}}$ does not have full column rank.
\end{lemma}

To give intuition about $M$, we include an adaptation of \cite{shangguan2020combinatorial}'s proof of this lemma for completeness.

\begin{proof} For all $i \in \change{[t]}$, let $J_i$ be the set of coordinates for \change{which} $c_i$ and $y$ are equal. Since $c_i \in B_{\tau}(y)$, we have that $|J_i| \ge n - \tau$. Since $G$ is the generator matrix of $C$, for all $i \in \change{[t]}$, there exists a unique $f_i \in \F^k$ such that $c_i = G^{\top} f_i$. For all $i, j \in \change{[t]}$ with $i < j$ let $f_{ij} = f_j - f_i$.

  Let $v$ be a column vector of length $\binom{t}{2}k$ which is the $f_{ij}$'s concatenated together in the \change{same order as the column blocks $\{i,j\} \in \binom{[t]}{2}$ of $M$.} To \change{complete the proof, it} suffices to show that $v \neq 0$ but $Mv = 0$.

  First, to see why $v \neq 0$, note that
  \[ G^{\top}f_{01} = G^{\top} (f_1 - f_0) = c_1 - c_0 \neq 0.
  \] Thus, $f_{01} \neq 0$, so $v \neq 0$.

  Second, to see why $Mv = 0$, we split the analysis into the type (a) rows and the type (b) rows. For the type (a) rows, for each $i < j \in \change[t]$ note that $Mv$ restricted to this \change{row} block equals $f_{ij} + f_{jt} - f_{it} = (f_j - f_i) + (f_t - f_j) - (f_t - f_i) = 0$. For the type (b) rows, it suffices to check for all $i < j \in \change{[t]}$ that $G^{\top}\change{|}_{J_i \cap J_j} f_{ij} = 0$. Observe that
  \[ G^{\top}\change{|}_{J_i \cap J_j} f_{ij} = G^{\top}\change{|}_{J_i \cap J_j} (f_j - f_i) = \left.(c_j - c_i)\right|_{J_i \cap J_j}
  \] Note that, by definition, for all indices $a \in J_i \cap J_j$, we have that $(c_j)_a = y_a = (c_i)_a$. Thus, the above expression does indeed equal zero. Therefore, $Mv = 0$.

  Thus, $M$ lacks full column rank.
\end{proof}

As a result of this lemma, \cite{shangguan2020combinatorial} formulated the following conjecture whose resolution also implies that generic Reed-Solomon codes reach list-decoding capacity.

\begin{conjecture}[Conjecture 5.7 of~\cite{shangguan2020combinatorial}--restated] Let $J_1, \hdots, J_t \subseteq [n]$ be such that for all $S \subseteq [t]$,
  \begin{align} \sum_{i \in S} |J_i| - \left|\bigcup_{i \in S} J_i\right| \le (|S|-1)k,\label{eq:st20}
  \end{align} and further that (\ref{eq:st20}) is an equality when $S = [t]$. Let $G$ be a generic $\change{[n,k]}$-Vandermonde matrix. Then, $M_{G,(J_1,\hdots,J_t)}$ has full column rank.
\end{conjecture}

We now prove this conjecture.

\begin{proof} Note that $M_{G,(J_1,\hdots, J_t)}$ only includes entries from the $i$th column of $G$ if $i \in J_j$ for some $j \in [t]$. \change{In particular, $M_{G,(J_1,\hdots,J_t)}$ has full column rank if and only if $M_{G|_{J_1 \cup \cdots \cup J_t}, (J_1, \hdots, J_t)}$ has full column rank.} Thus, we may assume without loss off generality that \change{$G = G|_{J_1 \cup \cdots \cup J_t}$. In other words,}
  \begin{align} \bigcup_{i \in [t]} J_i = [n]\change{.} \label{eq:wlog}
  \end{align}

\change{Also observe that for any $i \in [t]$, if we look at (\ref{eq:st20}) with $S = [t]\setminus \{i\}$ and subtract from it the equality case of (\ref{eq:st20}) with $S = [t]$, we get that
\begin{align*}
  -|J_i| + \left|\bigcup_{j \in [t]} J_j\right| - \left|\bigcup_{j \in [t] \setminus \{i\}} J_j\right|&\le -k.
\end{align*}
Since $\left|\bigcup_{j \in [t]} J_j\right| - \left|\bigcup_{j \in [t] \setminus \{i\}} J_j\right| \ge 0$, we have that
\begin{align}
  |J_i| &\ge k. \label{eq:Jk}
\end{align}
}

  Assume for sake of contradiction that $M_{G,(J_1, \hdots, J_t)}$ lacks full column rank. Thus, there exists nonzero $v \in \F^{\binom{t}{2}k}$ such that $Mv = 0$. For each $i < j \in [n]$, let $f_{ij} \in \F^k$ be the block of $v$ corresponding to $(i,j)$. Further, define $f_t = 0$ and $f_i = -f_{it}$ for all $i \in [t-1]$. We claim that for all $i < j \in [n]$ that $f_{ij} = f_j - f_i$. This is by definition when $j = t$. Otherwise, if $j < t$, then by the type (a) rows of $M$, we may deduce that $f_{ij} - f_{it} + f_{jt} = 0$, which implies that $f_{ij} = f_j - f_i$.

  For all $i \in [t]$, define $c_i \change{= G^{\top} f_i}$. Since $v \neq 0$, we know that $f_{ij} \neq 0$ for some $i < j$. Thus, for some $i < j$, we have that $c_i \change{\neq} c_j$.

  Due to the type (b) rows of $M$, we can deduce for all $i < j \in [n]$ that $G^{\top}\change{|}_{J_i \cap J_j} f_{ij} = 0$, so $G^{\top}\change{|}_{J_i \cap J_j}f_i = G^{\top}\change{|}_{J_i \cap J_j} f_j$. Therefore, $\left.c_i\right|_{J_i \cap J_j} = \left.c_j\right|_{J_i \cap J_j}.$

  Let $y \in \F^n$ be a vector such that $y_{J_i} = \left.(c_i)\right|_{J_i}$ for all $i$. Note that at least one $y$ must exist because each of the $c_i$'s are consistent.

  For all $i \in [t]$, let $\bar{J}_i = [n] \setminus J_i$ and $u_i = c_i - y$. Note that $\supp(u_i) \subseteq \bar{J}_i$. Let $H$ be the parity-check matrix of $C$. Then, since $Hc_1 = \cdots = Hc_{t}$, we have that $Hu_1 = \cdots Hu_{t}$. Thus, the vector $(-u_1, u_2, u_3, \hdots, u_L)$ is in the kernel of the following matrix.
  \[
    \begin{pmatrix} H\change{|}_{\bar{J}_1} & H\change{|}_{\bar{J}_2} & & &\\ H\change{|}_{\bar{J}_1} & & H\change{|}_{\bar{J}_3} & &\\ \vdots & & & \ddots &\\ H\change{|}_{\bar{J}_1} & & & & H\change{|}_{\bar{J}_t}
    \end{pmatrix}
  \] Since the $c_i$'s are not all equal, the $u_i$'s are not all $0$.  Therefore, \change{by Proposition~\ref{prop:bgm}, we have that 
\begin{align*}
\sum_{i=1}^t \dim(H|_{\bar{J}_i}) - \dim(H_{\bar{J}_1} \cap \cdots \cap H_{\bar{J}_t}) \le \sum_{i=1}^t |\bar{J}_i| - 1.
\end{align*}
By (\ref{eq:Jk}), so $|\bar{J}_i| \le n-k$ for all $i \in [t]$. Since $H$ is MDS, we therefore have that $\dim(H|_{\bar{J}_i}) = |\bar{J}_i|$ for all $i \in [t]$. Thus, we can deduce that $H_{\bar{J}_1} \cap \cdots \cap H_{\bar{J}_t} \neq 0$.
}

  Observe that by (\ref{eq:st20}) we have that for any nonempty $S \subseteq [t]$, we have that
  \begin{align*} \left|\bigcap_{i \in S} \bar{J}_i\right| &= n - \left|\bigcup_{i \in S} J_i\right|\\ &\le n + (|S| - 1)k - \sum_{i \in S} |J_i|\\ &= \sum_{i \in S} |\bar{J}_i| - (|S|-1)(n - k),
  \end{align*} with equality when $S = [t]$. Thus, for any partition $P_1 \sqcup P_2 \sqcup \cdots \sqcup P_s = [t]$, we have that
  \begin{align*} \sum_{i=1}^s \left|\bigcap_{j \in P_i}\bar{J}_{j}\right| &\le \sum_{i \in [t]} |\bar{J}_i| - \sum_{i=1}^s (|P_i|-1)(n -k)\\ &= \left|\bigcap_{i \in [t]}\bar{J}_{i}\right| + (t-1)(n-k) - (t - s)(n-k)\\ &= n - \left|\bigcup_{i \in [t]}J_i\right| + (s-1)(n-k)\\ &= (s-1)(n-k),
  \end{align*} The last equality follows from our WLOG assumption (\ref{eq:wlog}). Therefore, by Theorem~\ref{thm:main-dim} the generic intersection dimension of $\bar{J}_1, \hdots, \bar{J}_t$ is zero. By Proposition~\ref{prop:genericRS_duality} and Corollary~\ref{cor:main-RS}, the dual of a generic Reed-Solomon code is $\MDS(t)$. Thus, $H_{\bar{J}_1} \cap \cdots \cap H_{\bar{J}_t} = 0$, a contradiction.
\end{proof}

\section{Computing Generic Intersection Dimension in Polynomial Time}\label{app:compute}

In this section, we present a polynomial-time algorithm for computing the generic intersection dimension of a family of sets. Let $A_1, \hdots, A_\ell \subseteq [n]$ be sets of size at most $k$. Consider the following linear program:

\begin{center} \textbf{Primal LP}
\end{center}
\begin{align*} \textbf{minimize: } & k - \sum_{i =1}^\ell \delta_i\\ \textbf{subject to: } & \forall i \in [\ell],\ \ \ & \delta_i &\ge 0\\ & \forall I \subseteq [\ell]\ (I \neq \emptyset),\ \ \ & \sum_{i \in I} \delta_i &\le k - |A_I|.
\end{align*}

By Theorem~\ref{thm:main-dim} and Lemma~\ref{lem:hall-part}, we know that the optimal \emph{integral} solution to this linear program is equal to the generic intersection dimension of $A_1, \hdots, A_\ell$. In fact, we shall prove that the optimal ``fractional'' has the same objective value.

\begin{lemma}\label{lem:duality} The objective value of the Primal LP is equal to the $k$-dimensional generic intersection dimension of $A_1, \hdots, A_\ell$.
\end{lemma}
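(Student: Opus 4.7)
The plan is to combine an explicit feasible primal point furnished by Lemma~\ref{lem:hall-part} with a matching bound from LP duality, using partitions as feasible dual solutions. Strong LP duality then forces the relaxation to attain the desired integer value.

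Let $d$ denote the generic intersection dimension of $A_1,\ldots,A_\ell$. It is convenient to rewrite the Primal LP equivalently as ``maximize $\sum_i \delta_i$ over the same feasible region'', since its objective then equals $k$ minus the LP value. For the upper bound on the LP optimum, I would apply the equivalence (b)$\Leftrightarrow$(c) in Lemma~\ref{lem:hall-part} with this $d$ (which is admissible because (c) holds by Theorem~\ref{thm:main-dim}) to obtain nonnegative integers $\delta_1,\ldots,\delta_\ell$ with $\sum_i \delta_i = k-d$ and $\sum_{i\in I}\delta_i \le k-|A_I|$ for every nonempty $I \subseteq [\ell]$. This is a feasible primal point of objective value $d$, so the LP optimum is at most $d$.

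For the reverse inequality I would pass to the dual LP,
\[
\min \sum_{\emptyset \ne I \subseteq [\ell]} y_I\bigl(k-|A_I|\bigr) \quad \text{s.t. } y_I \ge 0, \ \ \sum_{I \ni i} y_I \ge 1 \ \ \forall i \in [\ell].
\]
Any partition $\cP = P_1 \sqcup \cdots \sqcup P_s = [\ell]$ gives a feasible dual solution by setting $y_{P_j}=1$ for $j\in[s]$ and $y_I=0$ otherwise, with objective $\sum_{j=1}^s (k-|A_{P_j}|)$. Minimizing over all partitions and invoking Theorem~\ref{thm:main-dim} shows this infimum equals $k-d$. Weak LP duality then forces $\max \sum_i \delta_i \le k-d$, i.e., the original (minimization) LP has value at least $d$. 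Combining with the previous paragraph gives equality.

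The only non-routine observation is recognizing that the max-over-partitions formula of Theorem~\ref{thm:main-dim} coincides with the minimum over ``partition-style'' integer feasible solutions of the covering dual, which closes the duality gap without requiring any integrality theorem beyond what Lemma~\ref{lem:hall-part} already provides. The main thing to be careful about is the direction of the LP: the constraints are upper bounds on $\sum_{i\in I}\delta_i$, so the dual is a covering LP (hence partitions are feasible) rather than a packing LP; once this is set up correctly, the proof is a one-line application of weak duality.
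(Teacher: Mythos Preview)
Your proof is correct and follows essentially the same approach as the paper: exhibit a feasible primal point of value $d$ via Lemma~\ref{lem:hall-part} (using that condition (c) there holds by Theorem~\ref{thm:main-dim}), and a feasible dual point of value $d$ via the tight partition from Theorem~\ref{thm:main-dim}, then sandwich. The only cosmetic difference is that you invoke weak duality (which suffices) while the paper states strong duality upfront; the content of the argument is identical.
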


Even with this observation, it is not obvious that the Primal LP can be solved in $\poly(\change{n,\ell})$ time, as there are roughly $2^\ell$ constraints. However, we shall demonstrate the Primal LP can still be solved efficiently.

\begin{lemma}\label{lem:efficient-LP} One can solve the Primal LP in $\poly(\change{n,\ell})$ time.
\end{lemma}

As a result of these two lemmas, we \change{prove Theorem~\ref{thm:poly-time}} as a corollary.

\change{\polytime*}

We note that while this is the same result as that proved in \change{Appendix}~\ref{sec:inv-theory}, the proof methods are very different and seem to highlight different structural aspects of generic intersections.

The remainder of this appendix is devoted to proving the two lemmas.

\subsection{Proof of Lemma~\ref{lem:duality}}

By the theory of LP duality, the objective value of the Primal LP is equal to the objective value of the Dual LP.

\begin{center} \textbf{Dual LP}
\end{center}
\begin{align*} \textbf{maximize: } & k - \sum_{\substack{I \subseteq [\ell]\\I \neq \emptyset}} (k - |A_I|) \mu_I \\ \textbf{subject to: } & \forall I \subseteq [\ell]\ (I \neq \emptyset),\ \ \ & \mu_I &\ge 0\\ & \forall i \in [\ell],\ \ \ & \sum_{\substack{I \subseteq [\ell]\\i \in I}} \mu_I &\ge 1.
\end{align*}

By Theorem~\ref{thm:main-dim} and Lemma~\ref{lem:hall-part}, we have that there exists a partition $P_1 \sqcup \cdots \sqcup P_s = [\ell]$ such that
\[ \sum_{j=1}^s |A_{P_j}| = (s-1)k + d,
\] where $d$ is the generic intersection dimension. Consider the following assignment to the Dual LP: $\mu_{P_j} = 1$ for all $j \in [s]$ and $\mu_S = 0$ otherwise. Note that since each $i \in [\ell]$ is a member of (at least) one of the $P_j$'s, we have that this assignment is feasible for the Dual LP. The objective value is then
\[ k - \sum_{j=1}^s (k - |A_{P_j}|) = \sum_{j=1}^s |A_{P_j}| - (s-1)k = d.
\]

Thus, by duality, the objective value of the Primal LP is at least the generic intersection dimension $d$. Since we previously mentioned that this value $d$ is attainable by an integral assignment (via Lemma~\ref{lem:hall-part}), we have that the objective value of the Primal LP (and the Dual LP) is exactly the generic intersection dimension.

One can also prove that the Primal LP and the Dual LP are both integral directly (i.e., without using Lemma~\ref{lem:hall-part}) using the theory of Total Dual Integrality~\cite{schrijver2003combinatorial}.

\subsection{Proof of Lemma~\ref{lem:efficient-LP}}

To solve the Primal LP, it suffices to implement an efficient (i.e., in $\poly(\change{n,\ell})$ time) separation oracle (c.f., \cite{grotschel1993geometric}). In particular, given nonnegative $\delta_1, \hdots, \delta_\ell \in \Q$, we need to efficiently compute that either (1) the $\delta$'s satisfy the primal LP or (2) exhibit a nonempty $I \subseteq [\ell]$ for which $\sum_{i \in I}\delta_i > k - |A_I|$.

Consider the function $f: 2^{[\ell]} \to \Q$ defined by
\[ f(I) = k - |A_I| - \sum_{i \in I} \delta_i.
\] with $f(\emptyset) = 0$. Observe that computing the separation oracle is thus equivalent to either verifying that $f$ is nonnegative or exhibiting an $I \subseteq [\ell]$ such that $f(I) < 0$.

It is straightforward to verify that $f$ is \emph{submodular}, that is for all $I, J \subseteq [\ell]$,
\[ f(I) + f(J) \ge f(I \cup J) + f(I \cap J).
\] Minimizing such a function can be done in $\poly(\ell)$ queries to $f$ (c.f., \cite{grotschel1993geometric}), and thus we can efficiently determine whether there exist $I \subseteq [\ell]$ such that $f(I) < 0$ (note that such an $I$ must be nonempty). Therefore, the Primal LP can be solved efficiently.

\appendix

\change{
\section{Connections to Invariant Theory}\label{sec:inv-theory}
}

The combinatorial characterization of the generic intersection dimension in Theorem~\ref{thm:main-dim} allows for a surprising connection between $\change{\MDS}(\ell)$ codes and invariant theory. A simple observation characterizes the generic intersection dimension as an instance of the well-known \change{Edmonds'} problem, i.e., a computation of the rank of a symbolic matrix with linear entries, see \cite[Appendix B]{bgm2021mds}. \change{Edmonds'} problem is not known to have a polynomial-time algorithm. Nevertheless, the non-commutative version\footnote{The symbolic variables are considered non-commutative, the base field remains commutative.} of the Edmonds' problem does have a polynomial-time algorithm, but
this is highly non-trivial and rests on some deep results in invariant
theory, see \cite{GGOW, IQS18}. The combinatorial characterization in
Theorem 1.15 allows us to prove in a curious way that the Edmonds'
problem for the generic intersection dimension is equivalent to its non-commutative counterpart, which then immediately yields a (rather distinct) polynomial-time algorithm for computing generic intersection dimension.
  
\subsection{Linear Matrices, Non-commutative Rank and the Blow-up Regularity Lemma} A matrix $L = t_1X_1 + \dots + t_mX_m$, where $X_i$ are $p \times q$ matrices with entries in a ground field $\F$ and $t_i$ are indeterminates. is called a linear matrix. There are two important notions of ranks associated to such a linear matrix are the commutative rank and the non-commutative rank. We first state their definitions and then clarify the terminology.

\begin{definition} Let $L = t_1X_1 + \dots + t_mX_m$ be a linear matrix.
\begin{itemize}
\item The commutative rank $\crk(L)$ is defined as $\rk (L)$ viewed as a matrix with entries in the function field $\F(t_1,\dots,t_m)$.
\item The non-commutative rank $\ncrk(L)$ is defined as $\rk (L)$ viewed as a matrix with entries in the free skew-field $\F\llangle t_1,\dots,t_m \rrangle$. 
\end{itemize}
\end{definition}

The reader not interested in skew-fields can very much ignore the skew-fields as long as they accept the characterization of non-commutative rank in terms of blow-ups in (\ref{eq:characterization-ncrk}) below, and perhaps take that to be the definition of non-commutative rank. For the interested readers, we give a few details, but point to references for more details.

First, we note that most of linear algebra works over skew-fields (a.k.a. division algebras). Rank is defined in terms of maximum number of (left)-linearly independent columns. This is sometimes called left column rank. Similarly one can define left and right row and column ranks. Left column rank equals the right row rank and the right column rank equals the left row rank, but the left column rank may not equal the right column rank when working over a general division algebra. For linear matrices interpreted as matrices with entries in the free skew-field, all ranks, i.e., left/right row/column ranks, are all the same.

The free skew-field itself is a technically challenging object to explain, but we will try to give a brief idea of its purpose. If you start with a (commutative) field $\F$ and add $m$ elements $t_1,\dots,t_m$ that have no extra relations on them other than the ones imposed by the axioms of a commutative field (that is one way to define a set of indeterminates), the field generated will be the function field $\F(t_1,\dots,t_m)$. The analogous object where you impose no extra relations other than the ones imposed by the axioms of a skew-field will create the free skew-field. However, unlike the well understood function field, the free skew-field is far more difficult to understand and there are several intricacies in constructing it or even showing its existence. For the reader interested in the details of free skew-fields, we refer to \cite{GGOW} for a gentle introduction and references therein for more technical details, we will not really need them here.

Computing the commutative rank of a linear matrix is \change{Edmonds'} problem and computing the non-commutative rank is the non-commutative version of \change{Edmonds'} problem. Interpolating between these two ranks are the ranks of \emph{blow-ups}, a tool that originated in invariant theory and is crucial in understanding the non-commutative rank. These blow-ups are also crucial for our purposes.

\begin{definition} \label{Def:blow-ups} Let $L = t_1X_1 + \dots + t_mX_m$ be a linear matrix of size $p \times q$. Then, for $d \in \Z_{\geq 1}$, we define the $d^{th}$ blow-up of $L$ to be the $dp \times dq$ matrix
$$
L(T_1,\dots,T_m) := X_1 \otimes T_1 + \dots + X_m \otimes T_m,
$$
where $T_i$ are $d \times d$ matrices whose entries are all distinct variables, say $t^{(i)}_{j,k}$. We define the $d^{th}$ blow-up rank of $L$ by
$$
\rk_d(L) := \crk(L(T_1,\dots,T_m)),
$$
i.e., the latter rank is taken by viewing $L(T_1,\dots,T_m)$ as a matrix with entries in the function field $\F(t^{(i)}_{j,k})$.
\end{definition}

Let us first justify the notation $L(T_1,\dots,T_m) := X_1 \otimes T_1 + \dots + X_m \otimes T_m$. To get $\sum_i X_i \otimes T_i$ from $L$ we replace each entry with a $d \times d$ matrix as follows: if the $(\alpha,\beta)^{th}$ entry of $L$ is of the form $\sum_i c_i t_i$, then we replace it with $\sum_i c_i T_i$. Thus, it is as if we plugged in $T_i$ for $t_i$. We have the equality (see \cite{IQS17,IQS18}):

\begin{equation} \label{eq:characterization-ncrk} \ncrk(L) = \lim_{d \rightarrow \infty} \frac{\rk_d(L)}{d} = {\rm sup}_d \frac{\rk_d(L)}{d} = \max_d \frac{\rk_d(L)}{d}.
\end{equation}

Dividing the rank of the blow-up by $d$ is a normalization is to be expected because we blow-up the size of the matrix by a factor of $d$. There are a few subtleties, for example the sequence $\frac{\rk_d(L)}{d}$ is not always monotone. But perhaps the major subtlety the reader may have noticed is that while $\ncrk(L)$ is an integer by definition, it is not so clear why any of the other expressions are integers. This is a consequence of an amazing result called the blow-up regularity lemma\footnote{This is just called the regularity lemma, but we call it the blow-up regularity lemma to avoid confusion with the regularity criterion.} that was first proved by Ivanyos, Qiao and Subrahmanyam in \cite{IQS17}. A more conceptual proof can be found in \cite{DM-ncrk}.

\begin{theorem} [Blow-up regularity Lemma, \cite{IQS17}] \label{Reg:IQS} Let $L = \sum_{i=1}^m t_i X_i$ be a linear matrix. Then $\frac{\rk_d(L)}{d}$ is an integer.
\end{theorem}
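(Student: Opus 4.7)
My plan is to reduce the divisibility claim to a tightness statement about the Fortin--Reutenauer bound on the rank of a linear matrix. First, I would establish the elementary upper bound: for any pair of subspaces $U \subseteq \F^q$ and $W \subseteq \F^p$ satisfying $X_i U \subseteq W$ for all $i$, the blow-up $L(T_1,\dots,T_m) = \sum_i X_i \otimes T_i$, viewed as a matrix over $\F(t^{(i)}_{j,k})$, sends $U \otimes \F^d$ into $W \otimes \F^d$ and induces a blow-up of the quotient map on $(\F^q / U) \otimes \F^d$. A dimension count on the two pieces yields
\[
\rk_d(L) \;\le\; d \cdot (q - \dim U + \dim W),
\]
so the bound is automatically an integer multiple of $d$. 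Thus it suffices to produce, at each level $d$, a specific pair $(U^\ast, W^\ast)$ for which equality is attained.

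The substantive step is establishing such a tight pair at every finite $d$. In the limit $d \to \infty$, tightness is exactly the Fortin--Reutenauer characterization $\ncrk(L) = \min_{(U,W):\,X_i U \subseteq W}(q - \dim U + \dim W)$. For finite $d$, I would argue that the maximum-rank image of the blow-up carries a tensor structure $V \otimes \F^d$ for some $V \subseteq \F^p$. This structure descends from the $GL_d \times GL_d$-equivariance under substitutions $T_i \mapsto gT_ih$ with $g,h \in GL_d(\F)$, captured by the identity
\[
L(gT_1h,\dots,gT_mh) \;=\; (I_p \otimes g)\, L(T_1,\dots,T_m)\, (I_q \otimes h),
\]
which shows that the commutative rank is preserved under any such substitution. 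Concretely, I would invoke Schofield's theorem (together with the extensions of Derksen--Weyman and Domokos--Zubkov) that the ring of $SL_p \times SL_q$-semi-invariants on $\Mat_{p \times q}(\F)^m$ is generated by determinants of square blow-ups, and then use a semi-continuity / extremal-rank argument to conclude that any symbolic rank realised by a blow-up is witnessed by a shrunk-subspace pair $(U^\ast, W^\ast)$ at level $d$.

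The main obstacle is precisely this tightness step. The elementary upper bound is generally loose for small $d$ (one typically has $\rk_d(L) < d \cdot \ncrk(L)$ when $d$ is small), so exhibiting the right $(U^\ast, W^\ast)$ at a given finite $d$ requires genuinely new input. The original proof of Ivanyos--Qiao--Subrahmanyam in \cite{IQS17} proceeds through the second Wong sequence and compression-space arguments; the more conceptual proof by Derksen--Makam \cite{DM-ncrk} uses polynomial degree bounds for quiver semi-invariants; and the operator-scaling algorithm of Garg--Gurvits--Oliveira--Wigderson \cite{GGOW} provides an analytic route. Any of these establishes the regularity lemma, but all rely on substantive representation-theoretic or analytic machinery that goes well beyond the Fortin--Reutenauer upper bound, and this is where the real difficulty lies.
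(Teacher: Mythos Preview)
The paper does not give its own proof of this statement; it is quoted from \cite{IQS17} with a pointer to the alternative argument in \cite{DM-ncrk}. So there is no in-paper proof to compare your proposal against.

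Your reduction has a genuine gap, and it is not merely a difficulty but an impossibility. Your plan is to exhibit, for each $d$, a pair $(U^\ast, W^\ast)$ at the \emph{original} level (so $U^\ast \subseteq \F^q$, $W^\ast \subseteq \F^p$, $X_iU^\ast \subseteq W^\ast$) with $\rk_d(L) = d\,(q - \dim U^\ast + \dim W^\ast)$. But every such pair satisfies $q - \dim U^\ast + \dim W^\ast \ge \ncrk(L)$ (the easy direction of Fortin--Reutenauer), so tightness would force $\rk_d(L)/d \ge \ncrk(L)$ and hence $\rk_d(L)/d = \ncrk(L)$ for every $d$. This is already false at $d=1$ whenever $\crk(L) < \ncrk(L)$: for the generic $3\times 3$ skew-symmetric linear matrix one has $\rk_1(L)=2$ and $\ncrk(L)=3$, and a short case check shows no pair $(U^\ast,W^\ast)$ with $X_iU^\ast \subseteq W^\ast$ attains $3 - \dim U^\ast + \dim W^\ast = 2$. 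So the ``tightness step'' you flag as the main obstacle is not hard---it is unattainable in general.

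Your intermediate claim, that the generic image of the $d$th blow-up is of the form $V \otimes \F^d$, is also incorrect as stated, and the $\GL_d\times\GL_d$-equivariance you write down does not imply it: that identity shows the image is \emph{moved} by $I_p\otimes g$, not fixed. For a concrete counterexample take $p=4$, $q=2$, $X_1 = \left[\begin{smallmatrix} I_2 \\ 0\end{smallmatrix}\right]$, $X_2 = \left[\begin{smallmatrix} 0 \\ I_2\end{smallmatrix}\right]$; identifying $\F^2\otimes\F^2$ with $\Mat_{2\times 2}$, the map $M \mapsto \left[\begin{smallmatrix} M T_1^\top \\ M T_2^\top\end{smallmatrix}\right]$ has $4$-dimensional image for generic $T_1,T_2$, yet either column of an image element already ranges over all of $\F^4$ as $M$ varies, so the image is not $V\otimes \F^2$ for any $2$-dimensional $V$. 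The proofs in \cite{IQS17} and \cite{DM-ncrk} establish divisibility by a genuinely different mechanism (Wong sequences / semi-invariant degree bounds) and do not proceed by producing an original-level shrunk pair matching $\rk_d(L)$.
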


\subsection{Polynomial Time Computability of Generic Intersection Ranks} Let $W = (w_{ij})$ be a $k \times n$ matrix of indeterminates. Let $\mathcal{A} = (A_1,\dots,A_\ell)$ be a $\ell$-tuple of subsets of $[n]$. \change{Define}
\[ L_{\mathcal{A}}(W) = \begin{pmatrix} \change{W|_{A_1}} & \change{W|_{A_2}} & 0 & 0 & 0 \\ \change{W|_{A_1}} & 0 & \change{W|_{A_3}} & 0 & 0 \\ \vdots & 0 & 0 & \ddots & 0 \\ \change{W|_{A_1}} & 0 & 0 & 0 & \change{W|_{A_\ell}}\change{.}
\end{pmatrix}
\]

Observe that $L_{\mathcal{A}}(W)$ is a linear matrix, i.e., each entry is a linear function in the $w_{ij}$'s (it's much more special than that of course, indeed each entry is either a variable or $0$). So, we can write $L_{\mathcal{A}}(W) = \sum_{i,j} w_{ij} X_{ij}$.

The following definition is perhaps a little unnecessary, but it allows us to present easier the arguments in this section.

\begin{definition} [Generic Intersection rank] The function ${\rank_{GI}}$ takes as input the configuration $(k,n,\mathcal{A} = (A_1,\dots,A_\ell))$ where $A_i \subseteq [n]$ and returns $\dim (\bigcap_i W_{A_i})$ where $W$ is a $d \times n$ matrix of indeterminates, i.e.,
$$
\rank_{GI}(k,n,\mathcal{A}) = \dim\left( \bigcap_i W_{A_i}\right).
$$
\end{definition}

\change{By Proposition~\ref{prop:bgm}}, we have that
\begin{equation}\label{eq:gi-rank} \rank_{GI}(k,n,\mathcal{A}) = \sum_i |A_i| - \crk(L_\mathcal{A}(W)).
\end{equation}
\subsubsection{A \change{D}oubling \change{O}peration}

Given a configuration $(k,n,\mathcal{A})$, we will define a doubling operation as follows. We will assume without loss of generality that $|A_i| \leq d$ for all $i$. First, identify $[2n]$ with $\{1,\widetilde{1},2,\widetilde{2},\dots,n,\widetilde{n}\}$. Then, for each $A_i \in \mathcal{A}$, define
$$
A_i^{(2)} := A_i \cup \widetilde{A_i} \subseteq \{1,\widetilde{1},2,\widetilde{2},\dots,n,\widetilde{n}\},
$$
where $\widetilde{A_i} :=\{\widetilde{a}:a \in A_i\}$. We insist that we will order the elements in $A_i^{(2)}$ in the increasing order where the order is given by $1 < \widetilde{1} < 2 < \dots < n < \widetilde{n}$, so for example if $A_i = \{1,2\}$, then $A_i^{(2)} = \{1,\widetilde{1}, 2, \widetilde{2}\}$. Finally, define
$$
\mathcal{A}^{(2)} := (A_1^{(2)},\dots, A_\ell^{(2)}).
$$

\begin{definition} [Doubling configuration] To the configuration $\left(k,n,\mathcal{A} = (A_1,\dots,A_\ell)\right)$ with each $A_i \subseteq [n]$, we will define the doubled configuration $\left(2k,2n, \mathcal{A}^{(2)} = (A_1^{(2)},\dots, A_\ell^{(2)})\right)$ as defined above.
\end{definition}

Now, consider a $2k \times 2n$ matrix $U$ consisting of indeterminates, but let us index the rows and columns a little differently. Let us index the rows by $R = \{1,1',2,2',\dots,d,d'\}$, and the columns by $C = \{1,\widetilde{1},2,\widetilde{2},\dots,n,\widetilde{n}\}$. So, the $(i,j)^{th}$ entry of $U$ is $u_{ij}$ for $i \in R, j \in C$. So, for example if I am looking at the $(3', \widetilde{4})$ entry, I will denote it $u_{3',\widetilde{4}}$.

If one looks at the picture above of $L_\mathcal{A}(W)$ and replaces each $w_{ij}$ with $U_{ij} = \begin{pmatrix} u_{ij} & u_{i,\widetilde{j}} \\ u_{i'j} & u_{i'\widetilde{j}} \end{pmatrix}$, it should be evident that one obtains $L_{\mathcal{A}^{(2)}}(U)$ -- we picked our indexing precisely to orchestrate this. In other words, we have

$$
L_{\mathcal{A}^{(2)}}(U) = L_{\mathcal{A}}(W)(U_{11},U_{12},\dots,U_{nn}),
$$
 
where by $L_{\mathcal{A}}(W)(U_{11},U_{12},\dots,U_{nn})$, we mean take the matrix $L_{\mathcal{A}}(W)$ and replace $w_{ij}$ by $U_{ij}$ for all $i,j$ (note that this is consistent with the notation in Definition~\ref{Def:blow-ups}). In particular, since the $U_{ij}$ are $2 \times 2$ matrices whose entries are all independent indeterminates, we conclude that $\rk_2(L_{\mathcal{A}}(W)) = \crk(L_{\mathcal{A}^{(2)}}(U))$. This yields

\begin{corollary} \label{cor:doubledrank} Let $W$ be a $d \times n$ matrix with indeterminates. Let $(k,n,\mathcal{A})$ be a configuration, and let $(2k,2n,\mathcal{A}^{(2)})$ be the doubled configuration. Then
$$
\rank_{GI}(2k,2n,\mathcal{A}^{(2)}) = 2 (\sum_i |A_i|) - \rank_2(L_\mathcal{A}(W)).
$$
Further, this means that $\rank_{GI}(2k,2n,\mathcal{A}^{(2)})$ is a multiple of $2$.
\end{corollary}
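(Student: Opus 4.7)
The plan is to combine the identification of blow-ups with the symbolic matrix reformulation of generic intersection rank, and then invoke the blow-up regularity lemma for the divisibility claim.

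First, I would apply the identity (\ref{eq:gi-rank}) to the doubled configuration, obtaining
\[
  \rank_{GI}(2k,2n,\mathcal{A}^{(2)}) \;=\; \sum_i |A_i^{(2)}| \;-\; \crk\bigl(L_{\mathcal{A}^{(2)}}(U)\bigr).
\]
By construction $A_i^{(2)} = A_i \sqcup \widetilde{A_i}$, so $|A_i^{(2)}| = 2|A_i|$, and the first term equals $2\sum_i|A_i|$. The key step is to rewrite the second term using the explicit matrix identity derived immediately before the corollary statement, namely
\[
  L_{\mathcal{A}^{(2)}}(U) \;=\; L_{\mathcal{A}}(W)\bigl(U_{11},U_{12},\dots,U_{nn}\bigr),
\]
which expresses $L_{\mathcal{A}^{(2)}}(U)$ as the $d=2$ blow-up of $L_\mathcal{A}(W)$ in the sense of Definition~\ref{Def:blow-ups}. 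Since the $2\times 2$ blocks $U_{ij}$ have entirely independent indeterminate entries, by the very definition of $\rk_2$ we conclude
\[
  \crk\bigl(L_{\mathcal{A}^{(2)}}(U)\bigr) \;=\; \rk_2\bigl(L_{\mathcal{A}}(W)\bigr),
\]
which gives the claimed equation
\[
  \rank_{GI}(2k,2n,\mathcal{A}^{(2)}) \;=\; 2\sum_i|A_i| \;-\; \rk_2\bigl(L_{\mathcal{A}}(W)\bigr).
\]

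For the divisibility assertion, I would apply the blow-up regularity lemma (Theorem~\ref{Reg:IQS}) to $L_\mathcal{A}(W)$: it asserts that $\rk_d(L)/d$ is an integer for every $d$, so in particular $\rk_2(L_\mathcal{A}(W))$ is an even integer. Combined with the fact that $2\sum_i|A_i|$ is clearly even, this shows that $\rank_{GI}(2k,2n,\mathcal{A}^{(2)})$ is a multiple of $2$, as required. The entire argument is essentially a bookkeeping exercise once the blow-up identification is in hand; the only nontrivial input is the regularity lemma, and the main ``care'' lies in verifying that the indexing conventions chosen for the doubled configuration genuinely match the structure of a $2 \times 2$ blow-up (which is precisely why the ordering $1 < \widetilde{1} < 2 < \widetilde{2} < \cdots$ was enforced).
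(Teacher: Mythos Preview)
Your proposal is correct and follows essentially the same approach as the paper: the paper's proof simply says the equality ``follows from the above discussion along with (\ref{eq:gi-rank})'' and then invokes the blow-up regularity lemma for divisibility, and you have unpacked exactly that discussion (the identification $L_{\mathcal{A}^{(2)}}(U) = L_{\mathcal{A}}(W)(U_{11},\dots,U_{nn})$ giving $\crk(L_{\mathcal{A}^{(2)}}(U)) = \rk_2(L_{\mathcal{A}}(W))$, together with $|A_i^{(2)}| = 2|A_i|$) in more detail.
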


\begin{proof} The equality $\rank_{GI}(2k,2n,\mathcal{A}^{(2)}) = 2 \sum_i |A_i| - \rank_2(L_\mathcal{A}(W))$ follows from the above discussion along with (\ref{eq:gi-rank}). By the blow-up regularity lemma, i.e., Theorem~\ref{Reg:IQS}, we know that $\rank_2(L_\mathcal{A}(V))$ is a multiple of $2$, and hence so is $\rank_{GI}(2k,2n,\mathcal{A}^{(2)})$.
\end{proof}

Analogously, for $t \in \Z_{\geq 1}$, we define the $t-$pled configuration $\left(tk,tn, \mathcal{A}^{(t)} = (A_1^{(t)},\dots, A_\ell^{(t)})\right)$.

\begin{corollary} \label{cor:t-pledrank} Let $W$ be a $k \times n$ matrix with indeterminates. Let $(k,n,\mathcal{A})$ be a configuration, and let $(tk,tn,\mathcal{A}^{(t)})$ be the $t-$pled configuration. Then
$$
\rank_{GI}(tk,tn,\mathcal{A}^{(t)}) = t (\sum_i |A_i|) - \rank_t(L_\mathcal{A}(W)).
$$
Further, this means that $\rank_{GI}(tk,tn,\mathcal{A}^{(t)})$ is a multiple of $t$.
\end{corollary}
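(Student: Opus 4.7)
\medskip

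\noindent\textbf{Proof proposal for Corollary~\ref{cor:t-pledrank}.} The plan is to mimic the proof of Corollary~\ref{cor:doubledrank} verbatim, replacing the $2$-fold blow-up construction by a $t$-fold blow-up construction, and then invoking the blow-up regularity lemma (Theorem~\ref{Reg:IQS}) with parameter $t$ to obtain the divisibility.

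First, I would set up notation for the $t$-pling analogous to the doubling. Index the rows of a $tk\times tn$ matrix $U$ of indeterminates by $R=\{1,1',1'',\dots,1^{(t-1)},2,2',\dots,k^{(t-1)}\}$ and the columns by $C=\{1,\widetilde{1}^{(1)},\dots,\widetilde{1}^{(t-1)},2,\widetilde{2}^{(1)},\dots,\widetilde{n}^{(t-1)}\}$, ordered lexicographically so that $A_i^{(t)}:=A_i\cup\widetilde{A_i}^{(1)}\cup\cdots\cup\widetilde{A_i}^{(t-1)}$ appears in this order inside $A_i^{(t)}$, just as in the $t=2$ case. For each pair $(i,j)\in[k]\times[n]$, let $U_{ij}$ be the $t\times t$ submatrix of $U$ with rows $\{i,i',\dots,i^{(t-1)}\}$ and columns $\{j,\widetilde{j}^{(1)},\dots,\widetilde{j}^{(t-1)}\}$. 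By construction, all entries of the $U_{ij}$'s are independent indeterminates.

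Next, the key structural observation: the same block-replacement argument used in the proof of Corollary~\ref{cor:doubledrank} shows that substituting $U_{ij}$ in place of $w_{ij}$ inside $L_\mathcal{A}(W)$ produces exactly the matrix $L_{\mathcal{A}^{(t)}}(U)$. This is just bookkeeping: the zero-blocks of $L_\mathcal{A}(W)$ become $t\times t$ zero blocks of $L_{\mathcal{A}^{(t)}}(U)$, and the $W^{A_i}$-blocks become the $t\times t$-blocked $U^{A_i^{(t)}}$-blocks. Therefore, by Definition~\ref{Def:blow-ups},
\[
\rank_t\!\bigl(L_\mathcal{A}(W)\bigr)=\crk\!\bigl(L_{\mathcal{A}^{(t)}}(U)\bigr).
\]
Combining this with equation~(\ref{eq:gi-rank}) applied to the configuration $(tk,tn,\mathcal{A}^{(t)})$, and using that $|A_i^{(t)}|=t|A_i|$, yields
\[
\rank_{GI}(tk,tn,\mathcal{A}^{(t)})=\sum_i |A_i^{(t)}|-\crk\!\bigl(L_{\mathcal{A}^{(t)}}(U)\bigr)=t\!\left(\sum_i |A_i|\right)-\rank_t\!\bigl(L_\mathcal{A}(W)\bigr),
\]
which is the claimed identity.

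Finally, for the divisibility statement: by Theorem~\ref{Reg:IQS} (the blow-up regularity lemma), $\rank_t(L_\mathcal{A}(W))$ is a multiple of $t$. Since $t\bigl(\sum_i |A_i|\bigr)$ is trivially a multiple of $t$, it follows that $\rank_{GI}(tk,tn,\mathcal{A}^{(t)})$ is a multiple of $t$ as well. I do not foresee any genuine obstacle here: all the conceptual work was already done in the $t=2$ case, and the only thing to be careful about is the indexing of rows and columns so that the $t\times t$ blocks $U_{ij}$ really do consist of independent indeterminates and the block pattern of $L_\mathcal{A}(W)$ is preserved under substitution; this is routine once the notation is fixed.
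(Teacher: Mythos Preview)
Your proposal is correct and is exactly the argument the paper has in mind: the paper simply writes ``Analogously, for $t\in\Z_{\geq 1}$, we define the $t$-pled configuration\dots'' and states Corollary~\ref{cor:t-pledrank} without further proof, leaving the reader to repeat the $t=2$ argument verbatim with $t\times t$ blocks and then invoke Theorem~\ref{Reg:IQS}. Your write-up fills in precisely those details.
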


\subsubsection{Scalability of \change{G}eneric \change{I}ntersection \change{D}imension}

\begin{lemma} \label{lem:gi-rank-scales} Let $(k,n,\mathcal{A})$ be a configuration, and let $(tk,tn,\mathcal{A}^{(t)})$ be the $t-$pled configuration. Then
$$
\rank_{GI}(tk,tn,\mathcal{A}^{(t)}) = t \cdot \rank_{GI}(k,n,\mathcal{A}).
$$
\end{lemma}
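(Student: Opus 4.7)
The plan is to deduce the lemma directly from the combinatorial formula for generic intersection dimension (Theorem~\ref{thm:main-dim}). The reason this works is that the $t$-pling operation interacts cleanly with both cardinalities and intersections, so each term in the partition formula scales by exactly $t$, and so does $k$.

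First I would verify two elementary set-theoretic facts about the $t$-pling operation. Identifying $[tn]$ with $t$ disjoint copies of $[n]$, say $[n]^{(1)} \sqcup \cdots \sqcup [n]^{(t)}$, we have $A^{(t)} = A^{(1)} \sqcup \cdots \sqcup A^{(t)}$, and hence:
\begin{enumerate}
  \item[(i)] $|A_i^{(t)}| = t\,|A_i|$ for all $i \in [\ell]$;
  \item[(ii)] for every nonempty $P \subseteq [\ell]$, $\bigcap_{j \in P} A_j^{(t)} = \left(\bigcap_{j \in P} A_j\right)^{(t)}$, because elements sitting in different copies $[n]^{(r)}$ never coincide, so only same-copy intersections contribute.
\end{enumerate}
Fact (ii) in particular yields $|A_P^{(t)}| = t\,|A_P|$ for every $P$. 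Without loss of generality I assume $|A_i| \le k$ for all $i$ (otherwise truncate, which doesn't affect $\rank_{GI}$), so $|A_i^{(t)}| \le tk$ and Theorem~\ref{thm:main-dim} applies to both configurations.

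Next, I apply Theorem~\ref{thm:main-dim} to the $t$-pled configuration $(tk, tn, \mathcal{A}^{(t)})$ and substitute (i)--(ii):
\begin{align*}
\rank_{GI}(tk, tn, \mathcal{A}^{(t)})
  &= \max_{P_1 \sqcup \cdots \sqcup P_s = [\ell]} \left( \sum_{i=1}^{s} \left|A_{P_i}^{(t)}\right| - (s-1)\,tk \right)\\
  &= \max_{P_1 \sqcup \cdots \sqcup P_s = [\ell]} \left( \sum_{i=1}^{s} t\,|A_{P_i}| - (s-1)\,tk \right)\\
  &= t \cdot \max_{P_1 \sqcup \cdots \sqcup P_s = [\ell]} \left( \sum_{i=1}^{s} |A_{P_i}| - (s-1)\,k \right)\\
  &= t \cdot \rank_{GI}(k, n, \mathcal{A}),
\end{align*}
where the last equality is Theorem~\ref{thm:main-dim} applied to $(k,n,\mathcal{A})$. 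Pulling the factor $t$ out of the maximum is justified because $t \ge 1$, so the optimizing partition is the same on both sides.

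There is essentially no obstacle: the only thing to check carefully is the compatibility of the $t$-pling with set intersections (fact (ii)), and that the optimizer of the partition maximum is preserved under uniform scaling of the objective by $t$. I would note that this lemma, combined with Corollary~\ref{cor:t-pledrank} and equation~(\ref{eq:gi-rank}), yields $\rank_t(L_\mathcal{A}(W)) = t\cdot \crk(L_\mathcal{A}(W))$ for every $t$, which via the characterization (\ref{eq:characterization-ncrk}) forces $\ncrk(L_\mathcal{A}(W)) = \crk(L_\mathcal{A}(W))$---this is precisely the bridge to the non-commutative rank algorithm needed for Theorem~\ref{thm:poly-time}.
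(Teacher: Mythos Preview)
Your proposal is correct and follows essentially the same approach as the paper: both invoke Theorem~\ref{thm:main-dim} and observe that every term in the partition formula for $(tk,tn,\mathcal{A}^{(t)})$ is exactly $t$ times the corresponding term for $(k,n,\mathcal{A})$. Your version simply spells out the set-theoretic facts (i)--(ii) that the paper's one-sentence proof leaves implicit.
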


\begin{proof} Suppose $\mathcal{A}$ consists of $\ell$ sets. Then, so does $\mathcal{A}^{(t)}$. With this observation, the lemma follows immediately from Theorem~\ref{thm:main-dim} since the dimension of the $t-$pled configuration $(tk,tn,\mathcal{A}^{(t)})$ is also a maximum over all partitions of $\ell$ (i.e., the number of parts in $\mathcal{A}$) and each corresponding term is $t$ times larger than the one for the configuration $(k,n,\mathcal{A})$.
\end{proof}

\begin{corollary} Given the configuration $(k,n,\mathcal{A})$, we have
$$
\rank_{GI}(k,n,\mathcal{A}) = \sum_i |A_i| - \ncrk(L_\mathcal{A}(W)).
$$
\end{corollary}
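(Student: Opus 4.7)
The plan is to combine the two preceding results, Corollary~\ref{cor:t-pledrank} and Lemma~\ref{lem:gi-rank-scales}, with the blow-up characterization of non-commutative rank from (\ref{eq:characterization-ncrk}). The key observation is that the $t$-pled configuration computes the same intersection dimension up to the scaling factor $t$, while it also encodes the $t$-th blow-up rank of $L_{\mathcal{A}}(W)$; these two facts pin down the blow-up rank exactly.

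Concretely, I would argue as follows. By Corollary~\ref{cor:t-pledrank}, for every $t \ge 1$,
\begin{equation*}
\rank_{GI}(tk,tn,\mathcal{A}^{(t)}) = t\Bigl(\sum_i |A_i|\Bigr) - \rk_t(L_{\mathcal{A}}(W)).
\end{equation*}
On the other hand, Lemma~\ref{lem:gi-rank-scales} gives $\rank_{GI}(tk,tn,\mathcal{A}^{(t)}) = t \cdot \rank_{GI}(k,n,\mathcal{A})$. Equating these two expressions and dividing by $t$ yields
\begin{equation*}
\rank_{GI}(k,n,\mathcal{A}) = \sum_i |A_i| - \frac{\rk_t(L_{\mathcal{A}}(W))}{t}
\end{equation*}
for \emph{every} $t \ge 1$. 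In other words, the normalized blow-up ranks $\rk_t(L_{\mathcal{A}}(W))/t$ are all equal to the single quantity $\sum_i |A_i| - \rank_{GI}(k,n,\mathcal{A})$.

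Now I invoke (\ref{eq:characterization-ncrk}), which states that $\ncrk(L_{\mathcal{A}}(W)) = \max_t \rk_t(L_{\mathcal{A}}(W))/t$. Since every term in this maximum already equals $\sum_i |A_i| - \rank_{GI}(k,n,\mathcal{A})$, we conclude
\begin{equation*}
\ncrk(L_{\mathcal{A}}(W)) = \sum_i |A_i| - \rank_{GI}(k,n,\mathcal{A}),
\end{equation*}
which rearranges to the claimed identity. There is no real obstacle: Theorem~\ref{thm:main-dim} (which underwrites the scalability in Lemma~\ref{lem:gi-rank-scales}) has already done the combinatorial heavy lifting, and the present proof is just the short bookkeeping step that translates the $t$-pling construction into the blow-up language. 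The only mildly delicate point is remembering that (\ref{eq:characterization-ncrk}) is a $\max$ rather than a $\limsup$, so a single value of $t$ suffices to identify $\ncrk$; the blow-up regularity lemma (Theorem~\ref{Reg:IQS}) is what makes this identification clean and integer-valued.
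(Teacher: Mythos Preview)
Your proof is correct and follows essentially the same approach as the paper: combine Corollary~\ref{cor:t-pledrank} with Lemma~\ref{lem:gi-rank-scales} to see that $\rk_t(L_{\mathcal A}(W))/t$ is independent of $t$, and then read off $\ncrk$ from (\ref{eq:characterization-ncrk}). The only cosmetic difference is that the paper phrases this via $\crk$ (the $t=1$ case) and the limit form of (\ref{eq:characterization-ncrk}), whereas you use the $\max$ form directly; the content is identical.
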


\begin{proof} From the above lemma and Corollary~\ref{cor:t-pledrank}, it follows that for any $t \in \Z_{\geq 1}$,
$$
t (\sum_i |A_i|) - \rank_t(L_\mathcal{A}(W)) =\rank_{GI}(tk,tn,\mathcal{A}^{(t)}) = t (\rank_{GI}(k,n,\mathcal{A})) = t (\sum_i |A_i|) - t \cdot \crk (L_\mathcal{A}(W)).
$$
This means that $t \cdot \crk(L_\mathcal{A}(W)) = \rank_t(L_\mathcal{A}(W))$, so
$$\ncrk(L_\mathcal{A}(W)) = \lim_{t \to \infty} \rank_t(L_\mathcal{A}(W))
/t = \crk(L_\mathcal{A}(W)).$$ Thus, we conclude that
\[ \rank_{GI}(k,n,\mathcal{A}) = \sum_i |A_i| - \ncrk(L_\mathcal{A}(W)).\qedhere
\]
\end{proof}

\begin{proof}[Proof of Theorem~\ref{thm:poly-time}] Let $\mathcal{A} = (A_1,\dots,A_\ell)$. Then, in the notation of this section, we have $\dim(W_{A_1} \cap W_{A_2} \cap \dots W_{A_\ell}) = \rank_{GI}(k,n,\mathcal{A})$. By the above corollary, we have $\rank_{GI}(k,n,\mathcal{A}) = \sum_i |A_i| - \ncrk(L_\mathcal{A}(W))$, so it suffices to compute $\ncrk(L_\mathcal{A}(W))$. But there is a polynomial time algorithm for this \cite{IQS18}.\footnote{Alternately, one can use the algorithm in \cite{GGOW}, but this needs to be appropriately modified because the algorithm as stated can only check if the non-commutative rank of a square matrix is full or not.} Note that the size of $L_\mathcal{A}(W)$ is $\change{\poly(\change{n,\ell})}$\change{.}
\end{proof}

\bibliographystyle{alpha}
\bibliography{references}

\end{document}